\renewcommand{\cite}{\citep}
\newcolumntype{M}{>{\centering\arraybackslash}m{1cm}}
\newcommand\tikzmark[2]{%
\tikz[remember picture,baseline] \node[inner sep=0.1pt,outer sep=0] (#1){#2};%
}
\newcommand\link[2]{%
\begin{tikzpicture}[remember picture, overlay, >=stealth, shift={(0,0)}]
  \draw[-implies,double equal sign distance] (#1) to (#2);
\end{tikzpicture}%
}
\newcommand{\edit}[1]{{#1}}
\newtheorem{theorem}{Theorem}
\newtheorem{definition}{Definition}
\newtheorem{corollary}{Corollary}
\newtheorem{lemma}{Lemma}
\newtheorem{proposition}{Proposition}
\newtheorem{claim}{Claim}
\DeclareMathOperator*{\E}{\mathbb{E}}
\newcommand{\set}[1]{\left\{#1\right\}}
\newcommand{\norm}[1]{\left\lVert#1\right\rVert}
\newcommand{\ENVY}{\mathsf{Envy}}
\newcommand{\env}{\ENVY}
\newcommand{\envy}{\ENVY}
\newcommand{\OPT}{\mathsf{OPT}}
\newcommand{\defeq}{\vcentcolon=}
\newcommand{\dist}{\mathcal{D}}
\newcommand{\agents}{\mathcal{N}}
\newcommand{\Sph}{\mathcal{S}}
\newcommand{\Ball}{\mathcal{B}}
\newcommand{\calC}{\mathcal{C}}
\newcommand{\calT}{\mathcal{T}}
\newcommand{\calD}{\mathcal{D}}
\newcommand{\calW}{\mathcal{W}}
\newcommand{\calS}{\mathcal{S}}
\newcommand{\calA}{\mathcal{A}}
\newcommand{\conv}{\mathrm{Conv}}
\DeclareMathOperator*{\argmin}{arg\,min}
\DeclareMathOperator*{\argmax}{arg\,max}
\DeclarePairedDelimiter{\ceil}{\lceil}{\rceil}
\DeclarePairedDelimiter{\floor}{\lfloor}{\rfloor}
\newcommand\numberthis{\addtocounter{equation}{1}\tag{\theequation}}
\title{Online Envy Minimization and Multicolor Discrepancy: Equivalences and Separations}
\author{
Daniel Halpern\thanks{Harvard University. Email: dhalpern@g.harvard.edu} \and 
Alexandros Psomas\thanks{Purdue University. Email: apsomas@purdue.edu} \and 
Paritosh Verma\thanks{Purdue University. Email: verma136@purdue.edu} \and 
Daniel Xie\thanks{GEICO. Email: danielyxie2002@gmail.com. Work conducted while the author was at Purdue University.}
}
\date{}
\begin{document}

\maketitle
\thispagestyle{empty}

\begin{abstract}
We consider the fundamental problem of allocating $T$ indivisible items that arrive over time to $n$ agents with additive preferences, with the goal of minimizing \emph{envy}. This problem is tightly connected to the problem of \emph{online multicolor discrepancy}: vectors $v_1, \dots, v_T \in \mathbb{R}^d$ with $\norm{v_i}_2 \leq 1$ arrive one at a time and must be, immediately and irrevocably, assigned to one of $n$ colors to minimize $\max_{i,j \in [n]} \norm{ \sum_{v \in S_i} v - \sum_{v \in S_j} v }_{\infty}$ at each step, where $S_\ell$ is the set of vectors that are assigned color $\ell$. The special case of $n = 2$ is called \emph{online vector balancing}, introduced by Spencer nearly half a century ago ~\cite{spencer1977balancing}. It is known that multicolor discrepancy is at least as hard as envy minimization: any bound for the former implies the same bound for the latter. Against an adaptive adversary, both problems have the same optimal bound: $\Theta(\sqrt{T})$; it is not known, however, whether the optimal bounds match against weaker adversaries.

Against an oblivious adversary, \citet{alweiss2021discrepancy} give an elegant upper bound of $O(\log T)$, with high probability, for the online multicolor discrepancy problem. In a recent breakthrough, \citet{kulkarni2024optimal} improve this to $O(\sqrt{\log T})$ for the case of online vector balancing and give a matching lower bound. However, it has remained an open problem whether a $O(\sqrt{\log T})$ bound is possible for multicolor discrepancy. Furthermore, these results give, as corollaries, the state-of-the-art upper bounds for online envy minimization (against an oblivious adversary) for $n$ and two agents, respectively; it is an open problem whether better bounds are possible.

In this paper, we resolve all aforementioned open problems. We establish that online envy minimization is, in fact, equivalent to online multicolor discrepancy against the oblivious adversary: we give an upper bound of $O(\sqrt{\log T})$, with high probability, for multicolor discrepancy, and a lower bound of $\Omega(\sqrt{\log T})$ for envy minimization, resolving both problems. We proceed to study weaker adversaries, where we prove that the two problems are no longer equivalent. Against an i.i.d. adversary, we establish a separation: For online vector balancing, we give a lower bound of $\Omega\left(\sqrt{\frac{\log T}{\log \log T}}\right)$, while for envy minimization, we give an algorithm that guarantees a constant upper bound.
\end{abstract}

\section{Introduction}

We consider the fundamental problem of fairly allocating indivisible items that arrive sequentially over time to agents with additive preferences. At each time step $t$, an item $g_t$ arrives and must be allocated immediately and irrevocably to one of $n$ agents. Each agent $i$ has value $v_{i,t} \in [0,1]$ for $g_t$, which is revealed only upon the item's arrival. Our objective is to ensure that the final allocation is fair, which we measure through the notion of \emph{envy-freeness}. Concretely, our goal is to minimize $\max_{i,j \in [n]} \left( \sum_{t \in A_j^T} v_{i,t} - \sum_{t \in A_i^T} v_{i,t} \right)$, where $A_{\ell}^T$ is the allocation of agent $\ell$ after all $T$ items have been allocated. 
The canonical motivation for this problem is that of food banks~\cite{aleksandrov2015online,lee2019webuildai,mertzanidis2024}, which receive food donations and allocate them to non-profits (e.g., food pantries). Ensuring that no organization is significantly disadvantaged relative to others is a natural challenge in such settings.

A crucial question in this problem is what assumptions can we reasonably make about the items' values? Different assumptions lead to vastly different fairness guarantees.

At one extreme, many works assume that the vector of values for $g_t$, $(v_{1, t}, \ldots, v_{n, t})$, is drawn independently from a fixed distribution $\mathcal{D}$. This assumption allows for strong positive results: If $T$ is sufficiently large, with high probability, we can find allocations that are completely envy-free, achieving a maximum of at most $0$. However, this setting has notable downsides. First, in many practical scenarios, $T$ may need to be quite large before these guarantees become meaningful. Furthermore, the distributional assumption rules out natural hard instances---such as a single high-value item that agents agree is better than all others combined---where envy-freeness is fundamentally unattainable.

At the other extreme, we can consider fully adversarial settings, where item values are chosen \emph{adaptively}: an adversary, observing the allocation algorithm and all allocation decisions made so far, selects each item's values in order to maximize envy at the end. Prior work has shown that in such a setting, any online algorithm must incur envy at least $\Omega(\sqrt{T})$, even when there are only $n=2$ agents~\cite{benade2024fair}. While this model captures the absolute worst-case scenario, it may be overly pessimistic for many real-world applications. 

In this paper, we explore intermediate adversaries. One particularly natural, yet robust, assumption is that of an \emph{oblivious adversary}, which selects worst-case item values in advance, knowing only the algorithm, and not depending on the specific allocation choices made in previous rounds. To better understand this problem, it is useful to consider the closely related \emph{online vector balancing} problem, introduced by Spencer nearly half a century ago~\cite{spencer1977balancing}, and its generalization, \emph{online multicolor discrepancy}.

In online vector balancing, at each time step $t$, a vector $v_t$ with $\norm{v_t}_2 \leq 1$ arrives and must be assigned immediately and irrevocably to one of two bundles. The key object of study is the \emph{discrepancy vector at time $t$}: $d_t := \sum_{v \in S_1^t} v - \sum_{v \in S_2^t} v$ where $S_1^t$ and $S_2^t$ be the set of vectors assigned to each bundle after $t$ steps. The goal is to keep the $\ell_{\infty}$-norm of all $d_t$s as small as possible.\footnote{The classic problem is more frequently formalized as choosing a sign $\chi_t \in \{-1, 1\}$, and setting $d_t := \sum_{i = 1}^t \chi_i v_i$. Our formulation is equivalent, using notation more consistent with envy-minimization.} That is, find the smallest $B$ for which $\max_{t \in [T]} \norm{ d_t }_{\infty} \leq B$ (or, if randomness is involved, given a $\delta$, the smallest $B$ for which this holds with probability $1 - \delta$).

In online multicolor discrepancy, instead of \emph{two} bundles, there are $n$ bundles~\cite{bansal2021online}. At each time step $t$, a vector $v_t \in \mathbb{R}^d$ with $\norm{v_t}_2 \le 1$ arrives and must be assigned one of $n$ bundles (often framed as assigning the vector one of $n$ colors). The goal is to minimize the maximum discrepancy between any pair of bundles, defined as $\max_{t \in [T], i,j \in [n]} \norm{ \sum_{z \in S^t_{i}} z - \sum_{z \in S^t_{j}} z }_{\infty}$ where $S_i^t$ denotes the set of vectors assigned to bundle $i$ after $t$ steps. Online vector balancing corresponds to the special case of $n = 2$.

These problems are particularly useful because algorithms for online multicolor discrepancy can be directly applied to envy-minimization: the item values $(v_{1, t}, \ldots, v_{n, t})$ can be treated as input vectors to a discrepancy algorithm, and the resulting envy is upper bounded by the discrepancy.\footnote{Since envy-minimization values are only bounded in $[0, 1]$, the $\ell_2$-norm may be as large as $\sqrt{n}$, increasing the bounds by a factor of $\sqrt{n}$. However, since we primarily focus on bounds as a function of $T$, this detail is less critical.}  Additionally, algorithms for online vector balancing can be used for envy-minimization when $n = 2$.
However, discrepancy problems are generally more challenging than envy-minimization because of the following reasons: 
(i) input vectors in discrepancy problems may have negative entries, whereas item values in envy minimization are nonnegative, (ii) discrepancy requires bounding the $\ell_{\infty}$ norm, whereas in envy minimization, negative envy of large magnitude is desirable,\footnote{In fact, a consequence of this is that positive results for online multicolor discrepancy give online algorithms for computing a near-perfect allocation (a perfect allocation $A$ satisfies $v_i(A_j) = v_{i}([m])/n$ for all agents $i,j$), a problem harder than computing a small-envy allocation.} and (iii) in envy-minimization the vectors always have dimension $d=n$, while in discrepancy we need to handle arbitrary combinations of $n$ and $d$.

We now summarize the best-known bounds for these problems.
For online vector balancing, a simple greedy algorithm achieves a bound of $O(\sqrt{T})$ against an adaptive adversary, which was shown to be tight by \citet{spencer1977balancing,spencer1994ten}. Against a weaker, oblivious adversary, \citet{alweiss2021discrepancy} proposed an elegant algorithm that guarantees an $O(\log T)$ bound with high probability. In a recent breakthrough, \citet{kulkarni2024optimal} give an algorithm that guarantees an $O(\sqrt{\log T})$ bound with high probability, as well as a matching lower bound, thus resolving the optimal dependence on $T$ for the vector balancing problem against an oblivious adversary.
An immediate implication of this result is a $O(\sqrt{\log T})$ bound, with high probability, for online envy minimization with $n=2$ agents against an oblivious adversary; however, no corresponding lower bound for envy minimization was known.\footnote{It is worth mentioning that \cite{benade2024fair} prove that sublinear envy is incompatible with non-trivial efficiency guarantees against an oblivious adversary. This result, however, has no implications for envy minimization.} For online multicolor discrepancy against an adaptive adversary, a $O(\sqrt{T})$ bound is possible, while against an oblivious adversary, the best known result was an $O(\log T)$ bound with high probability~\cite{alweiss2021discrepancy}. This directly implies an $O(\log T)$ with high probability bound for online envy minimization problem with $n$ agents.

Overall, the state-of-the-art can be summarized as follows. Online envy minimization is as hard as online multicolor discrepancy against an adaptive adversary, i.e., a tight $\Theta(\sqrt{T})$ bound is known for both problems. And, for all we know, this could be the case for an oblivious adversary as well: there are no known lower bounds for online envy minimization, and the best known upper bounds for $n=2$ and $n>2$ agents are implications of online vector balancing and online multicolor discrepancy, respectively. There is a gap between the best-known bound for online vector balancing ($O(\sqrt{\log T})$, which is known to be optimal) and online multicolor discrepancy ($O(\log T)$, not known to be optimal).
Finally, for all we know, online envy minimization is as hard as online multicolor discrepancy for stochastic adversaries, weaker than an oblivious adversary, where we only know lower bounds for online multicolor discrepancy.\footnote{As we explain in detail later in the paper, known positive results for online envy minimization against stochastic adversaries, e.g.~\cite{benade2024fair}, rely on certain technical assumptions, rendering them incompatible with the online vector balancing literature.}
Simply put, the goal of this paper is to resolve these gaps in our understanding of online envy minimization and online multicolor discrepancy.

\subsection{Our results}

In~\Cref{thm: multi color main upper bound} we prove the existence of an algorithm for the online multicolor discrepancy problem that achieves a bound of $O(\sqrt{\log T})$, with high probability, against an oblivious adversary. This result directly implies the same bound for online envy minimization with $n$ agents (\Cref{cor: main result for oblivious and fair division}). We also give a matching lower bound in \Cref{thm: envy lower bound for oblivious}: for any $r \in (0,1)$ and $c > 0$, an oblivious adversary can guarantee envy at least $\Omega((\log(T))^{r/2})$ with probability at least $1/T^c$. Thus, we obtain optimal bounds for both problems, online multicolor discrepancy and envy minimization, against an oblivious adversary.\footnote{Our lower bound rules out the possibility of achieving $o(\sqrt{\log{T}})$ envy with probability at least $1-1/\mathrm{poly}(T)$. However, achieving a $o(\sqrt{\log{T}})$ envy with a constant probability (independent of $T$) may be possible.} Overall, these results show that similar to the case of an adaptive adversary, online envy minimization is no easier than online multicolor discrepancy against an oblivious adversary.

Next, we analyze a stochastic, i.i.d.\@ adversary that selects a distribution $\calD$ such that, in online multicolor discrepancy (respectively, envy minimization), $v_{t,i} \sim \calD$, for all rounds $t$ and $i \in [d]$ (respectively, $v_{i,t} \sim \calD$, for all agents $i$ and items $g_t$). In the discrepancy minimization literature, \citet{bansal2020line} consider a similar model, where each coordinate of the arriving vectors is sampled uniformly at random from the set $\{ -1, 1\}$, achieving an $O(\sqrt{n} \log T)$ bound with high probability. Other works in the discrepancy minimization world use the term ``i.i.d.'' to refer to settings where dependence over the coordinates is allowed; e.g., in~\cite{bansal2020online} vectors $v_1, \dots v_T$ are sampled i.i.d.\@ from a distribution over $[-1,1]^d$; here, a lower bound of $\Omega\left( \sqrt{\frac{\log T}{\log \log T}} \right)$ is known~\cite{bansal2020online}. In~\Cref{thm: lower bound for iid vector balancing} we prove that there exists a distribution for which every algorithm must have discrepancy at least $\Omega\left( \sqrt{\frac{\log T}{\log \log T}} \right)$ with high probability, even in our ``easier'' i.i.d.\@ model (where coordinates of each vector are also sampled i.i.d.). This result implies that the upper bound in~\Cref{thm: multi color main upper bound} against an oblivious adversary cannot be improved, up to $\log\log$ factors, even against a much weaker i.i.d.\@ adversary.

In the fair division literature, many works provide guarantees in stochastic models. However, to the best of our knowledge, all previous results are asymptotic with respect to the number of items. Specifically, as observed by \citet{bansal2020online}, an innocuous-looking (and prevalent) assumption in stochastic fair division is that the adversary's distribution does not depend on the number of items $T$ (e.g., ruling out a variance of $1/T$). The setup is typically as follows: given a number of agents $n$, the adversary specifies a distribution $\dist$. The designer, who knows this distribution, then selects a (possibly randomized) algorithm. Nature then selects a number of items $T$. In every round $t$, the value of item $t$ for each agent $i$ is drawn independently from $\dist$, i.e., $v_{i,t} \sim \dist$. Under such a ``constant distribution'' assumption, envy-freeness with ``high probability'' means ``probability at least $1 - O(1/\text{poly}(T))$,'' where $\dist$ is treated as a constant. In this easier setup, envy-freeness with high probability is known to be compatible with Pareto Efficiency ex-post, even online~\cite{benade2024fair}. Removing the ``constant distribution'' assumption introduces numerous technical obstacles; see~\Cref{subsec: why random is hard} for a discussion. In this paper, we give an online algorithm that guarantees envy of at most $c+1$ with probability at least $1 - O(T^{-c/2})$ regardless of the distribution.

See~\Cref{tab:summary_results} for a summary of our results.

\begin{table}[t]
    \centering
    \caption{Our results for Online Multicolor Discrepancy (OMD) and Envy Minimization (OEM)}\label{tab:summary_results}
    \resizebox{\linewidth}{!}{%
    \begin{tabular}{p{1.1cm}cc|cc}
    \toprule
    \multirow{2}{*}{} & \multicolumn{2}{c|}{\textbf{Oblivious}} & \multicolumn{2}{c}{\textbf{i.i.d.}} \\
    \cline{2-5}
    & \textbf{Upper Bound} & \textbf{Lower Bound} & \textbf{Upper Bound} & \textbf{Lower Bound} \\
    \hline
    \tikzmark{z1}{OMD}
    & \tikzmark{a1}{$O(\sqrt{\log T})$ [Thm~\ref{thm: multi color main upper bound}]}
    & \tikzmark{b1}{$\Omega((\log T)^{r/2})$} 
    & \tikzmark{a2}{$O(\sqrt{\log T})$ [Thm~\ref{thm: multi color main upper bound}]} 
    & \tikzmark{b2}{$\Omega\left(\sqrt{\frac{\log T}{\log \log T}}\right)$ [Thm~\ref{thm: lower bound for iid vector balancing}]} \\
    \hline
    \tikzmark{z2}{OEM} 
    & \tikzmark{a3}{$O(\sqrt{\log T})$ [Cor.~\ref{cor: main result for oblivious and fair division}]} 
    & \tikzmark{b3}{$\Omega((\log T)^{r/2})$ [Thm.~\ref{thm: envy lower bound for oblivious}]} 
    & \tikzmark{a4}{$O(1)$ [Thm.~\ref{thm:n agent upper bound iid fair division}]} 
    &  - \\
    \hline
    \end{tabular}%
    } 
    \link{a1}{a3}
    \link{b3}{b1}
\end{table}

\subsection{Technical overview: multicolor discrepancy against an oblivious adversary}\label{subsec: technical overview of multicolor}


To obtain an optimal algorithm for online vector balancing, \citet{kulkarni2024optimal} equivalently view this problem in terms of a (massive) rooted tree $\calT = (V, E)$ having a depth $T$. Each edge $e$ of $\calT$ that lies at depth $t$ corresponds to a possible choice of vector $v_e$ that the adversary can pick (after some discretization of the adversary's available options) at time step $t$. An oblivious adversary picks a path from the root to a leaf, which is revealed one edge at a time, and as an edge $e$ appears, the algorithm must assign a (possibly random) sign $x_e \in \{-1,1 \}$ to it. The first step in \citeauthor{kulkarni2024optimal}'s proof is to show that, for any convex body $K$ that is sufficiently large, and any rooted tree $\calT = (V,E)$ (whose edges have associated vectors), there exist signs $x \in \{-1,1 \}^E$ such that, for some constant $\alpha < 5$, $\sum_{e \in P} x_e v_e \in \alpha K$, for any root-to-node path $P$. This result generalizes a well-known result of \citet{banaszczyk2012series} from paths to trees. Then, using a carefully designed large convex body $K$, \citeauthor{kulkarni2024optimal} show how to go from a single choice of signs $x$ to a distribution $\calD$ over signs, so that for $y \sim \calD$, $\sum_{e \in P} y_e v_e$ is $\gamma$-subgaussian for all root to leaf paths $P$, for some constant $\gamma$. Finally, they show that by taking a fine enough $\varepsilon$-net of the unit ball and constructing an appropriate rooted tree $\calT$, whose edges correspond to vectors of the $\varepsilon$-net, the distribution $\calD$ over signs implies an optimal algorithm for online vector balancing.

Our approach for online multicolor discrepancy follows similar high-level steps. Intuitively, in \citet{kulkarni2024optimal}'s blueprint for online vector balancing, we could think of the adversary as picking a set $S_e = \{ v_e, - v_e \}$, and the algorithm picking one of the two vectors from $S_e$ for each edge $e$. The most challenging step of our proof is to extend the result of~\cite{kulkarni2024optimal} from trees where edges have associated vectors (or sets of size two) to trees where edges have associated sets of arbitrary size. Concretely, in \Cref{theorem:tree-reduction} we prove that given any rooted tree $\calT = (V,E)$ where every edge $e \in E$ has an associated set of vectors $S_e \subseteq \Ball^d_2$ (satisfying a couple of technical assumptions, for example, $\mathbf{0} \in \conv(S_e)$), and a sufficiently large symmetric convex body $K$, there exists a vector $v_e \in S_e$, such that for all $u \in V$, $\sum_{e \in P_u} v_e \in 11 K$, where $P_u$ is the set of edges of the path from the root to the node $u$. To prove \Cref{theorem:tree-reduction} we start with a fractional selection $x^e = (x^e_1, \dots, x^e_{|S_e|})$ of vectors for each edge $e$, such that the desired property is satisfied. We iteratively round this fractional selection, so that each step does not increase $\sum_{e \in P_u} x^e v_e$ by too much, so that the final, integral selection has the desired property. Our rounding is \emph{bit-by-bit}, inspired by similar procedures by~\cite{bansal2022prefix,lovasz1986discrepancy}. Our process has two steps. First, we prove that each $x^e_i$ can be rounded so that its fractional part is at most $k$ bits long (for some small $k$). Then, we iteratively reduce the number of bits in the fractional part by rounding each remaining bit, one at a time, until all values become integers. The rounding decisions---which bits get rounded down to $0$ and which bits are rounded up to $1$---are guided by a black-box call to~\cite{kulkarni2024optimal}'s extension of Banaszczyk's result to a certain blown-up tree (where the $\{-1,1\}$ signs tell us whether to round up or down).

With~\Cref{theorem:tree-reduction} in hand, we prove there exists a distribution $\calD$ over vectors (one from each edge set) such that for $x \sim \calD$, $\sum_{e \in P_u} x_e$ is subgaussian, for every node $u \in V$ (\Cref{theorem:tree-subgaussianity}). Finally, we prove that there exists an algorithm that, given sets of vectors one at a time, selects a vector from each set such that the vector sum is $O(1)$-subgaussian (\Cref{theorem:subgauss-algo}). This algorithm can be used to get an algorithm for \emph{weighted} online vector balancing (\Cref{lemma:weighted-vector-balancing}), which, in turn, can be used to give an algorithm for online multicolor discrepancy (\Cref{thm: multi color main upper bound}).

\subsection{Technical overview: online item allocation against a stochastic adversary}\label{subsec: why random is hard}

For the case of an i.i.d.\@ adversary, where each $v_{i,t} \sim \mathcal{D}$, we establish even tighter bounds: the maximum envy is a \emph{constant} with all but polynomially small probability in $T$. Strikingly, the algorithms achieving this guarantee are distribution-independent.

Prior work~\cite{dickerson2014computational} shows that, for a fixed distribution $\calD$, assuming $T$ is sufficiently large, the simple \emph{welfare maximization} algorithm---``allocate each item $t$ to $\argmax_{i \in [n]} v_{i,t}$''---achieves envy-freeness with high probability. Here, we seek bounds that depend only on $T$, and are independent of $\calD$. Unfortunately, for \emph{any} choice of $T$, there exist distributions $\mathcal{D}$ where welfare maximization results in $\Theta(\sqrt{T})$ envy.

Our algorithm works in two phases. The first phase is welfare maximization; this phase might generate significant envy. The second phase, consisting of the final $\Theta(\log T \sqrt{T})$ items, mitigates this envy. At every step $t$, we single out a set of agents who have not received a large
number of items (within this phase). Among this set, we allocate item $t$ to the agent who is envied the least by agents in this set. The key challenges are (i) ensuring that phase two is sufficiently long so that the envy generated during phase one is eliminated, and (ii) preventing endless cycles of envy, where eliminating the envy of agent $i$ inadvertently causes another agent $j$ to envy $i$. As we show, running welfare maximization for longer, i.e., having a longer phase one, makes envy cycles lighter (but increases the maximum envy). Therefore, properties (i) and (ii) are, in fact, in tension: the length of the two phases must be chosen to simultaneously satisfy these competing requirements.

Let $H^t$ be a graph with agents as nodes and an edge $(i,j)$ if, at step $t$, agent $i$ envies agent $j$ by at least a constant $c$. Our goal is to ensure that $H^t$ is empty after all items have been allocated. Intuitively, allocating an item to an agent $i$ decreases the prevalence of outgoing edges and increases the prevalence of incoming edges of node $i$. Moreover, after running welfare maximization phase, $H^t$ is guaranteed to be acyclic. To prevent new edges from forming, during phase 2, we allocate arriving items to sources in $H^t$.

Our first major technical hurdle (\Cref{lem:no-cycle}) is ensuring that no cycles form in $H^t$, for any $1 \leq t \leq T$, which would make allocating to source nodes impossible (and seemingly make eliminating envy extremely challenging). The second major hurdle (\Cref{lem:high-value-n}) is showing that giving agent $i$ a moderate number of items more than agent $j$ during phase 2---specifically $\lceil \log T \sqrt{T} \rceil$---suffices to remove the edge $(i, j)$. Crucially, both of these are statements about \emph{all} time steps $t$. However, phase 2 induces correlations between $H^t$ and $H^{t+1}$, making it difficult to maintain an analytical handle on this graph.

To address this, we introduce an alternate sampling method that, from the algorithm's perspective, is identical to the standard sampling process. We pre-sample a large pool of items (in quantile space) and choose which item arrives next based on which agent will receive it. This is only possible because phase 2 decisions are agnostic to the arriving item's values. Importantly, this approach makes $H^t$ depend only on the \emph{number} of times each agent received an item rather than the specific time steps at which they were received, giving traction toward the lemmas. 

A final challenge is ensuring that our results hold for \emph{any} distribution without additional assumptions. Crucially, we cannot apply standard concentration inequalities, as those typically require a large number of items or specific distributional properties (e.g., a lower bound on variance). To address this, we derive a new distribution-agnostic concentration inequality (tailored to our specific task of bounding the envy), which might be of independent interest.


\subsection{Related work}

\paragraph{Online vector balancing.}
    When the incoming vectors satisfy $\norm{v_i}_2 \leq 1$, then randomly coloring the vectors $\{-1,+1\}$ achieves a discrepancy of $\sqrt{T \log{d}}$, even for the case of an adaptive adversary. A  matching lower bound of $\Omega(\sqrt{T})$ was proved by \citet{spencer1977balancing,spencer1994ten}. The stochastic setting for online vector balancing was first studied in~\cite{bansal2020line}. They considered a setting where the incoming vectors are chosen i.i.d.\@ from the uniform distribution over all vectors in $\{-1,1\}^n$ and gave an algorithm for achieving a $O(\sqrt{d}\log{T})$ bound on discrepancy at all time steps till $T$. This was later improved by \citet{bansal2020online} who showed a $O(d^2 \log{nT})$ upper bound for any distribution supported on $[-1,1]^n$. The work of \citet{bansal2021online} improved the dependence on $d$ by showing that $O(\sqrt{d}\log^4{dT})$ discrepancy can be achieved. The dependence on $T$ was further improved to $O_d(\sqrt{\log(T)})$ by \citet{aru2016balancing}, where the implicit dependence on $d$ was super-exponential. 
    
    The setting of an oblivious adversary remained relatively under-explored until the recent work of \citet{alweiss2021discrepancy}. They design an extremely simple and elegant algorithm that achieves a bound of $O(\log{nT})$ for both online vector balancing and online multicolor discrepancy. Their algorithm is based on a self-balancing random walk, which, for vector balancing, ensures that the discrepancy prefix vector is $O(\sqrt{\log{nT}})$-subgaussian. A tight bound of $\Theta(\sqrt{\log{T}})$ for online vector balancing was achieved by \citet{kulkarni2024optimal}, who proved the existence of an algorithm that maintains $O(1)$-subgaussian prefix vectors.

\paragraph{Stochastic fair division.}

Stochastic fair division, introduced by \citet{dickerson2014computational}, studies the existence of fair allocations when valuations are drawn from a distribution. \citeauthor{dickerson2014computational} show that maximizing utilitarian welfare produces an envy-free allocation with high probability when the number of items $T \in \Omega(n \log n)$ and items values are drawn i.i.d.\@ from a fixed ``constant distribution'' (i.e., the distribution does not depend on the number of items). \citet{manurangsi2020envy,manurangsi2021closing} establish tight bounds for the existence of envy-free allocations in the ``constant distribution'' i.i.d.\@ model: $T \in \Omega (n \log n / \log \log n )$ is both a necessary and sufficient condition. \citet{bai2021envy} extend the result to the case of independent but non-identical additive agents.

Beyond envy-freeness, weaker fairness notions such as maximin share fair~\cite{kurokawa2016can,amanatidis2017approximation,farhadi2019fair} and proportional~\cite{suksompong2016asymptotic} allocations exist with high probability. Finally, the existence of fair allocations for agents with non-additive stochastic valuations is studied in~\cite{manurangsi2021closing,gan2019envy,benade2024existence}.


\paragraph{Online fair division.}
A rich literature studies online or dynamic fair division. Numerous works study the problem where divisible or indivisible items arrive over time, with the goal of optimizing various objectives, including utilitarian welfare~\cite{gkatzelis2021fair,bogomolnaia2022fair}, egalitarian welfare~\cite{springer2022online,kawase2022online}, Nash welfare~\cite{gao2021online,banerjee2022online,liao2022nonstationary,huang2023online,yang2024online}, the generalized mean of the agents' utilities~\cite{barman2022universal}, and approximation of the maximin share guarantee~\cite{zhou2023multi}.

Closer to our work, the aforementioned works~\cite{dickerson2014computational,bai2021envy} prove that maximizing welfare and weighted welfare---algorithms that can be implemented online---achieve envy-freeness with high probability, in addition to Pareto efficiency, when valuations are drawn i.i.d.\@ from ``constant'' distributions for identical and non-identical agents, respectively. \citet{benade2024fair} prove that even when correlation is allowed between agents (but items are independent), Pareto efficiency and fairness are compatible in the online setting---the specific fairness guarantee for a pair of agents $i,j$ is ``either $i$ envies $j$ by at most 1 item, or $i$ does not envy $j$ with high probability''; again, distributions are treated as constants. To the best of our knowledge, we are the first to explore the stochastic setting (online or offline) without the ``constant distribution'' assumption.

There are several variations of the standard models, e.g., revocable allocations~\cite{he2019achieving,yang2023fairly}, only having access to pairwise comparisons~\cite{benade2025dynamic}, two-sided matching~\cite{mertzanidis2024}, items arriving in batches~\cite{benade2018make}, and repeated allocations~\cite{igarashi2024repeated}. Further afield, many works study the ``inverse'' problem of allocating static resources to agents that arrive/depart over time~\cite{kash2014no,li2018dynamic,sinclair2022sequential,vardi2022dynamic,banerjee2023online,kulkarni2025online}.

\section{Preliminaries}\label{sec:prelims}

Throughout the paper we will use $\Sph^{d-1} \defeq \{ x \in \mathbb{R}^d : \norm{x}_2 = 1 \}$ to denote the Euclidean sphere in $\mathbb{R}^d$ and $\Ball^d_2 \vcentcolon= \{ x \in \mathbb{R}^d : \norm{x}_2 \leq 1 \}$ for the Euclidean ball in $\mathbb{R}^d$. 

\paragraph{Online multicolor discrepancy.} 

In the online multicolor discrepancy problem, there is a set of $T$ vectors, $v_1, v_2, \dots, v_T \in \Ball^d_2$. At each time step $t \in [T]$, vector $v_t$ arrives and must be immediately and irrevocably assigned to one of $n$ colors. Our algorithms learn $v_t$ at step $t$. Let $S^t_i$ be the set of vectors that are assigned color $i$ up until time $t$. The \emph{discrepancy} at time $t$ is defined as $\max_{i,j \in [n]} \norm{ \sum_{v \in S^t_i} v - \sum_{v \in S^t_j} }_{\infty}$.
Our goal in the online multicolor discrepancy problem is to minimize the maximum discrepancy over all time steps, that is, minimize $$\max_{t \in [T]} \max_{i,j \in [n]} \norm{ \sum_{v \in S^t_i} v - \sum_{v \in S^t_j} v }_{\infty}.$$ The online vector balancing problem is the special case of the multicolor discrepancy problem where $n=2$.

\paragraph{Online envy minimization.} 

In the online envy minimization problem, there is a set of $T$ indivisible items (also referred to as goods) to be allocated to a set of $n$ agents $\agents$. At each time step $t \in [T]$, item $g_t$ arrives and must be immediately and irrevocably allocated to one of the agents in $\agents$. Item $g_t$ has an associated vector $v_t \in [0,1]^n$ such that $v_t = (v_{1,t}, v_{2,t}, \ldots, v_{n,t})$ where $v_{i,t}$ denotes agent $i$'s value for item $g_t$. Our algorithms learn $v_t$ at step $t$. Let $A^t = (A^t_1, A^t_2, \ldots, A^t_n)$ be the allocation at the end of time $t$ (i.e., after $g_t$ has been allocated) such that for each $i \in \agents$, we have $A^t_i \subseteq \{g_1, g_2, \ldots, g_t\}$, $\cup_{i\in \agents} A^t_i = 
\{g_1, g_2, \ldots, g_t\}$, and $A^t_i \cap A^t_j = \emptyset$ for each $i \neq j$. The value of agent $i \in \agents$ for any set of items (also referred as a \emph{bundle}) $S \subseteq \{g_1, g_2, \ldots, g_T\}$ is denoted by $v_i(S) \coloneqq \sum_{g_k \in S} v_{i,k}$, i.e., the preferences of agents are \emph{additive}.

At each timestep $t$, and for any pair of agents $i,j \in \agents$, we define $\envy^t_{i,j}(A^t) \coloneqq v_i(A^t_j) - v_i(A^t_i)$ to be the difference in value, with respect to agent $i$'s preferences, between the bundle of agent $j$ at time $t$ and the bundle of agent $i$ at time $t$. The maximum envy at time $t \leq T$ is denoted by $\envy^t(A^t) = \max_{i,j \in [n]} \ \envy^t_{i,j}(A^t)$. Our goal in the online envy minimization problem is to minimize the maximum envy at time $T$, that is, minimize $\envy^T = \envy^T(A^T)$. 

\paragraph{Adversary models.} 
For both problems, our results crucially depend on how the input (vectors/agents' values) are generated. It will be convenient to think of a game between an adversary, who picks the input, and the designer, who picks an online algorithm.

First, we consider an \textbf{oblivious adversary}.
In the online multicolor discrepancy problem, given a number of colors $n$, a number of vectors $T$, and a dimension $d$, the designer picks a (possibly randomized) algorithm. Then, an oblivious adversary, who knows the algorithm's ``code,'' but does not have access to any of the randomness the algorithm uses, selects all $T$ vectors, which are presented to the algorithm one at a time, in the order the oblivious adversary selected at the start of the game.
In the online envy minimization problem, given a number of agents $n$ and a number of items $T$, the designer picks a (possibly randomized) algorithm. Then, an oblivious adversary, who knows the algorithm's ``code,'' but does not have access to any of the randomness the algorithm uses, selects the agents' values for all $T$ items, which are presented to the algorithm one at a time, in the order the oblivious adversary selected at the start of the game.

Second, we consider an \textbf{i.i.d.\@ adversary}.
In the online multicolor discrepancy problem, given a number of colors $n$, a number of vectors $T$, and a dimension $d$, a stochastic adversary specifies a distribution $\dist$, supported on $[-1,1]$. The designer, who knows this distribution, then selects a (possibly randomized) algorithm. In every round $t$, the vector $v_t$ is generated by sampling each of its coordinates (independently) from $\dist$, i.e., $v_{t,j} \sim \dist$ for all $j \in [d]$. In the online envy minimization problem, given a number of agents $n$ and a number of items $T$, the stochastic adversary specifies a distribution $\dist$, supported on $[0,1]$. The designer, who knows this distribution, then selects a (possibly randomized) algorithm. In every round $t$, the value of item $t$ for each agent $i$ is drawn independently from $\dist$, i.e., $v_{i,t} \sim \dist$.


\subsection{Other technical preliminaries}

\paragraph{Geometry definitions.}

A body $K \subseteq \mathbb{R}^d$ is \emph{convex} if and only if $x, y \in K$ implies that $\alpha x + (1-\alpha) y \in K$ for any $\alpha \in [0,1]$. A body $K\subseteq \mathbb{R}^d$ is \emph{symmetric} if and only if $x \in K$ implies that $-x \in K$ as well. We use $\mathbf{0} \in \mathbb{R}^d$ to denote the origin. The \emph{Gaussian measure} $\gamma_d(K)$ for a body $K \subseteq \mathbb{R}^d$ is defined as $\gamma_d(K) \coloneqq \Pr_{v \sim \mathcal{N}(\mathbf{0}, \mathbf{I}^d)}[v \in K]$ and denotes the probability that a random vector $v$, drawn from the standard Gaussian $\mathcal{N}(\mathbf{0}, \mathbf{I}^d)$ in $\mathbb{R}^d$, is in the body $K$.  We use the following result about convex bodies \cite{ball1997elementary}; the proof is stated below for completeness.

\begin{proposition}\label{proposition:large-body-ball}
    If $\gamma_d(K) \geq 1/2 + \varepsilon$ for a convex body $K \subseteq \mathbb{R}^d$, for some $\varepsilon > 0$, then $\varepsilon \, \Ball^d_2 \subseteq K$.
\end{proposition}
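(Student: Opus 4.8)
\textbf{Proof plan for Proposition~\ref{proposition:large-body-ball}.}

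The plan is to prove the contrapositive: if $\varepsilon\,\Ball^d_2 \not\subseteq K$, then $\gamma_d(K) < 1/2 + \varepsilon$. So suppose some point $p$ with $\norm{p}_2 < \varepsilon$ lies outside $K$. Since $K$ is convex, by the separating hyperplane theorem there is a closed halfspace $H = \{x : \langle x, u\rangle \le b\}$ with unit normal $u$ that contains $K$ but not $p$; in particular $b < \langle p, u\rangle \le \norm{p}_2 < \varepsilon$. Hence $K \subseteq H$ and $\gamma_d(K) \le \gamma_d(H)$, so it suffices to bound the Gaussian measure of the halfspace $H$.

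The next step is the routine one-dimensional reduction: because the standard Gaussian $\mathcal{N}(\mathbf{0},\mathbf{I}^d)$ is rotationally invariant, the projection $\langle v, u\rangle$ of $v \sim \mathcal{N}(\mathbf{0},\mathbf{I}^d)$ onto the unit vector $u$ is a standard one-dimensional Gaussian $Z \sim \mathcal{N}(0,1)$. Therefore
\[
\gamma_d(H) = \Pr_{Z \sim \mathcal{N}(0,1)}[Z \le b] = \frac12 + \int_0^b \frac{1}{\sqrt{2\pi}} e^{-z^2/2}\,dz .
\]
Now bound the integrand crudely by its maximum value: $\frac{1}{\sqrt{2\pi}} e^{-z^2/2} \le \frac{1}{\sqrt{2\pi}} < \frac12$ for all $z$. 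Since $0 < b < \varepsilon$, this gives $\gamma_d(H) < \frac12 + \frac{b}{\sqrt{2\pi}} < \frac12 + \varepsilon$ (indeed the slack is a factor $1/\sqrt{2\pi}$, so the statement holds comfortably). Combining, $\gamma_d(K) \le \gamma_d(H) < \frac12 + \varepsilon$, which contradicts the hypothesis $\gamma_d(K) \ge \frac12 + \varepsilon$; hence $\varepsilon\,\Ball^d_2 \subseteq K$.

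There is no real obstacle here — the only points requiring a little care are (i) invoking separation in the correct form (one needs $K$ convex but not necessarily closed; working with the point $p$ strictly inside the open ball of radius $\varepsilon$ and noting $b \le \norm{p}_2$ handles boundary issues, or one can first pass to $\overline{K}$ since $\gamma_d(\overline K \setminus K)$ contributes nothing beyond what we need), and (ii) the rotational-invariance reduction, which is standard. One subtlety worth stating explicitly: if $K = \mathbb{R}^d$ the claim is vacuous, and if $K$ fails to be full-dimensional or is empty then $\gamma_d(K)$ is at most $1/2$ anyway, so we may freely assume a genuine separating halfspace exists. I would present the argument in the contrapositive form above, as it is the cleanest; the whole proof is a few lines once the separating hyperplane is in place.
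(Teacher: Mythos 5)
Your proof is correct and follows essentially the same route as the paper's: take a point $p \in \varepsilon\,\Ball^d_2 \setminus K$, separate it from $K$ by a hyperplane, reduce to one dimension via rotational invariance of the Gaussian, and bound the resulting Gaussian CDF by $\tfrac12 + \varepsilon$. Your write-up is a bit more explicit about the CDF bound (using $\phi(z) \le 1/\sqrt{2\pi} < 1/2$) and about boundary/closure subtleties, but the underlying argument is the same.
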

\begin{proof}
    If $K$ does not contain $\varepsilon \, \Ball^d_2$, then there is $p \notin K$ such that $p \in \varepsilon \, \Ball^d_2$; therefore, since $K$ is convex, there must exist a hyperplane $h$ that separates $p$ from $K$, such that $h \cap (\varepsilon \, \Ball^d_2) \neq \emptyset$, and $h$ does not go through the origin. Denote by $H^+ \subseteq \mathbb{R}^d$ the halfspace, created by $h$, that includes the origin, and let $H^{-}$ be the other halfspace. Since $H^-$ is a halfspace that doesn't contain the origin we have that $\gamma_d(H^-) < 1/2$. Using the fact that the standard Gaussian distribution $\mathcal{N}(\mathbf{0}, \mathbf{I}^d)$ is centrally symmetric and that fact that the distance of the origin from $h$ is strictly less than $\varepsilon$, we can bound $\gamma_d(H^+) < \mathbb{P}_{u \sim \mathcal{N}(0,1)}[u \leq \varepsilon] \leq 1/2 + \varepsilon$. This leads to a contradiction, since $\gamma_d(H^+),\gamma_d(H^-) < 1/2 + \varepsilon = \gamma_d(K)$, and $K$ must lie entirely in either $H^+$ or $H^-$.
\end{proof}

Given a set $S \subseteq \mathbb{R}^d$, we will use $\conv(S) \coloneqq \{v : v = \sum_{u \in S} u \cdot x_u \text{, } \sum_{u \in S} x_u = 1 \text{, and } x_u \geq 0 \text{ for all } u \in S\}$ to denote the convex hull of $S$. Given a set $A \subseteq \mathbb{R}^d$, a set $W \subseteq A$ is called an $\varepsilon$-net of $S$ if for all $x \in A$ there is a $y \in W$ such that $\norm{x - y}_2 \leq \varepsilon$. For any $\varepsilon \in (0,1]$, there exists an $\varepsilon$-net of $\Ball^d_2$ of size at most $\left( \frac{3}{\varepsilon} \right)^{d}$; see \cite{kulkarni2024optimal} for a proof. We will use the following analogous statement for $\bigtimes_{i=1}^k \Ball^d_2$.

\begin{proposition}\label{proposition:size-of-net}
    For any $\varepsilon \in (0,1]$, there exists an $\varepsilon$-net of $\bigtimes_{i=1}^k \Ball^d_2$ that is of size at most $\left(\frac{3\sqrt{k}}{\varepsilon}\right)^{dk}$.
\end{proposition}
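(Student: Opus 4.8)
The plan is to reduce the problem to the known single-ball case, which the excerpt cites: for any $\varepsilon \in (0,1]$ there exists an $\varepsilon$-net of $\Ball^d_2$ of size at most $(3/\varepsilon)^d$. The natural idea is to take the product of $k$ such nets, one for each factor $\Ball^d_2$, and check that the product is an $\varepsilon$-net of $\bigtimes_{i=1}^k \Ball^d_2$ under whatever norm is implicitly being used on the product space. The only subtlety is the choice of norm: if the ambient metric on $\bigtimes_{i=1}^k \Ball^d_2 \subseteq \mathbb{R}^{dk}$ is the Euclidean norm, then covering each coordinate block to accuracy $\varepsilon$ gives only accuracy $\varepsilon\sqrt{k}$ overall, which is too weak. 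So I expect the relevant metric here is the ``max-block'' metric $\norm{(x^{(1)},\dots,x^{(k)}) - (y^{(1)},\dots,y^{(k)})} = \max_{i\in[k]} \norm{x^{(i)} - y^{(i)}}_2$, under which the product of per-block $\varepsilon$-nets is immediately an $\varepsilon$-net; this is almost certainly the notion used when these nets are applied to tuples of vectors arriving one per edge.

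Concretely, I would first invoke the cited single-ball bound to fix, for each $i \in [k]$, an $\varepsilon$-net $W_i \subseteq \Ball^d_2$ with $|W_i| \le (3/\varepsilon)^d$. Then I would set $W \defeq \bigtimes_{i=1}^k W_i \subseteq \bigtimes_{i=1}^k \Ball^d_2$ and bound its cardinality by $|W| = \prod_{i=1}^k |W_i| \le \left((3/\varepsilon)^d\right)^k = (3/\varepsilon)^{dk}$. For the covering property, given any $x = (x^{(1)},\dots,x^{(k)}) \in \bigtimes_{i=1}^k \Ball^d_2$, I would pick, for each block $i$, a point $y^{(i)} \in W_i$ with $\norm{x^{(i)} - y^{(i)}}_2 \le \varepsilon$ (which exists since $W_i$ is an $\varepsilon$-net of $\Ball^d_2$ and $x^{(i)} \in \Ball^d_2$), and then observe that $y \defeq (y^{(1)},\dots,y^{(k)}) \in W$ satisfies $\max_{i \in [k]} \norm{x^{(i)} - y^{(i)}}_2 \le \varepsilon$ (or, if the Euclidean metric on $\mathbb{R}^{dk}$ is intended throughout, I would instead state the bound as $\norm{x-y}_2 \le \varepsilon\sqrt{k}$ and rescale $\varepsilon \mapsto \varepsilon/\sqrt{k}$ in the net size, noting this only changes constants in the exponent-free sense — but I believe the max-block metric is what is meant).

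There is essentially no hard step here; the proof is a one-line product construction plus a cardinality multiplication. The only thing that requires a moment of care — and the single ``obstacle'' worth flagging — is making the metric on the product space explicit and consistent with how \Cref{proposition:size-of-net} is later used, so that ``$\varepsilon$-net'' means the same thing on both sides. Once that convention is pinned down (max over blocks of the Euclidean distances), the covering property is immediate from covering each block independently, and the size bound is just $\prod_i (3/\varepsilon)^d = (3/\varepsilon)^{dk}$. I would therefore present the proof in three short sentences: choose per-block nets, take their product, verify size and covering.
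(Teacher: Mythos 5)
Your proof is correct, and it takes a genuinely different---and in fact sounder---route from the paper's. The paper reduces to a single ball: it asserts a ``natural bijection'' $f \colon \Ball^{dk}_2 \to \bigtimes_{i=1}^k \Ball^d_2$ that preserves distances, and then cites the known $\varepsilon$-net of $\Ball^{dk}_2$ of size at most $(3/\varepsilon)^{dk}$. You instead take the product of $k$ per-block $\varepsilon$-nets of $\Ball^d_2$ and multiply the cardinalities, which is the standard argument for product sets under the max-over-blocks metric. The paper's shortcut is actually problematic: the map $f$ is not onto the product, since $\Ball^{dk}_2$ is a proper subset of $\bigtimes_{i=1}^k \Ball^d_2$ when both are viewed in $\mathbb{R}^{dk}$ (e.g.\ for $d=1$, $k=2$ this is the unit disk sitting inside the unit square), so an $\varepsilon$-net of $\Ball^{dk}_2$ need not cover the whole product; and the claimed distance-preservation also fails for the $\ell_1$-over-$\ell_2$ mixed norm the paper's footnote introduces. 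Your explicit flag about the choice of metric is exactly the right concern: under the plain Euclidean norm on $\mathbb{R}^{dk}$ the product set has $\ell_2$-diameter $2\sqrt{k}$ and the bound $(3/\varepsilon)^{dk}$ would not be attainable; the max-over-blocks metric is what makes the stated bound hold, and is also what is implicitly used when the proposition is applied in~\Cref{theorem:subgauss-algo}, where the argument needs each coordinate block of $S_t$ to be $\varepsilon$-close to the corresponding block of $Y_t$ individually.
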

\begin{proof}
    Let $W$ be an $\varepsilon/\sqrt{k}$-net of $\Ball_2^d$. Since $\Ball^{d}_2$ has an $\varepsilon$-net of size at most $\left(\frac{3}{\varepsilon}\right)^{d}$ (by the result of~\cite{kulkarni2024optimal}), we have $|W| \leq \left(\frac{3\sqrt{k}}{\varepsilon}\right)^{d}$. We will show that $W^k = \bigtimes_{i=1}^k W$ is an $\varepsilon$-net of $\bigtimes_{i=1}^k \Ball^d_2$; this would complete the proof as $|W^k| \leq \left(\frac{3\sqrt{k}}{\varepsilon}\right)^{dk}$. For any $(y_1, y_2, \ldots, y_k) \in \bigtimes_{i=1}^k \Ball^d_2$, let $x_i \in W$ such that $\norm{x_i - y_i}_2 \leq \varepsilon/\sqrt{k}$. If we consider the vector $(x_1, x_2, \ldots, x_k) \in W^k$, then $\sqrt{\sum_{i=1}^k \norm{x_i - y_i}_2^2} \leq \sqrt{k \cdot \varepsilon^2/k} = \varepsilon$, showing that $W^k$ is an $\varepsilon$-net of $W$.
\end{proof}

\paragraph{Subgaussian distributions.}

\begin{definition}[Subgaussian norm]
    For a real-valued random variable $X$, the subgaussian norm is defined as $\norm{X}_{\psi_2} \defeq \inf\{ t > 0: \E[\exp(X^2/t^2)] \leq 2 \}$. For a random vector $Y$ taking values in $\mathbb{R}^d$, $\norm{Y}_{\psi_2, \infty} \defeq \sup_{w \in \Sph^{d-1}} \norm{ \langle\, Y , w \rangle }_{\psi_2}$.
\end{definition}

We say that a random vector $Y$ taking values in $\mathbb{R}^d$ 
is $\beta$-subgaussian if $\norm{Y}_{\psi_2, \infty} \leq \beta$. We use the following two properties of norms: (i) for a random vector $Y$ taking values in $\mathbb{R}^d$ and any $k \geq 0$, we have $\norm{k Y}_{\psi_2, \infty} = k \norm{Y}_{\psi_2, \infty}$ (homogeneity); (ii) for random vectors $Y$ and $Z$ in $\mathbb{R}^d$, we have $\norm{ X+ Y }_{\psi_2, \infty} \leq \norm{X}_{\psi_2, \infty} + \norm{Y}_{\psi_2, \infty}$ (triangle inequality). 





\section{Performance against an oblivious adversary}\label{sec: oblivious}
 
In this section, we prove that there exists an algorithm for the online multicolor discrepancy problem that guarantees, with high probability, a maximum discrepancy of $O(\sqrt{\log T})$ against an oblivious adversary (\Cref{subsec: optimal multicolor discrepancy}); there is a matching lower bound for the online vector balancing problem~\cite{kulkarni2024optimal}, therefore our algorithm for the online multicolor discrepancy problem is optimal. As a corollary, we get an algorithm that guarantees, with high probability, an envy of $O(\sqrt{\log T})$, against an oblivious adversary, for the online envy minimization problem; we give a matching lower bound in~\Cref{subsec: envy for oblivious}. Missing proofs are deferred to~\Cref{app: missing from section 3}.

\subsection{Balancing sets of vectors}\label{subsec: balance sets of vectors}


\citet{kulkarni2024optimal} think of a discretized adversary whose choices correspond to root-to-leaf paths in a (massive) rooted tree. Each edge of this tree corresponds to a vector in $\Ball^d_2$ that the adversary can pick. They prove the following:

\begin{theorem}[\cite{kulkarni2024optimal}]\label{theorem:tree-kulkarni} Let $\calT = (V,E)$ be a rooted tree, where every edge $e \in E$ has a corresponding vector $v_e \in \Ball^d_2$. Let $K \subseteq \mathbb{R}^d$ be a convex body with $\gamma_d(K) \geq 1 - \frac{1}{2|E|}$. Then there exists $z \in \{-1,1\}^{|E|}$ such that, for all $u \in V$, the vector sum $\sum_{e \in P_u} z_e v_e \in 5K$, where $P_u$ is the set of edges of the path from the root to the node $u$.
\end{theorem}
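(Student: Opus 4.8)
The plan is to reduce the tree statement to the classical \emph{prefix} (``signed series'') version of Banaszczyk's theorem for paths via a depth-first traversal of $\calT$, and then to run Banaszczyk's online sign-selection argument ``along the tree.'' First I would set up the Euler-tour viewpoint: fix a DFS of $\calT$, whose tour traverses each edge $e$ exactly twice, once ``downward'' (away from the root) and once ``upward.'' Declare that the downward traversal of $e$ contributes $z_e v_e$ and the upward traversal contributes $-z_e v_e$. Then, whenever the tour is at a node $u$, the running sum equals exactly $\sum_{e \in P_u} z_e v_e$, because the contributions of every fully-explored subtree cancel. Hence ``$\sum_{e \in P_u} z_e v_e \in 5K$ for all $u \in V$'' is equivalent to ``every node-prefix of the length-$2|E|$ tour sequence lies in $5K$.'' The one catch, and the reason this is not literally an instance of the free-sign prefix theorem, is that the two occurrences of $e$ are forced to carry opposite signs rather than independent ones.

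The workhorse is a local, potential-based form of Banaszczyk's prefix theorem: there is a universal constant $C \le 5$, a threshold $\theta$, and a potential $\Phi_m(\cdot)$ defined via Gaussian measures of translates and dilates of $K$ and depending on a ``remaining-budget'' parameter $m$, such that (i) $\Phi_m(x) \le \theta$ implies $x \in CK$; (ii) $\Phi_{2|E|}(\mathbf{0}) \le \theta$, which is exactly where the hypothesis $\gamma_d(K) \ge 1 - \tfrac{1}{2|E|}$ is used; (iii) $\Phi_m$ is monotone nondecreasing in $m$ (more future steps only raises the potential); and (iv) for every $x$ with $\Phi_m(x) \le \theta$ and every $w$ with $\norm{w}_2 \le 1$, some sign $\epsilon \in \{-1,1\}$ satisfies $\Phi_{m-1}(x + \epsilon w) \le \theta$. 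Properties (i)--(iv) are essentially Banaszczyk's Gaussian-measure monotonicity machinery for the prefix theorem; the substantive step is to isolate a version in which the invariant depends only on the \emph{current} partial sum and a step counter, not on the path taken to it.

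Given (i)--(iv), the tree argument is an induction that walks the DFS tour. Start at the root with $\Phi_{2|E|}(\mathbf{0}) \le \theta$ by (ii). On a downward move along $e = (p,u)$ the running sum changes by $z_e v_e$ with $z_e$ free, so apply (iv) with $w = v_e$ (note $\norm{v_e}_2 \le 1$) to choose $z_e$ keeping the potential $\le \theta$ at $x_u$; this also fixes the sign of the later upward traversal of $e$. On an upward move along $e$ the running sum returns to $x_p$, which already satisfies $\Phi_{m'}(x_p) \le \theta$ for the larger budget $m'$ it carried on its first visit, so by monotonicity (iii) it still satisfies the invariant with the smaller current budget; thus the forced sign does no harm. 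Therefore every node partial sum $x_u$ has $\Phi(x_u) \le \theta$, hence $x_u \in CK \subseteq 5K$ by (i), which is the claim.

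The hard part will be establishing (iv) in the memoryless form just described: Banaszczyk's original prefix argument is naturally phrased with a time- and path-dependent invariant, and the crux of going from paths to trees is re-deriving it so that a node with many children may legitimately reuse its own partial-sum state as the starting point for each child's subtree. The conceptual reason this is affordable is that ``budget'' is consumed independently along each root-to-leaf path, and the longest such path has at most $|E|$ edges, so the initial budget $2|E|$ (equivalently, the hypothesis $\gamma_d(K) \ge 1 - \tfrac{1}{2|E|}$) is never exhausted even though the whole tour has length $2|E|$; a secondary, routine point is tracking the constants so that $C$ does not exceed $5$.
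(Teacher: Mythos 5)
The paper cites this theorem from Kulkarni et al.\ without giving a proof, so there is no in-paper argument to compare against. Your Euler-tour reformulation is a tidy and valid observation, and a potential argument walked along the tour is the right family of ideas, but two things go wrong. First, the budget bookkeeping is off by a factor of two: you decrement on every tour step (upward moves included), take initial budget $2|E|$, and claim this matches the hypothesis $\gamma_d(K) \ge 1 - \frac{1}{2|E|}$; in Banaszczyk's normalization a budget of $m$ free sign-choices corresponds to the hypothesis $\gamma_d(K) \ge 1 - \frac{1}{2m}$, so budget $2|E|$ would need $\gamma_d(K) \ge 1 - \frac{1}{4|E|}$, strictly stronger than what you are given. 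Your own remark that ``budget is consumed independently along each root-to-leaf path, and the longest such path has at most $|E|$ edges'' points to the fix: the invariant should carry budget $|E|$ minus the node's depth, decrementing only on downward moves, so the root carries budget $|E|$ and no node goes negative. That version also renders monotonicity (iii) unnecessary, since on returning to a node you simply re-read the same depth-indexed invariant rather than a tour-position-indexed one that has shrunk.

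Second, and more fundamentally, all the substance of the theorem sits exactly in what you defer. You posit a ``memoryless'' potential satisfying (i)--(iv), acknowledge that establishing (iv) in that form is ``the hard part,'' and do not attempt it. This is not a cosmetic repackaging of the prefix theorem: Banaszczyk's prefix proof threads a sequence of test sets constructed adaptively from the already-presented vectors, and the path-to-tree generalization consists precisely in converting that path-dependent state into a node-local certificate that can be re-entered when a node with several children launches each sibling subtree from the same partial sum. That conversion is what the cited proof of Kulkarni et al.\ actually accomplishes, and your write-up asserts it rather than proving it. As written, the proposal rests the entire weight of the theorem on an unproven lemma, so there is a genuine gap exactly where the content should be.
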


For our result, we think of the adversary as picking sets of vectors in $\Ball^d_2$ at each time step instead of picking individual vectors; notably, the former adversary is strictly more general than the latter. Such an adversary corresponds to a (massive) rooted tree whose each edge $e$ has an associated set of vectors $S_e \subseteq \Ball^d_2$. In the following theorem, we prove that we can always choose a vector from each set $S_e$ so that the sum of chosen vectors along any root-to-leaf paths lies within certain convex bodies $K$.

\begin{theorem}\label{theorem:tree-reduction}
 Let $\calT = (V,E)$, $|E| \geq 2$, be a rooted tree such that: (1) every $e \in E$ has an associated set of vectors $S_e \subseteq \Ball^d_2$ satisfying $\mathbf{0} \in \conv(S_e)$, and (2) there exists an $\ell \in \mathbb{N}$, $\ell \geq 2$, such that, for all $e \in E$, $\mathbf{0}$ is a convex combination of at most $\ell$ vectors in $S_e$. Let $K \subseteq \mathbb{R}^d$ be a symmetric convex body with $\gamma_d(K) \geq 1 - \frac{1}{\ell\, |E|}$. Then, for every edge $e \in E$, there exists a vector $v_e \in S_e$, such that for all $u \in V$, $\sum_{e \in P_u} v_e \in 11 \, K$, where $P_u$ is the set of edges of the path from the root to the node $u$.
\end{theorem}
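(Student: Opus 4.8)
The plan is to reduce \Cref{theorem:tree-reduction} to \Cref{theorem:tree-kulkarni} (the $\{-1,1\}$-signed tree result) via an iterative \emph{bit-by-bit rounding} procedure on a fractional selection of vectors. I start with a fractional selection: for each edge $e$, pick coefficients $x^e = (x^e_1,\dots,x^e_{|S_e|})$ with $x^e_i \ge 0$, $\sum_i x^e_i = 1$, and $\sum_i x^e_i s_i = \mathbf 0$ (possible since $\mathbf 0 \in \conv(S_e)$, and by Carath\'eodory/assumption (2) supported on at most $\ell$ vectors). This fractional selection trivially satisfies $\sum_{e \in P_u} \sum_i x^e_i s_i = \mathbf 0 \in K$ for all $u$. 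The goal is to round each $x^e$ to an indicator vector (selecting one $v_e \in S_e$) while controlling how much each prefix sum $\sum_{e \in P_u}(\text{selected vector})$ moves away from the origin, so that the final integral point lies in $11K$.

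The rounding proceeds in two phases. \textbf{Phase 1 (truncation to finitely many bits):} first replace each $x^e_i$ by a nearby dyadic rational whose fractional part has at most $k$ bits, for a suitable $k$ (roughly $k = O(\log |E|)$ or chosen so the total truncation error over any path is absorbed into a constant multiple of $K$); here I must be careful to preserve the constraints $\sum_i x^e_i = 1$ and, more delicately, keep $\mathbf 0$ (approximately) in the relevant convex combination — this is handled by rounding the first $\ell-1$ coordinates and letting the last absorb the slack, then re-centering. \textbf{Phase 2 (peel off bits one at a time):} with all coefficients now $k$-bit dyadic, I look at the least significant bit across all edges and all coordinates; the set of coordinates having a $1$ in that bit must be rounded either up (to the next bit level) or down (to $0$). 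I encode these decisions as signs: build an auxiliary ``blown-up'' tree where each original edge $e$ is expanded into a small gadget of edges, one per coordinate $i$ with an active least-significant bit, carrying vector $\pm 2^{-j} s_i$ (scaled appropriately so $\ell_2$-norms stay $\le 1$, using $\|s_i\|_2 \le 1$), and apply \Cref{theorem:tree-kulkarni} to get signs $z$ telling us round-up vs.\ round-down. The key point is that a single bit-elimination round changes each prefix sum $\sum_{e \in P_u}(\cdot)$ by a vector that, under the Banaszczyk-for-trees guarantee, lands in $5K'$ for the scaled body; summing a geometric series over the $k$ bit levels keeps the total drift bounded by a constant times $K$. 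Iterating until no fractional bits remain yields an integral selection with every prefix in $c\,K$ for an explicit constant, which the paper pins down as $11$.

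The main obstacle I expect is \textbf{maintaining the ``$\mathbf 0 \in \conv$'' structure (equivalently, that each edge's selected coefficients remain a genuine distribution summing to $1$) throughout the rounding while simultaneously getting the prefix-sum movements to be exactly of the form handled by \Cref{theorem:tree-kulkarni}.} Naively rounding dyadic coefficients breaks the simplex constraint; the fix is to pair up coordinates (round one up by $2^{-j}$ exactly when a partner rounds down by $2^{-j}$) so that $\sum_i x^e_i$ is preserved bit-elimination round by round, and to set up the gadget edges so that the vector on a ``round $i$ up, $i'$ down'' edge is $2^{-j}(s_i - s_{i'})$, which has $\ell_2$-norm at most $2 \cdot 2^{-j} \le 1$ after the first couple of bit levels (the top few levels contribute only an $O(1)$ additive term to the prefix, handled separately). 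Getting this pairing to respect the tree structure — i.e.\ making sure the gadget for edge $e$ is inserted along every root-to-leaf path through $e$ consistently, so that a single call to \Cref{theorem:tree-kulkarni} on the blown-up tree simultaneously controls all prefixes — is the delicate bookkeeping step. A secondary technical point is verifying the Gaussian-measure hypothesis transfers: $\gamma_d(K) \ge 1 - \frac{1}{\ell|E|}$ must be enough to guarantee $\gamma_d(K') \ge 1 - \frac{1}{2|E'|}$ for the (larger) blown-up edge set $E'$ at each bit level, which is why the hypothesis carries the factor $\ell$ rather than just $2$, and why \Cref{proposition:large-body-ball} is invoked to ensure $K$ contains a ball large enough for the truncation errors.
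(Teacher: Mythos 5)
Your high-level plan matches the paper exactly: a fractional selection with $\sum_i x^e_i v^e_i = \mathbf 0$ at each edge, Phase 1 truncation to finitely many dyadic bits with slack absorbed to preserve $\sum_i x^e_i = 1$, and Phase 2 bit-by-bit elimination via signs from \Cref{theorem:tree-kulkarni} on a blown-up tree, with \Cref{proposition:large-body-ball} absorbing the truncation error and the $\gamma_d(K) \ge 1 - \tfrac{1}{\ell|E|}$ hypothesis chosen so that $1 - \tfrac{1}{\ell|E|} \ge 1 - \tfrac{1}{2|E'|}$ holds on the expanded edge set. You also correctly identify the pairing device that keeps $\sum_i y^e_i = 1$ across a round.

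There is, however, a genuine flaw in how you set up the gadget edges. You place the vectors $\pm 2^{-j}s_i$ (equivalently $2^{-j}(s_i - s_{i'})$ per pair) directly on the blown-up tree and then apply \Cref{theorem:tree-kulkarni}. But that theorem controls the signed prefix sum of whatever vectors you feed it to $5K$, with no dependence on their scale: feeding it vectors of norm $2^{-j}$ still only yields ``signed sum $\in 5K$,'' not ``signed sum $\in 5\cdot 2^{-j} K$.'' So the displacement incurred at each bit level is bounded by $5K$, and summing over the $\approx \log(\ell h/\varepsilon)$ levels gives a linear, not geometric, total --- the whole argument collapses. Your phrase ``lands in $5K'$ for the scaled body'' presupposes a scale-dependent conclusion that the black box does not provide. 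The correct move (the paper's) is to put the \emph{unscaled} pair vectors $\tfrac{1}{2}(v^e_{i_{2p}} - v^e_{i_{2p+1}}) \in \Ball^d_2$ on the blown-up tree, apply \Cref{theorem:tree-kulkarni} once to get $\sum z_e^{(i)} u_e^{(i)} \in 5K$, and only \emph{then} observe that the rounding at bit $k$ changes the prefix by $2^{-k}\sum z_e^{(i)}(v^e_{i_{2p}} - v^e_{i_{2p+1}}) = 2\cdot 2^{-k}\sum z_e^{(i)} u_e^{(i)} \in 2^{-k}\cdot 10K$; the $2^{-k}$ comes out as a scalar on the already-bounded signed sum, and the geometric series $\sum_{k\ge 1} 2^{-k}\cdot 10K \subseteq 10K$, plus the initial $K$ from Phase 1, gives the claimed $11K$. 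Incidentally, this also dissolves your worry about gadget-vector norms exceeding $1$ at the top bit levels --- the unscaled pair vectors always lie in $\Ball^d_2$, so no special handling of the first few bits is needed.
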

\begin{proof}
    Our goal is to select a vector $v_e \in S_e$ from each set $S_e$ to satisfy the desired property. At a high level, we start with a \emph{fractional} selection of vectors from each set $S_e$ such that, for every node $u \in V$, the fractional vector sum of the edges in the path from the root to $u$ is $\mathbf{0}$ (so, the desired property of being contained in $11 K$ is clearly satisfied). We iteratively round this fractional selection to get a single vector from each set $S_e$, in a way that every rounding step does not increase the vector sums we are interested in by too much.

    For all $e \in E$, by definition, $\mathbf{0} \in \conv(S_e)$ and $\mathbf{0}$ is a convex combination of at most $\ell$ vectors in $S_e$. 
    Therefore there exists a fractional selection of vectors $X^e = \{(v^e_1,x^e_1), (v^e_2,x^e_2), \ldots, (v^e_{\ell},x^e_{\ell})\}$, with $(v^e_i,x^e_i) \in S_e \times [0,1]$, such that $\sum_{i=1}^{\ell} x^e_{i} \cdot v^e_i = \mathbf{0}$, and $X^e$ is \emph{feasible}, i.e., $\sum_{i=1}^{\ell} x^e_{i} = 1$ and $x^e_i \in [0,1]$, for all $i \in [\ell]$. In the subsequent proof, we show the following two claims:
    \begin{enumerate}
        \item [(a)] 
            We can round each $x^e_i$ to $\hat{x}^e_i$ such that (i) for every $e \in E$, $\sum_{i=1}^{\ell} \hat{x}^e_i = 1$, and $\hat{x}^e_i \in [0,1]$ for all $i\in [\ell]$, (ii) the fractional part of each $\hat{x}^e_i$ is at most $\log(\frac{2 \, \ell \, h}{\varepsilon})$ bits long, where $h$ is the height of the tree $\calT$ and $\varepsilon = \frac{1}{2}-\frac{1}{\ell |E|}  > 0$, and (iii) for all $u \in V$, we have $\sum_{e \in P_u} \sum_{i=1}^{\ell} \hat{x}^e_i \cdot v^e_i \in K$.
        
        \item [(b)] 
            Let $Y^e = \{(v^e_1,y^e_1), \ldots, (v^e_{\ell},y^e_{\ell})\}$ be a feasible ($\sum_{i=1}^{\ell} y^e_i = 1$ and $y^e_i \in [0,1]$ for all $i \in [\ell]$), fractional selection of vectors. If, for every $e \in E$ and $i \in [\ell]$, the fractional part of $y^e_i$ is $k \geq 1$ bits long, then we can round the $k^{th}$-bit of all $\{y^e_i\}_{i,e}$ to get $\{\hat{y}^e_i\}_{i,e}$, whose fractional parts are at most $k-1$ bits long, and, for every $u \in V$, $\left( \sum_{e \in P_u} \sum_{i=1}^{\ell} \hat{y}^e_i v^e_i - \sum_{e \in P_u} \sum_{i=1}^{\ell} y^e_i v^e_i  \right) \in 2^{-k} \cdot 10 K$.
    \end{enumerate}

Starting from the fractional selection $X^e$, applying the rounding process of (a) gives us a feasible fractional selection $\hat{X}^e$ such that $\sum_{e \in P_u} \sum_{i=1}^{\ell} \hat{x}^e_i \cdot v^e_i \in K$. Then, repeatedly applying the rounding process of (b), starting from the fractional selection $\hat{X}^e$, results in an integral selection, after at most $\log(\frac{2 \, \ell \, h}{\varepsilon})$ steps. Let $A^e = \{ a^e_i \}_{i \in [\ell]}$, where $a^e_i \in \{ 0 , 1 \}$ and $\sum_{i=1}^\ell a^e_i = 1$, be the integral selection obtained at the end of the process. We have that $\left( \sum_{e \in P_u} \sum_{i=1}^{\ell} a^e_i v^e_i - \sum_{e \in P_u} \sum_{i=1}^{\ell} \hat{x}^e_i v^e_i  \right) \in \left( 2^{-\log(\frac{2 \, \ell \, h}{\varepsilon})} + 2^{-\log(\frac{2 \, \ell \, h}{\varepsilon})+1} + \dots + 2^{-1} \right) \cdot 10 K \subseteq 10 K$. And, since $\sum_{e \in P_u} \sum_{i=1}^{\ell} \hat{x}^e_i \cdot v^e_i \in K$, we have that $\sum_{e \in P_u} \sum_{i=1}^{\ell} a^e_i v^e_i \in 11 K$.

\paragraph{Proving $(a)$.} 
Let $\varepsilon = \frac{1}{2}-\frac{1}{\ell |E|}  > 0$ be a constant, and let $b = \log(\frac{2 \, \ell \, h}{\varepsilon})$. Construct $z^e_i$, for every $e \in E$ and $i \in [\ell]$, by taking $x^e_i$ and setting to zero all the bits in the fractional part of $x^e_i$ after the $b^{th}$ bit. 
Then, $\hat{x}^e_1 = z^e_1 + \left( 1 - \sum_{i=1}^{\ell} z^e_i \right)$, and $\hat{x}^e_i = z^e_i$ for all $i = 2, \dots, \ell$.
Clearly, for all $i \geq 2$, $\hat{x}^e_i \in [0,1]$, and the fractional part of $\hat{x}^e_i$ is at most $b$ bits long. 
By definition, $\sum_{i=1}^\ell \hat{x}^e_i = \hat{x}^e_1 + \sum_{i=2}^{\ell} \hat{x}^e_i = z^e_1 + \left( 1 - \sum_{i=1}^{\ell} z^e_i \right) + \sum_{i=2}^{\ell} z^e_i = 1$.
Furthermore, since $\sum_{i=2}^{\ell} \hat{x}^e_i \leq 1$ and $\sum_{i=1}^\ell \hat{x}^e_i = 1$, we have that $\hat{x}^e_1 \in [0,1]$. By the construction of the $z^e_i$s, the fractional part of $1 - \sum_{i=1}^{\ell} z^e_i = \sum_{i=1}^{\ell} x^e_i - \sum_{i=1}^{\ell} z^e_i$ is at most $b$ bits long, and therefore, the fractional part of $\hat{x}^e_1$ is at most $b$ bits long. Finally, for all $u \in V$,
\begin{align*}
\norm{ \sum_{e \in P_u} \sum_{i=1}^\ell \hat{x}^e_i \cdot v^e_i}_2 &= \norm{ \sum_{e \in P_u} \sum_{i=1}^\ell \hat{x}^e_i \cdot v^e_i - \sum_{e \in P_u} \sum_{i=1}^\ell x^e_i \cdot v^e_i}_2 \\
&= \norm{ \sum_{e \in P_u} \sum_{i=1}^\ell (\hat{x}^e_i - x^e_i) \cdot v^e_i }_2 \\
&\leq h \cdot \left( \ell \cdot 2^{-b} + (\ell - 1) \cdot 2^{-b} \right) \\
&\leq \varepsilon,
\end{align*}
where in the first equality we used the fact that $\sum_{e \in P_u} \sum_{i=1}^\ell x^e_i \cdot v^e_i = \mathbf{0}$ and in the first inequality we used that $\hat{x}^e_1 - x^e_1 \leq \ell \cdot 2^{-b}$. Therefore, $\sum_{e \in P_u} \sum_{i=1}^\ell \hat{x}^e_i \cdot v^e_i \in \varepsilon \, \Ball^d_2$. Since $\gamma_d(K) \geq 1 - \frac{1}{\ell|E|} \geq \frac{1}{2} + \varepsilon$, we have $\varepsilon \, \Ball^d_2 \subseteq K$ (\Cref{proposition:large-body-ball}). So, overall, $\sum_{e \in P_u} \sum_{i=1}^\ell \hat{x}^e_i \cdot v^e_i \in K$.
    
\paragraph{Proving $(b)$.} Assume that for all $e \in E$ and $i\in [\ell]$, the fractional part of $y^e_i$ is at most $k$ bits long. To round the $k^{th}$-bits of $\{y^e_i\}_{i,e}$, we first construct a new tree $\calT' = (V',E')$ that has vectors (instead of sets) associated to each edge, and then invoke~\Cref{theorem:tree-kulkarni} with $\mathcal{T}'$ and $K$. Intuitively, the signs from the guarantee of~\Cref{theorem:tree-kulkarni} will tell us how to round (up or down) the $k^{th}$-bit of $\{y^e_i\}_{i,e}$, so that the resulting $\{\hat{y}^e_i\}_{i,e}$ (after rounding) have at most $k-1$ bits in the fractional part, and this rounding process doesn't incur too much cost. 

For each $e \in E$, let $I^e = \{i_1, i_2, \ldots, i_{2q}\} \subseteq [\ell]$ be the set of indices such that, for every $j \in I^e$, the $k^{th}$ bit of the fractional part of $y^e_{j}$ is $1$; the set $I^e$ may be empty. Since $\sum_{i=1}^{\ell} y^e_i = 1$, $|I^e|$ must be even. For every $e \in E$, we pair up consecutive indices in $I^e$, and corresponding to each pair we define a vector. Formally, if $I^e = \emptyset$, set $\widetilde{S}^e = \{\mathbf{0}\}$; otherwise, if $I^e \neq \emptyset$, define the set of vectors 
\[
\widetilde{S}^e = \left\{ \frac{1}{2}\left(v^e_{i_{2p}} - v^e_{i_{2p+1}}\right) : \text{ for every } i_{2p}, i_{2p+1} \in I^e \right\}.
\]
For each $e \in E$, we have $\widetilde{S}^e \subseteq \Ball^d_2$ since $\norm{\frac{1}{2}\left(v^e_{i_{2p}} - v^e_{i_{2p+1}}\right)}_2 \leq \frac{1}{2}\left( \norm{v^e_{i_{2p}}}_2 + \norm{v^e_{i_{2p+1}}}_2\right) \leq 1$.  Also, by definition, $|\widetilde{S}^e| \leq \frac{|I^e|}{2} \leq \lfloor \frac{\ell}{2} \rfloor$.

To construct $\calT'$, we start from $\calT$ and replace every edge $e \in E$ by a path of $|\widetilde{S}^e|$ edges $e^{(1)}, e^{(2)}, \ldots, e^{(|\widetilde{S}^e|)}$. For every edge $e^{(i)} \in E'$, associate a unique vector $u^{(i)}_e \in \widetilde{S}^e$ (so, there's a bijection between $\widetilde{S}^e$ and $\{e^{(1)}, e^{(2)}, \ldots, e^{(|\widetilde{S}^e|)}\}$). Note that, since  $|\widetilde{S}^e| \leq \lfloor \frac{\ell}{2} \rfloor$, we have $|E'| \leq |E| \cdot \lfloor \frac{\ell}{2}\rfloor$. $\calT'$ satisfies the conditions of~\Cref{theorem:tree-kulkarni}: vectors associated with edges belong to the sets $\widetilde{S}^e$, where $\widetilde{S}^e \subseteq \Ball^d_2$. Applying~\Cref{theorem:tree-kulkarni} for $\calT'$ and $K$ (which also satisfies the conditions of~\Cref{theorem:tree-kulkarni}), there exists a sign $z_e^{(i)} \in \{-1,1\}$ for every $e \in E$ and $i \in [|\widetilde{S}^e|]$ (i.e., a sign for every edge in $E'$), such that, for every $u \in V$, $\sum_{e \in P_u} \sum_{i=1}^{|\widetilde{S}^e|} z_e^{(i)}u_e^{(i)} \in 5 K$ where $P_u$ is the set of edges on the path from the root node of $\calT$ to the node $u$ in $\calT$.

Consider an edge $e \in E$. We round every $y^e_i$ to $\hat{y}^e_i$ as follows: If $I^e = \emptyset$, then $\hat{y}^e_i = y^e_i$ for every $i \in [\ell]$. Otherwise, if $I^e \neq \emptyset$, then for every vector $u^{(i)}_e = \frac{1}{2}( v^e_{i_{2p}} - v^e_{i_{2p+1}} ) \in \widetilde{S}^e$, whose corresponding sign is $z^{(i)}_e$, we set $\hat{y}^e_{i_{2p}} =  y^e_{i_{2p}} + 2^{-k}\cdot z^{(i)}_e$ and $\hat{y}^e_{i_{2p+1}} = y^e_{i_{2p+1}} - 2^{-k}\cdot z^{(i)}_e$; all other $j \in [\ell] \setminus I^e$ are left unupdated, $\hat{y}^e_j = y^e_j$. This specifies a way to round each $y^e_{j}$ to $\hat{y}^e_{j}$.

It remains to show that our rounding procedure $(i)$ preserves feasibility, $(ii)$ sets the $k^{th}$-bit of the fractional part of $\hat{y}^e_i$ to $0$ for all $e$ and $i$, and $(iii)$ does not increase the vector sums of interest by too much: for all $u \in V$, 
$\left( \sum_{e \in P_u} \sum_{i=1}^{\ell} \hat{y}^e_i v^e_i - \sum_{e \in P_u} \sum_{i=1}^{\ell} y^e_i v^e_i  \right) \in 2^{-k} \cdot 10 K$.

Recall that $I^e$ is the set of all indices $i$ where the $k^{th}$ bit of the fractional part of $y^e_{i}$ is $1$. As per the aforementioned rounding process, for all $e \in E$ such that $I^e = \emptyset$, we have $\hat{y}^e_{i} = y^e_{i}$ for all $i \in [\ell]$, hence $(ii)$ holds.
Otherwise, if $I^e \neq \emptyset$, during the rounding we add or subtract $2^{-k}$ from every $y^e_{i_{2p}}$ and $y^e_{i_{2p+1}}$ respectively, where $i_{2p},i_{2p+1} \in I^e$. This addition and subtraction ensures that the $k^{th}$-bits of $y^e_{i_{2p}}$ and $y^e_{i_{2p+1}}$ are zero, additionally, for all $j \in [\ell] \setminus I^e$, $\hat{y}^e_j = y^e_j$, i.e., the $k^{th}$ bit remains zero; $(ii)$ follows. 
Since $\hat{y}^e_{i_{2p}} + \hat{y}^e_{i_{2p+1}} = y^e_{i_{2p}} + y^e_{i_{2p+1}}$, the equality $\sum_{i=1}^{\ell} \hat{y}^e_i = \sum_{i=1}^{\ell} y^e_i = 1$ is maintained. Additionally, $\hat{y}^e_j \geq 0$ for all $e, j$, the feasibility condition $(i)$, also holds.
Finally, for $(iii)$, recall that for all $u \in V$, $\sum_{e \in P_u} \sum_{i=1}^{|\widetilde{S}^e|} z_e^{(i)} u_e^{(i)} = \sum_{e \in P_u} \sum_{i=1}^{|\widetilde{S}^e|} z_e^{(i)} \frac{1}{2} \left(v^e_{i_{2p}} - v^e_{i_{2p+1}}\right) \in 5 K$, by~\Cref{theorem:tree-kulkarni}. Therefore, by our rounding process: $\sum_{e \in P_u} \sum_{i=1}^{\ell} (\hat{y}^e_i -  y^e_i ) \cdot v^e_i \leq \sum_{e \in P_u} \sum_{i=1}^{|\widetilde{S}^e|} 2^{-k} z^{(i)}_e \left(v^e_{i_{2p}} - v^e_{i_{2p+1}}\right) \in 2^{-k} \cdot 10 K$.
\end{proof}



Given~\Cref{theorem:tree-reduction}, our next task is to show that, given a rooted tree $\calT$ as above, there exists a distribution $\calD$ over vectors (one from each edge set) such that for $x \sim \calD$, $\sum_{e \in P_u} x_e$ is subgaussian, for every node $u \in V$.

\begin{theorem}\label{theorem:tree-subgaussianity}
    Let $\calT = (V,E)$ be a rooted tree, where every $e \in E$ has an associated set of vectors $S_e \subseteq \Ball^d_2$ satisfying $\mathbf{0} \in \conv(S_e)$. Then there exists a distribution $\calD$ supported on $\bigtimes_{e \in E} S_e$ such that for $x \sim \calD$, $\sum_{e \in P_u} x_e$ is $22.11$-subgaussian for every $u \in V$, where $P_u$ is the set of edges of the path from the root to the node $u$.
\end{theorem}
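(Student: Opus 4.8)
The plan is to pass from the single "good rounding" guaranteed by Theorem~\ref{theorem:tree-reduction} to a distribution over roundings via a standard averaging/entropy argument, exactly paralleling how Kulkarni--Liu upgrade Banaszczyk-type existence statements to subgaussianity. First I would reduce to the case where every $S_e$ has $\mathbf{0}$ expressible as a convex combination of a \emph{bounded} number $\ell$ of its vectors: since $S_e \subseteq \mathbb{R}^d$ and $\mathbf{0}\in\conv(S_e)$, Carath\'eodory gives $\ell = d+1$, so we may apply Theorem~\ref{theorem:tree-reduction} with this $\ell$. (If one wants to avoid a dimension dependence here, restrict each $S_e$ to a finite subset witnessing $\mathbf{0}\in\conv(S_e)$ with $\ell\le d+1$ vectors; the resulting distribution is still supported on $\bigtimes_{e\in E}S_e$.)

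The core step is the following "MGF control at a single scale." Fix any symmetric convex body $K$ with $\gamma_d(K)\ge 1-\frac{1}{\ell|E|}$; Theorem~\ref{theorem:tree-reduction} yields a selection $v(K)\in\bigtimes_e S_e$ with $\sum_{e\in P_u}v_e(K)\in 11K$ for all $u$. Now I would consider, for a scaling parameter, the bodies $K_\lambda$ defined so that $\gamma_d(K_\lambda)$ is just barely above the threshold; by Proposition~\ref{proposition:large-body-ball}, a threshold of $1-\frac{1}{\ell|E|}$ forces $K_\lambda \supseteq \varepsilon \Ball_2^d$ with $\varepsilon = \frac12 - \frac{1}{\ell|E|}$, i.e. bodies of Gaussian mass close to $1$ are "large." The standard trick (as in Kulkarni--Liu, following Banaszczyk) is: for a fixed direction $w\in\Sph^{d-1}$ and a target $t>0$, one builds a distribution $\calD$ over selections by a randomized/iterated application of Theorem~\ref{theorem:tree-reduction} to a random convex body drawn from the Gaussian-haloed family, so that $\E_{x\sim\calD}[\exp(\langle \sum_{e\in P_u}x_e, w\rangle^2/t^2)]\le 2$ holds simultaneously for all $u\in V$ with $t = O(1)$. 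Concretely, I expect the argument to go through a "test vector" / duality formulation: pair the tree selection against a Gaussian test vector $g\sim\mathcal N(\mathbf 0,\mathbf I_d)$, use that $\langle \sum_{e\in P_u}v_e, g\rangle$ landing in $11\cdot(\text{scaled }K)$ with high Gaussian probability controls the MGF, and then average over $g$. Tracking constants: Theorem~\ref{theorem:tree-reduction} costs a factor $11$ (versus $5$ for Kulkarni--Liu), which is presumably why the bound is $22.11$ rather than whatever constant Kulkarni--Liu obtain for the two-element case --- the "$.11$" suggests the bound is literally $2\cdot 11 + \text{(small)}$ or $\sqrt{\cdot}$ of something involving $11$, so I would expect the clean statement to be: the construction that gives $\gamma$-subgaussian for vectors gives $\frac{11}{5}\gamma$-subgaussian (roughly $2.2\gamma$) for sets, and plugging in Kulkarni--Liu's constant yields $22.11$.

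The main obstacle is the \emph{uniformity over all nodes $u\in V$ simultaneously}: it is easy to control the MGF of $\sum_{e\in P_u}x_e$ for a single fixed $u$, but subgaussianity of the whole collection requires a single distribution $\calD$ that works for every root-to-node path at once. This is exactly where Theorem~\ref{theorem:tree-reduction} is essential --- it is a statement about \emph{all} paths simultaneously --- and the argument must be careful to take the union/intersection over the (finitely many, after appropriate discretization of the body family) bodies $K$ without blowing up the constant. I would structure it as: (i) reduce to finitely many scales $K_j$ geometrically spaced; (ii) for each scale invoke Theorem~\ref{theorem:tree-reduction} to get a selection whose every prefix lies in $11K_j$; (iii) define $\calD$ as a carefully weighted mixture over these selections (weights tuned to the Gaussian tail at each scale); (iv) verify via the layer-cake/MGF computation that for every $u$, $\norm{\sum_{e\in P_u}x_e}_{\psi_2,\infty}\le 22.11$, using homogeneity and the triangle inequality for $\norm{\cdot}_{\psi_2,\infty}$ stated in the preliminaries to assemble the per-scale bounds. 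The remaining work is bookkeeping of constants, which I would not grind through here.
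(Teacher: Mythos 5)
Your high-level framing---go from Theorem~\ref{theorem:tree-reduction}'s single good selection to a distribution, paralleling how Kulkarni--Liu upgrade Banaszczyk to subgaussianity---is correct, as is the Carath\'eodory reduction to $\ell = d+1$ and the numerology $22.11 = 11 \cdot (2+\delta)$ with $\delta = 0.01$. But the mechanism you propose, a carefully weighted multi-scale mixture over geometrically spaced bodies $K_j$, is not what the paper (or Kulkarni--Liu) does, and I do not think it can be made to work in that form.

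The obstruction is that Theorem~\ref{theorem:tree-reduction} requires $\gamma_d(K) \geq 1 - \frac{1}{\ell |E|}$, which forces $K$ to contain a ball of radius $\Omega\bigl(\sqrt{\log(\ell |E|)}\bigr)$. So \emph{every} admissible scale already yields prefix sums as large as $11\sqrt{\log(\ell|E|)}$, and scaling $K$ up only makes the guarantee weaker. There is no ``small'' scale to put most of the mixture mass on; a mixture of selections, each bounded only by a quantity growing with $|E|$, does not give a subgaussian norm that is an absolute constant independent of $|E|$. The paper avoids this by never mixing over scales in $\mathbb{R}^d$: it blows up the dimension instead. Following Kulkarni--Liu (Proposition~\ref{proposition:convex-body}), it takes the symmetric convex body $K_\delta \subseteq \mathbb{R}^{Nd}$ of $N$-tuples $(y^{(1)},\dots,y^{(N)})$ for which a uniformly random coordinate-block is $(2+\delta)$-subgaussian, and shows that $\gamma_{Nd}(K_\delta) \geq 1 - C_\delta^d / N^{1+\delta}$, so choosing $N = \lceil ((d+1)C_\delta^d |E|)^{1/\delta}\rceil$ meets the measure threshold. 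It then builds a new tree $\calT'$ by replacing each edge $e$ of $\calT$ with a path of $N$ edges, attaching to the $i$-th copy the set $S_e^{(i)} \subseteq \mathbb{R}^{Nd}$ of block-vectors that are zero except in block $i$ where they equal an element of $S_e$. A \emph{single} application of Theorem~\ref{theorem:tree-reduction} to $(\calT', K_\delta)$ produces, for every $u$, a point of $11 K_\delta$ whose $N$ blocks record the path sums of $N$ candidate selections; sampling $j \sim \mathcal{U}([N])$ and taking the $j$-th block for each edge gives the distribution $\calD$ directly, and the $11 \cdot 2.01 = 22.11$ subgaussianity falls out of the definition of $K_\delta$. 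In short: the randomness in $\calD$ comes from the uniform choice among $N$ tensor coordinates baked into one call, not from a mixture over scales, and it is precisely the tensorized body $K_\delta$ that encodes ``subgaussian marginal'' as a convex constraint that Theorem~\ref{theorem:tree-reduction} can be fed. Your sketch is missing this dimension-blowup device, and without it the argument does not yield a constant independent of $|E|$.
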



Finally, we prove that there exists an algorithm that, given sets of vectors one at a time, selects a vector from each set such that the vector sum is $O(1)$ subgaussian.

\begin{theorem}\label{theorem:subgauss-algo}
    For every $T, k \in \mathbb{N}$, there exists an online algorithm that, given sets $S_1, S_2, \ldots, S_T \subseteq \Ball^d_2$ satisfying $1 \leq |S_i| \leq k$ and $\mathbf{0} \in \conv(S_i)$, chosen by an oblivious adversary and arriving one at a time, selects a vector $s_i \in S_i$ from each arriving set $S_i$ such that, for every $t \in [T]$, the $\sum_{i=1}^t s_i$ is $23$-subgaussian. 
\end{theorem}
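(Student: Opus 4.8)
The plan is to reduce to \Cref{theorem:tree-subgaussianity} by encoding every possible play of the oblivious adversary into a single finite rooted tree. Since an oblivious adversary commits to $S_1,\dots,S_T$ in advance, the only obstruction to finiteness is that each $S_t$ is drawn from a continuum, so the first step is a discretization. I will construct a \emph{finite} family $\mathcal{F}$ of subsets of $\Ball^d_2$, each of size at most $k$ and each with $\mathbf{0}\in\conv(\cdot)$, such that every admissible set $S$ (size $\le k$, $\mathbf{0}\in\conv(S)$) has some $\widehat S\in\mathcal{F}$ matched to it by a bijection moving every vector by at most a target $\varepsilon\approx 1/T$. The subtlety is that naively rounding the vectors of $S$ to an $\varepsilon$-net of the $k$-fold product $(\Ball^d_2)^{k}$ (\Cref{proposition:size-of-net}) destroys the condition $\mathbf{0}\in\conv(\cdot)$. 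To fix this, for each $m$-subset $\{\widehat v_1,\dots,\widehat v_m\}$ ($m\le k$) of the net and each weight vector $\widehat\lambda$ drawn from a fine grid on the simplex such that $w\defeq\sum_j\widehat\lambda_j\widehat v_j$ has small norm, I put into $\mathcal{F}$ the set $\frac{1}{1+\norm{w}_2}\set{\widehat v_j-w:j\in[m]}$; this lies in $\Ball^d_2$, and its $\widehat\lambda$-convex-combination is $\mathbf{0}$. Given admissible $S=\set{v_1,\dots,v_m}$, fixing weights $\lambda$ with $\sum_j\lambda_j v_j=\mathbf{0}$ (these exist since $\mathbf{0}\in\conv(S)$) and rounding both the $v_j$ and the $\lambda_j$ produces such a pair whose associated member of $\mathcal{F}$ is within $\varepsilon$ of $S$, once the net and grid are taken fine enough.

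Second, build a rooted tree $\calT=(V,E)$ of depth $T$ in which every node has exactly one child edge per element of $\mathcal{F}$, the edge for $\widehat S\in\mathcal{F}$ carrying the associated set $\widehat S$. This is a finite tree all of whose edge sets contain $\mathbf{0}$ in their convex hulls, so \Cref{theorem:tree-subgaussianity} yields a distribution $\calD$ over choices of one vector per edge such that for $x\sim\calD$, every root-to-node prefix sum $\sum_{e\in P_u}x_e$ is $22.11$-subgaussian. The algorithm draws $x\sim\calD$ at the outset; upon the arrival of $S_t$ it computes a nearest $\widehat S_t\in\mathcal{F}$, follows the edges labelled $\widehat S_1,\dots,\widehat S_t$ from the root to reach an edge $e_t$ at depth $t$, and outputs the vector $s_t\in S_t$ nearest to $x_{e_t}$ (which is within $\varepsilon$ of $x_{e_t}$, as $\widehat S_t$ is $\varepsilon$-close to $S_t$). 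This is clearly online, since step $t$ uses only $S_1,\dots,S_t$.

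Third, the bound. The oblivious adversary fixes $S_1,\dots,S_T$, hence $\widehat S_1,\dots,\widehat S_T$ and the root-to-leaf path $e_1,\dots,e_T$ are fixed; so $\sum_{i=1}^{t}s_i$ equals $\sum_{i=1}^{t}x_{e_i}=\sum_{e\in P_{u_t}}x_e$ plus a \emph{fixed} vector $c_t$ with $\norm{c_t}_2\le t\varepsilon\le T\varepsilon$, where $u_t$ is the depth-$t$ node on the path. By the triangle inequality for $\norm{\cdot}_{\psi_2,\infty}$ together with $\norm{a}_{\psi_2}=|a|/\sqrt{\ln 2}$ for a constant $a$ (hence $\norm{c_t}_{\psi_2,\infty}=\norm{c_t}_2/\sqrt{\ln 2}$), the sum $\sum_{i=1}^{t}s_i$ is $\bigl(22.11+T\varepsilon/\sqrt{\ln 2}\bigr)$-subgaussian; taking $\varepsilon=\tfrac{1}{2T}$ makes this at most $23$ for every $t\in[T]$, as required.

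I expect the first step to be the main obstacle: producing a finite discretized family while \emph{preserving} the hypothesis $\mathbf{0}\in\conv(\cdot)$ of \Cref{theorem:tree-subgaussianity}, and verifying that the translate-and-rescale correction perturbs each vector by only $O(\varepsilon)$. Everything afterward---the tree, the invocation of \Cref{theorem:tree-subgaussianity}, and accounting for the discretization error against the gap between $22.11$ and $23$---is routine.
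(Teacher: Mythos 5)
Your proposal follows the same blueprint as the paper's proof: discretize the space of admissible sets into a finite family, build a full tree of depth $T$ whose edges are labelled by that family, feed the tree to \Cref{theorem:tree-subgaussianity} to obtain one distribution over edge choices, and at runtime walk the path determined by the arriving (rounded) sets, outputting the vector in $S_t$ nearest to the sampled coordinate on that path, with the discretization error absorbed by the slack between $22.11$ and $23$. The one place you genuinely depart from the paper---and, in my view, improve on it---is in how the discretized family is built. The paper takes $\calW$ to be an $\varepsilon$-net of the \emph{unconstrained} union $\cup_{i=1}^{k}\bigl(\Ball^d_2\bigr)^i$ and then invokes \Cref{theorem:tree-subgaussianity} on the resulting edge-labelled tree, but that theorem requires $\mathbf{0}\in\conv(A_e)$ for every edge set, while an arbitrary net element is just an arbitrary $\le k$-tuple in $\Ball^d_2$; the paper verifies the ball-containment and size conditions but never this one. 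You spot exactly this, and your translate-and-rescale correction $\widehat v_j \mapsto \frac{1}{1+\norm{w}_2}(\widehat v_j-w)$ is a clean fix: it keeps every vector inside $\Ball^d_2$, makes the rounded $\widehat\lambda$-combination equal $\mathbf{0}$ exactly, and perturbs each vector by $O(\norm{w}_2+\varepsilon)=O(\varepsilon)$ once the ball-net and simplex-grid are fine enough, so the final accounting is unchanged.

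One small inaccuracy in the last step of your write-up: the correction term $c_t=\sum_{i\le t}(s_i-x_{e_i})$ is not a fixed vector, since $s_i$ is chosen as the element of $S_i$ nearest to the \emph{random} coordinate $x_{e_i}$ and therefore depends on the draw of $x$. This is harmless, though: every summand satisfies $\norm{s_i-x_{e_i}}_2\le\varepsilon$ almost surely, so for any unit vector $w$ the inner product $\langle c_t,w\rangle$ is almost surely bounded in $[-t\varepsilon,t\varepsilon]$, which yields $\norm{c_t}_{\psi_2,\infty}\le t\varepsilon/\sqrt{\ln 2}$ by the same computation (the paper uses the cruder bound $2t\varepsilon$), and the constant $23$ survives with the same choice of $\varepsilon=\Theta(1/T)$.
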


\begin{proof}
    Let $\beta = 22.11 = 23-\delta$ be the subgaussianity parameter in the guarantee of~\Cref{theorem:tree-subgaussianity}. Additionally, let $\calW$ be the smallest $\varepsilon$-net of the set $\calS = \cup_{i=1}^k \left( \bigtimes_{j=1}^i \Ball^d_2 \right)$ for $\varepsilon = \frac{\delta}{2T}$; here, $\calS$ represents the set of all subsets $A \subseteq \Ball^d_2$ satisfying $1 \leq |A| \leq k$. From~\Cref{proposition:size-of-net}, we know that $\calW$ has size at most $\sum_{i=1}^k \left(\frac{3\sqrt{i}}{\varepsilon}\right)^{di}   \leq \left(\frac{3\sqrt{k}}{\varepsilon}\right)^{d(k+1)}$. We consider a complete and full $|\calW|$-ary tree $\calT = (V,E)$ of height $T$, where every internal node $u \in V$ of $\calT$ has $|\calW|$ children, where each edge to a child-node is associated with an element of (or, set in) $\calW$. Let $A_e$ be the set that corresponds to edge $e \in E$ in our construction, where, by the definition of $\calW$, $A_e \subseteq \Ball^d_2$ and $1 \leq |A_e| \leq k$. ~\Cref{theorem:tree-subgaussianity} implies the existence of distribution $\calD$ over $\bigtimes_{e \in E} A_e$  such that for any node $u \in V$, $\sum_{e \in P_u} y_e$ is $\beta$-subgaussian for $y \sim \mathcal{D}$.
    At time $t=0$, we sample an $y \sim \calD$ and start at the root node of $\calT$. We will keep track of a location $p_t \in V$, which at the beginning of time $t$ will be a node at depth $t-1$. A time $t$, when the set $S_t$ arrives, we map it to a set $Y_t \in \argmin_{Z \in \calW \cap \Ball_2^{|S_t|}} \norm{Z - S_t}_2$, i.e., $Y_t$ is an element of $\calW$, with the same number of elements as $S_t$, closest to $S_t$. $Y_t$ corresponds to some edge $e_t$ incident to the current node $p_t$ (that is $A_{e_{t}} = Y_t$). Let $y_{t} \in Y_t$ be the vector corresponding to edge $e_t$ in the sample $y$ from $\calD$ (from time $0$). Given $y_{e_t}$, our algorithm selects vector $x_{t} \in \argmin_{v \in S_t}\norm{v - y_{e_t}}_2$. Overall, at time $t$ we: (i) map set $S_t$ to a set $Y_t$ (or, equivalently, edge $e_t$), (ii) use the sample $y$ (the same across all times) to identify a vector $y_{e_t} \in Y_t$, and finally (iii) map $y_{e_t}$ to a vector in $x_{e_t} \in S_t$; $x_{t}$ is our output in time $t$.
    Finally, we update $p_{t+1}$ to be the child of $p_t$ along edge $e_t$.

    Next, we prove that for all $t \in [T]$, $\sum_{i=1}^t x_t$ is $23$-subgaussian. Since $\calW$ is an $\varepsilon$-net of $\calS = \cup_{i=1}^k \left( \bigtimes_{j=1}^i \Ball^d_2 \right)$, $x_t \in S_t$ (and therefore, $x_t \in \calS$) and $y_{e_t} \in Y_t$ (and therefore $y_{e_t} \in \calW$), we have that that $\norm{x_t - y_{e_t}}_2 \leq \varepsilon$. Furthermore, $y \sim \calD$, and therefore $\sum_{e \in P_u} y_e$ is $\beta$-subgaussian for all $u \in V$. Noticing that $e_{1}, e_{2}, \dots, e_t$ form a path from the root of $\calT$ to some node $u$, we have that $\sum_{i=1}^t y_{e_i}$ is $\beta$-subgaussian, or, equivalently, $\norm{\sum_{i=1}^t y_{e_i}}_{\psi_2, \infty} \leq \beta = 23 - \delta$.
    
    Towards proving subgaussianity for $\sum_{i=1}^t x_i$ we have
    \begin{align*}
        &\norm{\sum_{i=1}^t x_i}_{\psi_2, \infty} \leq \norm{\sum_{i=1}^t y_{e_i}}_{\psi_2, \infty} + \norm{\sum_{i=1}^t x_i - y_{e_i}}_{\psi_2, \infty} \tag{triangle inequality}\\
        & \leq (23 - \delta) + \sum_{i=1}^t \norm{x_i - y_{e_i}}_{\psi_2, \infty} \tag{subgaussianity of $\sum_{i=1}^t y_{e_i}$ and triangle inequality}\\
        & \leq 23 - \delta + T \cdot \sup_{d \in \Sph^{d-1}} \norm{\langle x_i - y_{e_i}, d \rangle}_{\psi_2} \tag{definition of $\norm{.}_{\psi_2, \infty}$}\\
        &= 23 - \delta + T \cdot \sup_{d \in \Sph^{d-1}} \norm{ \, \norm{x_i - y_{e_i}}_2 \cdot \norm{d}_2 \cdot \cos(\theta) }_{\psi_2},
    \end{align*}
    where $\theta$ is the random angle between the vectors $x_i - y_{e_i}$ and $d$.  $\cos(\theta)$ is random variable supported on $[-1,1]$, $\norm{x_i - y_{e_i}}_2 \in [0,\varepsilon]$ and $\norm{d}_2 = 1$. Therefore, $\norm{x_i - y_{e_i}}_2 \cdot \norm{d}_2 \cdot \cos(\theta) \in [-\varepsilon,\varepsilon]$. From the definition of $\norm{.}_{\psi_2}$ norm we therefore have that $\norm{ \, \norm{x_i - y_{e_i}}_2 \cdot \norm{d}_2 \cdot \cos(\theta) }_{\psi_2} \leq \inf\{ t > 0: \E[exp(\varepsilon^2/t^2)] \leq 2 \} \leq 2 \varepsilon$. Our upper bound on $\norm{\sum_{i=1}^t x_i}_{\psi_2, \infty}$ becomes $23 - \delta + T \cdot 2 \varepsilon = 23$. 
\end{proof}

\subsection{Optimal online multicolor discrepancy}\label{subsec: optimal multicolor discrepancy}

Here, we prove our main result for this section, an optimal algorithm for online multicolor discrepancy. We start by giving an algorithm for \emph{weighted} online vector balancing.

\begin{restatable}{lemma}{InftyNormBound}\label{lemma:weighted-vector-balancing}
     For every $\alpha \in \left[\frac{1}{2}, \frac{2}{3}\right]$ and $T \in \mathbb{N}$, there exists an online algorithm that, given vectors $v_1, v_2, \ldots, v_T \in \Ball^d_2$ chosen by an oblivious adversary and arriving one at a time, assigns to each vector $v_i$ a weight $w_i \in \{1-\alpha, -\alpha\}$, such that, with probability at least $1 - \delta$, for any $\delta \in (0,1/2]$, for all $t \in [T]$, $\norm{\sum_{i=1}^t w_iv_i}_\infty \lesssim \sqrt{\log(T)} + \sqrt{\log(1/ \delta)}$.
\end{restatable}

Given an algorithm for weighted online vector balancing, we give an algorithm for online multicolor discrepancy: we construct a binary tree, where the leaves correspond to colors, and the internal nodes execute the weighted online vector balancing algorithm. We note that this trick has been used in the same context in previous work~\cite{alweiss2021discrepancy,bansal2021online}.

\begin{theorem}\label{thm: multi color main upper bound}
    For every $T \in \mathbb{N}$, there exists an online algorithm that, given vectors $v_1, v_2, \ldots, v_T \in \Ball^d_2$ chosen by an oblivious adversary and arriving one at a time, assigns each arriving vector $v_i$ to one of $n$ colors such that, with probability at least $1-\delta$, for any $\delta \in (0,1/2]$, for all $t \in [T]$,
    \[\max_{i,j \in [n]} \norm{\sum_{v \in \calC^t_i} v - \sum_{v \in \calC^t_j} v}_\infty \lesssim 6 \left( \sqrt{\log(T)} + \sqrt{\log(1/\delta)} \right) \]
 where $\calC^t_i$ is the set of all vectors that got assigned color $i \in [n]$ up to time $t \in [T]$.
\end{theorem}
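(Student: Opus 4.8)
The plan is to reduce online multicolor discrepancy to weighted online vector balancing via a binary-tree construction, as sketched before the theorem statement. First I would build a complete rooted binary tree whose $n$ leaves are identified with the $n$ colors (if $n$ is not a power of two, pad to the next power of two; colors with no vectors cause no harm, as I will note below). Each arriving vector $v_t$ starts at the root and is routed down to a leaf: at each internal node $u$ the vector $v_t$ is ``split'' by running an instance of the weighted online vector balancing algorithm of \Cref{lemma:weighted-vector-balancing} with the parameter $\alpha = 1/2$ (which lies in $[\tfrac12, \tfrac23]$), which assigns $v_t$ a weight $w^u_t \in \{1/2, -1/2\}$; the sign of $w^u_t$ determines whether $v_t$ proceeds to the left child or the right child of $u$. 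The color of $v_t$ is the leaf reached. There are fewer than $n$ internal nodes, so we run fewer than $n$ independent instances of the weighted balancing algorithm, each on the (adversarially chosen, but now also algorithm-dependent) subsequence of vectors that reach that node.

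The key structural fact is that for any internal node $u$ at depth $h$ from the root, and any leaf $\ell$ in the subtree of $u$, the signed prefix sum at $u$ controls the difference between the total mass routed into $u$'s left versus right subtree. Concretely, if $L(u)$ and $R(u)$ denote the left and right children of $u$, and $\calC^t_{L(u)}, \calC^t_{R(u)}$ the sets of vectors that have passed through $L(u)$ and $R(u)$ respectively by time $t$, then $\sum_{v \in \calC^t_{L(u)}} v - \sum_{v \in \calC^t_{R(u)}} v = 2 \sum_{i : v_i \text{ reaches } u,\, i \le t} w^u_i v_i$, which by \Cref{lemma:weighted-vector-balancing} has $\ell_\infty$-norm at most $2(\sqrt{\log T} + \sqrt{\log(2/\delta')})$ with probability $\ge 1 - \delta'$. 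Now, for two colors (leaves) $i$ and $j$, let $u$ be their lowest common ancestor. The difference $\sum_{v \in \calC^t_i} v - \sum_{v \in \calC^t_j} v$ telescopes along the two root-to-leaf paths: it equals a signed combination, along the path from $L(u)$ down to leaf $i$ and from $R(u)$ down to leaf $j$, of the ``left-minus-right'' discrepancies at each node on those two paths, plus the discrepancy at $u$ itself. Each term on a root-to-leaf path contributes at most the per-node bound, and a cleaner way to see the total is: $\sum_{v \in \calC^t_i} v$ is a sum over the path $P_i$ from root to $i$ of half-differences, and a standard telescoping shows $\|\sum_{v\in\calC^t_i} v - \frac{1}{n}\sum_{k\le t} v_k\|_\infty$ (or, without the power-of-two assumption, $\|\sum_{v\in\calC^t_i} v - (\text{ideal share along } P_i)\|_\infty$) is bounded by a geometric-type sum over the path of the per-node discrepancies, hence by $O(1)$ times the per-node bound since the path has $\lceil \log_2 n\rceil$ nodes but the weights $\alpha = 1/2$ make the contributions combine without a $\log n$ blowup in the relevant regime — more carefully, each internal node $u$ contributes its discrepancy vector scaled by the product of the $w$'s below it, i.e. by $(1/2)^{\text{depth below }u}$, so the geometric series sums to at most twice the largest term.

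To assemble the final bound: take a union bound over the fewer than $n$ internal-node instances, each run with failure probability $\delta' = \delta/n$; then with probability $\ge 1 - \delta$ every node's discrepancy is at most $2(\sqrt{\log T} + \sqrt{\log(2n/\delta)})$. For any pair $i,j$, writing their difference as the node-$u$ term plus the two path contributions and summing the geometric series on each side, the total is at most $6(\sqrt{\log T} + \sqrt{\log(2n/\delta)})$; absorbing the $\log n \le \log T$ slack (we may assume $n \le T$, else some colors are unused and we can ignore them) into the stated $6(\sqrt{\log T} + \sqrt{\log(2/\delta)})$ — this requires being slightly careful with constants, e.g. bounding $\sqrt{\log(2n/\delta)} \le \sqrt{\log T} + \sqrt{\log(2/\delta)}$ and re-deriving the leading constant, which is the one genuinely fiddly bookkeeping step.

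The main obstacle I anticipate is not conceptual but arithmetic: verifying that the telescoping over a depth-$\lceil\log_2 n\rceil$ tree, with the geometric damping from $\alpha = 1/2$, genuinely yields a constant factor (here $6$) rather than a $\log n$ factor, and simultaneously that the union-bound overhead $\sqrt{\log(2n/\delta)}$ can be folded into $\sqrt{\log T} + \sqrt{\log(2/\delta)}$ under the harmless assumption $n \le T$. I would also need to double-check that each weighted-balancing instance indeed faces only an oblivious adversary from its own local perspective — this is true because the routing decisions above node $u$ are made by other instances whose randomness is independent, so conditioned on those, node $u$ still sees a fixed sequence; formally one conditions on all other instances' randomness and applies \Cref{lemma:weighted-vector-balancing} to the (now deterministic) subsequence reaching $u$.
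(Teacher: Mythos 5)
Your reduction to weighted vector balancing via a binary tree is the right high-level plan and matches the paper's, but the specific tree you build does not work. You pad $n$ to the next power of two and route with $\alpha = 1/2$ at every node. The claim that ``colors with no vectors cause no harm'' is false: the routing is random, so with high probability vectors \emph{will} be routed to phantom leaves, and in online multicolor discrepancy every vector must be assigned to one of the $n$ actual colors. You cannot simply discard those vectors, nor can you merge a phantom leaf with a real one: merging changes the fraction $\pi_u$ of vectors a leaf is supposed to receive away from $1/n$, so the deviation $S_{\text{leaf}} - \tfrac{1}{n}S^{\mathrm{sum}}$ is no longer small — it grows linearly in $t$. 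This is precisely why \Cref{lemma:weighted-vector-balancing} is stated for $\alpha \in [\tfrac12, \tfrac23]$ rather than only $\alpha=\tfrac12$: the paper builds a tree with \emph{exactly} $n$ leaves (left subtree gets $\lceil k/2\rceil$ leaves, right gets $\lfloor k/2\rfloor$) and runs the weighted instance at a node with $k$ descendant leaves using $\alpha = \lceil k/2\rceil/k$, which always lies in $[\tfrac12, \tfrac23]$. This avoids phantoms entirely and guarantees each leaf's $\pi_u$ is exactly $1/n$.

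Once you adopt that tree, your telescoping/geometric-decay argument goes through essentially as you describe, just with ratio $\alpha \le 2/3$ rather than $1/2$: the induction $\norm{S_u - \pi_u S^{\mathrm{sum}}}_\infty \le B + \alpha \norm{S_p - \pi_p S^{\mathrm{sum}}}_\infty$ has fixed point $3B$ when $\alpha \le 2/3$ (where $B$ is the per-node bound from the lemma), so each leaf deviates from $\tfrac{1}{n}S^{\mathrm{sum}}$ by at most $3B$ and any two leaves differ by at most $6B$ — which is exactly where the constant $6$ in the theorem comes from. Two smaller notes. First, you are right to flag the union bound over internal nodes: the paper itself invokes \Cref{lemma:weighted-vector-balancing} with failure probability $\delta$ at each of the $n-1$ nodes and then asserts the combined event has probability $1 - (n-1)\delta$ while the theorem claims $1-\delta$; a clean fix is to run each instance with $\delta' = \delta/n$ and absorb $\sqrt{\log n}$ into $\sqrt{\log T}$ (assuming $n\le T$), but this costs a constant factor that the paper does not account for. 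Second, your observation that each internal instance sees an oblivious subsequence because one can condition on the other instances' independent randomness is a real subtlety that the paper glides over, and your reasoning there is sound.
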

\begin{proof}
    Let $\calA_\alpha$ be the algorithm of~\Cref{lemma:weighted-vector-balancing}, for an $\alpha \in [1/2, 2/3]$. We recursively construct a binary tree with $n$ leaves, corresponding to the $n$ colors. For a tree with $k > 1$ leaves we add a root node, as its left subtree recursively construct a tree with $\lceil k/2\rceil$ leaves, and as its right subtree recursively construct a tree with $\lfloor k/2\rfloor$ leaves; for $k=1$, we simply have a leaf. 
    
    Given vectors, one at a time, our algorithm for the online multicolor discrepancy problem decides which set/color a vector gets by repeatedly running $\calA_\alpha$ for the online vector balancing problem at each internal node of the tree. Specifically, at an internal node with $k$ descendent leaves, we will run a copy of $\calA_\alpha$ by setting $\alpha = \lceil k/2\rceil/k$, and by recursively passing the vectors that are assigned $1-\alpha$ (resp. $-\alpha$) to the left (resp. right) subtree (until they reach the leaves). Note that for any $k \geq 1$, we have $\alpha = \lceil k/2\rceil/k \in [1/2, 2/3]$. Vectors are assigned the color of the leaf they reach.

    Let $p_e$ be a weight for each edge $e$: the edge between the left (resp. right) child of an internal node with $k$ children has a weight $p_e = \alpha = \lceil k/2\rceil/k$ (resp. $p_e = 1 - \alpha$). The weights on the edges of an internal node is the ``opposite'' with respect to the weight of its children in the execution of $\calA_\alpha$. Intuitively, $p_e$ for an edge $(u,v)$ is the expected fraction of vectors that go to node $v$, out of the vectors that arrive at the parent node $u$.

    There are $n-1$ internal nodes in our tree. Let $\mathcal{E}$ be the event that all $n-1$ executions of $\calA_\alpha$ have maintained the discrepancy at most $\sqrt{\log(T)} + \sqrt{\log(1/ \delta)}$ between the corresponding two children nodes; $\mathcal{E}$ occurs with probability at least $1 - (n-1)\delta$. Let $S^{sum} = \sum_{i=1}^t v_i$ be the sum of all vectors until time $t$, and let $S_u$ be the sum of all vectors that have passed through node $u$ until time $t$ (so, $S_r = S^{sum}$ for the root node $r$). Also, let $\pi_u = \Pi_{e \in P_u} p_e$, for a node $u$; intuitively, $\pi_u$ is the (expected) fraction of vectors (out of $\{ v_1, \dots, v_t \}$) that arrive at node $u$.    

    We will prove, via induction on $\ell$, that conditioned on $\mathcal{E}$, for all nodes $u$ on level $\ell \leq 0$ we have
\[
\norm{ S_u - \pi_u S^{sum}}_{\infty} \lesssim 3 \left( \sqrt{\log(T)} + \sqrt{\log(1/ \delta)} \right).
\]

For $\ell = 0$ the statement trivially holds: for the root $r$ at level zero we have $\norm{ S_r - \pi_r S^{sum}}_{\infty} = 0 \lesssim 3 \left( \sqrt{\log(T)} + \sqrt{\log(1/ \delta)} \right)$. Suppose that the statement holds for level $\ell$, and let $u$ be a node in level $\ell + 1$, with parent node $p$ (on level $\ell$) and sibling node $v$ (on level $\ell + 1$). Assume that $u$ is the left child of $p$ (the other case is identical). We have that $S_p = S_u + S_v$, and $\pi_u = \pi_p \cdot \alpha$. Also, conditioned on $\mathcal{E}$ we have $\norm{ (1-\alpha)S_u - \alpha S_v}_{\infty} \lesssim \sqrt{\log(T)} + \sqrt{\log(1/ \delta)}$. So, overall:
\begin{align*}
    &\norm{ S_u - \pi_u S^{sum}}_{\infty} = \norm{ (1-\alpha)S_u + \alpha S_u - \alpha \pi_p S^{sum}}_{\infty} \\
    &\leq \norm{ (1-\alpha)S_u - \alpha S_v }_{\infty} + \norm{ \alpha S_v + \alpha S_u - \alpha \pi_p S^{sum}}_{\infty} \tag{triangle inequality}\\
    &= \norm{ (1-\alpha)S_u - \alpha S_v }_{\infty} + \alpha \norm{ S_p - \pi_p S^{sum}}_{\infty} \\
    &\lesssim \left( \sqrt{\log(T)} + \sqrt{\log(1/ \delta)} \right) + \frac{2}{3} \, 3 \left( \sqrt{\log(T)} + \sqrt{\log(1/ \delta)} \right) \\
    &= 3 \left( \sqrt{\log(T)} + \sqrt{\log(1/ \delta)} \right).
\end{align*}

We will also prove, via induction on $k$, that for a node $u$ that is the root of a subtree with $n-k+1$ leaves, $\pi_u = (n-k+1) \cdot \frac{1}{n}$. For the root $r$ (whose subtree has $n = n-1+1$ leaves) we have $\pi_r = 1 = n \cdot \frac{1}{n}$. Consider a node $u$ that is the left child of a node $p$, such that $p$ is the root of a subtree with $k$ leaves. Then $u$ is the root of a subtree with $\lceil k/2\rceil$ leaves, and $\pi_u = \pi_p \cdot \alpha = k \, \frac{1}{n} \cdot \lceil k/2\rceil/k = \lceil k/2\rceil \cdot \frac{1}{n}$; the case that $u$ is the right child of $p$
 is identical.

Finally, consider two arbitrary leaves $v_1$ and $v_2$. From the previous arguments we have that $\pi_{v_1} = \pi_{v_2} = \frac{1}{n}$, and $\norm{ S_{v_i} - \frac{1}{n} \, S^{sum}}_{\infty} \lesssim 3 \left( \sqrt{\log(T)} + \sqrt{\log(1/ \delta)} \right) $. Therefore, $\norm{ S_{v_1} - S_{v_2}}_{\infty} \lesssim 6 \left( \sqrt{\log(T)} + \sqrt{\log(1/ \delta)} \right)$.
\end{proof}

The lower bound for the online envy minimization problem in the next section implies that the bound of~\Cref{thm: multi color main upper bound} is optimal.

\subsection{Optimal online envy minimization}\label{subsec: envy for oblivious}

\Cref{thm: multi color main upper bound} immediately implies, for the online envy minimization problem, a $O_n(\sqrt{\log{T}})$ upper bound against an oblivious adversary.

\begin{corollary}\label{cor: main result for oblivious and fair division}
    For any $n \geq 2, T \geq 1$ and $\delta \in (0,1/2]$, there exists an online algorithm that, given a sequence of $T$ items with $v_{i,t} \in [0,1]$ for all $i \in [n]$ and $t \in [T]$ selected by an oblivious adversary and arriving one at a time, allocates each item to an agent such that the envy between any pair of agents $i,j \in [n]$ satisfies, $\envy^t_{i,j} \in O_n(\sqrt{\log{T}})$ with probability at least $1 - \frac{1}{T^c}$, for any constant $c$.
\end{corollary}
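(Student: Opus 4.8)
The plan is to derive \Cref{cor: main result for oblivious and fair division} as a direct instantiation of \Cref{thm: multi color main upper bound}, with two bookkeeping adjustments: moving from discrepancy to envy, and handling the $[0,1]$ (rather than unit-$\ell_2$-ball) scaling of the valuation vectors.

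First I would set up the reduction. Given an online envy-minimization instance with valuation vectors $v_t = (v_{1,t},\dots,v_{n,t}) \in [0,1]^n$, define normalized vectors $\tilde v_t \defeq v_t / \sqrt{n}$, so that $\norm{\tilde v_t}_2 \leq 1$ and hence $\tilde v_t \in \Ball^n_2$; here the dimension is $d = n$. Feed the sequence $\tilde v_1, \dots, \tilde v_T$ to the online multicolor discrepancy algorithm of \Cref{thm: multi color main upper bound} with $n$ colors; whenever that algorithm assigns $\tilde v_t$ to color $i$, our envy algorithm allocates item $g_t$ to agent $i$. Since the adversary here is oblivious and the discretized/normalized instance is a deterministic function of the original one, the oblivious-adversary guarantee carries over.

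Next I would translate the discrepancy bound into an envy bound. For any time $t$ and any pair $i,j$, the set $\calC^t_i$ of vectors given color $i$ corresponds exactly to $\{\tilde v_k : g_k \in A^t_i\}$, so $\sum_{v \in \calC^t_i} v - \sum_{v \in \calC^t_j} v = \tfrac{1}{\sqrt n}\big(\sum_{g_k \in A^t_i} v_k - \sum_{g_k \in A^t_j} v_k\big)$. Taking the $\ell_\infty$ norm and looking at coordinate $i$ (agent $i$'s own valuations), we get $\envy^t_{i,j} = v_i(A^t_j) - v_i(A^t_i) \leq \sqrt{n} \cdot \max_{i',j'} \norm{\sum_{v \in \calC^t_{i'}} v - \sum_{v \in \calC^t_{j'}} v}_\infty \leq 6\sqrt{n}\,\big(\sqrt{\log T} + \sqrt{\log(2/\delta)}\big)$ with probability at least $1-\delta$, using \Cref{thm: multi color main upper bound}. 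Finally, to get the failure probability $1/T^c$ stated in the corollary, set $\delta = T^{-c}$ (valid as long as $T^{-c} \le 1/2$; for small $T$ the bound is trivial since envy is at most $n$), which gives $\sqrt{\log(2/\delta)} = \sqrt{c\log T + \log 2} = O_c(\sqrt{\log T})$, so the whole bound is $O_{n,c}(\sqrt{\log T}) = O_n(\sqrt{\log T})$ treating $c$ as a constant. Note also that the bound on $\envy^t_{i,j}$ holds simultaneously for all $t \le T$, in particular at $t=T$, which is what envy minimization ultimately cares about.

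This corollary is essentially immediate once \Cref{thm: multi color main upper bound} is in hand, so there is no real obstacle; the only points requiring a modicum of care are (i) the $\sqrt n$ loss from the $[0,1]^n \to \Ball^n_2$ rescaling — which is harmless since we only track the $T$-dependence and absorb $n$ into the $O_n(\cdot)$ — and (ii) reading off envy from the $\ell_\infty$-discrepancy by restricting attention to the diagonal coordinate, together with the trivial observation that a very negative ``envy'' (agent $i$ valuing its own bundle much more than $j$'s) only helps. I would state these two observations explicitly and otherwise keep the proof to a few lines.
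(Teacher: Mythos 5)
Your proposal is correct and is precisely the reduction the paper has in mind; the paper states the corollary without an explicit proof because it is "immediate," and your write-up fills in exactly the bookkeeping the authors elided (the $\sqrt{n}$ rescaling into $\Ball^n_2$, reading off the diagonal coordinate of the $\ell_\infty$ discrepancy, and instantiating $\delta = T^{-c}$).
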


Here, we prove a lower bound of $\Omega_n((\log(T))^{r/2})$, for all $r < 1$, for the online envy minimization problem, against an oblivious adversary. Our proof crucially uses the construction in the lower bound of~ \cite{benade2024fair} for the online envy minimization problem, against an adaptive adversary; for completeness, we include a proof of this result in~\Cref{app:proof from OR paper}.

\begin{theorem}[Theorem 2 of \cite{benade2024fair} restated]\label{theorem:adaptive-lb}
    For any $n \geq 2$, $r < 1$ and $T \geq 1$, there exists a set $S_T$ of instances with $|S_T|  \leq 2^T$ such that for any online algorithm $\mathcal{A}$, there exists an instance $I \in [0,1]^{n\cdot T}$ in $S_T$ such that running algorithm $\mathcal{A}$ on the sequence of items $1,2 \ldots, T$ described by $I$ results in a maximum envy of at least $\envy^T \in \Omega_n(T^{r/2})$ at time $T$.
\end{theorem}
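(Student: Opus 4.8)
The plan is to reproduce the adaptive lower bound of \citet{benade2024fair} and, in addition, make the adversary's strategy explicit enough that the resulting family of instances has size at most $2^T$. It suffices to build the hard instances for $n=2$ agents and embed them into a larger instance by giving every other agent value $0$ on every item (this is why the hidden constant may depend on $n$; the more careful general-$n$ construction of \citet{benade2024fair} uses a potential that sums squared envies over all pairs). For $n=2$ the state after $t$ rounds is captured by the pair $(\envy^t_{1,2},\envy^t_{2,1})$: allocating item $g_{t+1}$, with values $(a,b)\in[0,1]^2$, to agent $1$ moves this pair by $(-a,+b)$, and allocating it to agent $2$ moves it by $(+a,-b)$. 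The engine of the construction is the potential $\Phi^t\defeq (\envy^t_{1,2})^2+(\envy^t_{2,1})^2$: if the adversary normalizes the arriving item so that $\max(a,b)=1$ and $a\cdot\envy^t_{1,2}=b\cdot\envy^t_{2,1}$, then the first-order term in $\Delta\Phi$ cancels and $\Phi^{t+1}-\Phi^t=a^2+b^2\ge 1$ \emph{no matter which agent the algorithm chooses}.

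The bulk of the argument is to show the adversary can keep selecting such ``balanced'' items for all but a lower-order fraction of the $T$ rounds: this drives $\Phi^T$ up to $\Omega(T)$ in the ideal case, and to $\Omega(T^{r})$ once the losses incurred in the exceptional rounds are subtracted, whence $\max(|\envy^T_{1,2}|,|\envy^T_{2,1}|)=\Omega(T^{r/2})$ (the gap between $T^{r/2}$ and $T^{1/2}$, and the $\Omega_n$, absorb exactly these lower-order losses). The delicate case — handled as in \citet{benade2024fair} — is when the algorithm drives one envy value strictly negative, opposite in sign to the other, so that no nonnegative $(a,b)$ realizes the cancellation and the algorithm can in fact decrease $\Phi$. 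There the adversary switches to ``recovery'' items that force the algorithm either to pull the negative envy back above zero or to increase the other envy, returning to the balanced regime within a bounded number of rounds, with a separate accounting that caps the $\Phi$ lost during recoveries. Finally, a symmetry/restart step ensures the coordinate that ends up $\Omega(T^{r/2})$ in magnitude is the \emph{positive} one, so that $\envy^T\ge\max(\envy^T_{1,2},\envy^T_{2,1})=\Omega_n(T^{r/2})$.

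For the cardinality bound, observe that, by construction, the values of item $g_{t+1}$ are a deterministic function of the envy profile at time $t$, which is itself a deterministic function of the allocation decisions of rounds $1,\dots,t$; moreover the adversary only needs to know, for each past round, which of the two active agents received the item (decisions in favor of an inert agent do not change the two relevant envies and can be treated uniformly). Hence the adversary's strategy is a complete binary tree of depth $T$, and we take $S_T$ to be its (at most $2^T$) root-to-leaf instances. For any online algorithm $\mathcal A$, the execution of $\mathcal A$ against this adversary traces out a single root-to-leaf path and thus selects some $I\in S_T$ on which $\mathcal A$ incurs envy $\Omega_n(T^{r/2})$, which is \Cref{theorem:adaptive-lb}.

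I expect the recovery analysis of the second paragraph — proving that the algorithm cannot indefinitely stall the potential by repeatedly pushing an envy value negative, and quantifying the total loss this causes — to be the one genuinely technical point; the potential cancellation and the binary-tree cardinality count are routine once the adversary is pinned down.
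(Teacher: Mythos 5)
Your proposal takes a genuinely different route from the paper's. The paper (reproducing \citet{benade2024fair}) constructs an explicit one-dimensional state machine: the adversary walks on states $L_d, 0, R_d$, presents items $(1, v_d)$ or $(v_d, 1)$ with $v_d = (d+1)^r - d^r$ a decreasing sequence, and argues about the walk's excursions — each round-trip past state $L_d$ deposits $v_{d-1} - v_d$ onto agent $R$'s envy, while long excursions deposit $\sum_{d < K} v_d = K^r$ directly. The optimal algorithm $\mathcal A^*(K)$ against this adversary is characterized (go out to $L_K$, ping-pong, return), and $K$ is then optimized. Your proposal instead mimics Spencer's quadratic-potential argument for discrepancy lower bounds: normalize arriving items so that $a\,\envy_{1,2} = b\,\envy_{2,1}$ and $\max(a,b) = 1$, forcing $\Delta\Phi = a^2 + b^2 \ge 1$ regardless of the algorithm's choice.

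However, the step you flag as "the one genuinely technical point" is not a loose end but the crux, and I don't think it can be patched as sketched. Two distinct obstacles arise. First, in the mixed-sign regime $\envy_{1,2} > 0 > \envy_{2,1}$, the balancing condition $a\,\envy_{1,2} = b\,\envy_{2,1}$ has no solution with $a,b \ge 0$, and the algorithm can \emph{sustain} this regime: whatever item $(a,b)$ arrives, it allocates to agent $1$, pushing $\envy_{1,2}$ down and $\envy_{2,1}$ up but only at rate $\le 1$ per round, while $\Phi$ drops by up to $2a|\envy_{1,2}|$ per round. Since $|\envy_{1,2}|$ may already be $\Theta(\sqrt T)$, a single recovery episode can wipe out $\Theta(T)$ of potential, and the adversary has no obvious way to prevent the algorithm from re-entering this regime immediately afterward. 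Second — and independently — even when $\Delta\Phi \ge 1$ holds every round (e.g., when both envies are negative, where $a\,\envy_{1,2} = b\,\envy_{2,1}$ \emph{is} solvable), $\Phi^T = \Omega(T)$ only gives $\max(|\envy_{1,2}|, |\envy_{2,1}|) = \Omega(\sqrt T)$. The theorem needs $\max(\envy_{1,2}, \envy_{2,1}) = \Omega(T^{r/2})$, i.e., the \emph{signed} envy, and if both coordinates are driven very negative the large $\Phi$ tells you nothing about envy. Your "symmetry/restart step" gestures at this but doesn't resolve it, and it isn't clear how a restart regains the accumulated potential. The paper's decreasing-sequence construction is engineered precisely to sidestep both issues by reasoning about signed envy directly through the walk's position, which is why it naturally produces the weaker $T^{r/2}$ exponent for $r < 1$ rather than $\sqrt T$.

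The cardinality bound ($|S_T| \le 2^T$ via the complete binary tree of adversary responses to the algorithm's binary choices) and the embedding into $n$ agents by zeroing out the inert agents are both correct and match the paper's treatment.
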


Our lower bound is stated as follows. For any $c > 0$, by setting $\delta = 1/T^c$ in the theorem statement below, we obtain the corollary that any online algorithm must result in $\envy^T \in \Omega_n((\log{T})^{r/2})$ with probability at least $1/T^c$.

\begin{theorem}\label{thm: envy lower bound for oblivious}
    Fix any $n \geq 2$, $T \geq 1$, $r \in (0,1)$, and $\delta \in (0,1)$. Let $\mathcal{A}$ be a (possibly randomized) online algorithm. There exists an oblivious adversary that can select a sequence of $T$ items such that the allocation $A^T$ constructed by $\mathcal{A}$ has $\envy^T \in \Omega_n((\log{\frac{1}{\delta}})^{r/2})$ with probability at least $\delta$.
\end{theorem}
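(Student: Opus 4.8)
The plan is to reduce the oblivious lower bound to the adaptive lower bound of \Cref{theorem:adaptive-lb} via a "hidden random substitution" trick. The adaptive adversary of \cite{benade2024fair} works by maintaining a set $S_T$ of at most $2^T$ instances, and at each step picking the item values for round $t$ based on the allocation so far. The key structural feature we need is that this adversary, at each round, offers the algorithm a binary choice: the item has one of two value-vectors, and the adversary's next move depends only on which of the two "sides" the algorithm allocated the item to (this is exactly how a set of $\le 2^T$ instances indexed by a bit-string is traversed). An oblivious adversary cannot observe the allocation, so it cannot adaptively choose; instead, it will commit up front to a random instance drawn so that, from the algorithm's viewpoint, every round still looks like this binary choice, and with probability $\ge 1/T$ the committed instance coincides with the one the adaptive adversary would have produced against $\mathcal{A}$.

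**First I would** set up the reduction to go from $T$ to $\log T$: the adaptive construction achieves envy $\Omega_n(T^{r/2})$ over a horizon of $T$ items; to get $\Omega_n((\log T)^{r/2})$ over a horizon of $T$ items, we run the adaptive construction only on the first $m := \lceil \log_2 T \rceil$ (or a suitable $m = \Theta(\log T)$) items, giving the remaining $T - m$ items value $\mathbf{0}$ to all agents (so they do not affect envy). Then $|S_m| \le 2^m \le 2T$, and the target envy is $\Omega_n(m^{r/2}) = \Omega_n((\log T)^{r/2})$. The oblivious adversary picks an instance $I$ uniformly at random from $S_m$ (after padding with zeros) and commits to it. Since $\mathcal{A}$ is oblivious-facing, its behavior on the first $m$ rounds against the adaptive adversary traces out a particular leaf of the binary decision tree, i.e., a particular instance $I^\star(\mathcal{A}) \in S_m$; conditioned on the randomness of $\mathcal{A}$, the event "$I = I^\star(\mathcal{A})$" has probability $\ge 1/|S_m| \ge 1/(2T)$ — and on this event, by \Cref{theorem:adaptive-lb}, the realized envy is $\Omega_n(m^{r/2})$.

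**The one subtlety** is that $\mathcal{A}$ is randomized, so $I^\star(\mathcal{A})$ is itself a random variable depending on $\mathcal{A}$'s internal coins, while $I$ is drawn independently by the adversary. We handle this by conditioning on $\mathcal{A}$'s randomness: for every fixed realization $\omega$ of the algorithm's coins, $I^\star(\mathcal{A}_\omega)$ is a fixed element of $S_m$, so $\Pr_I[I = I^\star(\mathcal{A}_\omega)] \ge 1/|S_m|$; the events are measurable with respect to independent randomness, so $\Pr_{I,\omega}[I = I^\star(\mathcal{A}_\omega)] = \E_\omega[\Pr_I[\cdot]] \ge 1/|S_m| \ge 1/(2T)$. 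On this event the deterministic guarantee of \Cref{theorem:adaptive-lb} for the run $\mathcal{A}_\omega$ on instance $I^\star(\mathcal{A}_\omega)$ applies, yielding $\envy^T \ge \envy^m \in \Omega_n(m^{r/2}) = \Omega_n((\log T)^{r/2})$. (A minor bookkeeping point: I would choose $m$ so that $|S_m| \le T$ exactly — e.g. take $m = \lfloor \log_2 T \rfloor$ — to get the clean probability bound $1/T$ claimed in the statement, and absorb the constant-factor slack into the $\Omega_n(\cdot)$.)

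**The main obstacle** I anticipate is verifying that the adaptive construction of \cite{benade2024fair} genuinely has the "binary decision tree indexed by allocation bits" structure — i.e., that the set $S_T$ of $\le 2^T$ instances is precisely the set of root-to-leaf paths in a tree where the branch at depth $t$ is the algorithm's allocation choice for item $t$, and the item values at depth $t$ depend only on the path prefix (not on the algorithm's future). If the adaptive adversary's branching factor per round were larger than $2$, or if the item values at round $t$ depended on more than just the allocation history (e.g. on $\mathcal{A}$'s internal state), the size bound $|S_m| \le 2^m$ and the "guess the right instance" argument would break. From the statement of \Cref{theorem:adaptive-lb} ($|S_T| \le 2^T$, instances indexed by sequences of items, envy guaranteed for the instance "described by" the run of $\mathcal{A}$), this structure does hold, so the reduction goes through; the remaining work is purely the union-bound / conditioning argument sketched above, plus checking that padding the horizon with zero-valued items preserves the envy lower bound at time $T$.
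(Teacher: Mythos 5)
Your proposal is correct and takes essentially the same approach as the paper: both draw a uniformly random instance from the set $S_{\Theta(\log T)}$ guaranteed by \Cref{theorem:adaptive-lb}, pad it with zero-valued items to horizon $T$, and observe that the randomly chosen instance coincides with the bad instance for the algorithm with probability at least $1/|S_{\Theta(\log T)}| \ge 1/T$. The only cosmetic difference is that the paper invokes Yao's minimax principle to reduce to deterministic algorithms, whereas you carry out the same averaging over the algorithm's internal coins explicitly; the content is identical.
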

\begin{proof}
    By Yao's minimax principle, we can, without loss of generality, focus on deterministic algorithms $\mathcal{A}$ and an adversary that selects distributions over instances. We will construct a distribution $\mathcal{D}$ over instances, such that any deterministic algorithm $\mathcal{A}$ has $\envy^T \in \Omega_n((\log{\frac{1}{\delta}})^{r/2})$ with probability at least $\delta$, where the randomness is over instances drawn from $\mathcal{D}$. $\mathcal{D}$ is defined as follows: for a fixed $T$, consider the set of instances $S_{\log{\frac{1}{\delta}}}$ described in~\Cref{theorem:adaptive-lb}, and select an instance uniformly at random from this set. This gives us a random sequence of $\log{\frac{1}{\delta}}$ items; to get to $T$ items, include $T - \log{\frac{1}{\delta}}$ items that have a zero value for all the agents. Note that, by definition, $S_{\log{\frac{1}{\delta}}}$ contains an instance $I^*$ for which algorithm $\mathcal{A}$ incurs an maximum envy of $\Omega_n((\log{\frac{1}{\delta}})^{r/2})$ at time $\log{\frac{1}{\delta}}$, and therefore at time $T$ as well, since all items after step $\log{\frac{1}{\delta}}$ have zero value. Finally, note that $\mathcal{D}$ samples $I^*$ with probability exactly $1/|S_{\log{\frac{1}{\delta}}}|$, which is at least $1/2^{\log{\frac{1}{\delta}}} = \delta$, by~\Cref{theorem:adaptive-lb}.
\end{proof}

\section{Performance Against an i.i.d.\@ Adversary}\label{sec: iid}

In this section, we study an i.i.d.\@ adversary. In this model, we show that online envy minimization is easier than online multicolor discrepancy.
We first prove a super-constant lower bound for the online vector balancing problem (\Cref{thm: lower bound for iid vector balancing}), which, naturally, implies a super-constant lower bound for the online multicolor discrepancy problem. In~\Cref{subsec: iid envy n agents} we give a simple algorithm for online envy minimization and $n$ agents. All missing proofs can be found in~\Cref{app:missing proofs from iid}.

\subsection{Lower bounds for online vector balancing}

In the following lower bound, we show that if for all $t \in [T]$, each coordinate of all the vectors $v_t$ are i.i.d.\@ drawn from the distribution $\mathcal{U}([-1,1])$, then the discrepancy at time $T$ of any online algorithm must be $\Omega\left(\sqrt{\frac{\log{T}}{\log{\log{T}}}}\right)$. Note that a drawn vector might not be a member of $\mathcal{B}_2^d$. However, the same lower bound will hold up to a factor of $\sqrt{d}$ if each coordinate is drawn from $\mathcal{U}([-1/\sqrt{d},1/\sqrt{d}])$ which ensures $v_t \in \mathcal{B}_2^d$; we use $\mathcal{U}([-1,1])$ for the ease of exposition.

\begin{theorem}\label{thm: lower bound for iid vector balancing}
    Even for $n=2$ colors, for any $T \in \mathbb{N}$, any online algorithm $\mathcal{A}$, and any $d > 2$, when $\mathcal{A}$ is presented with a sequence of vectors $v_1, \dots, v_T \in \mathbb{R}^d$, where $v_{t,i} \sim \mathcal{U}([-1,1])$ in an i.i.d.\@ fashion, the discrepancy of $\mathcal{A}$ is $\Omega\left(\sqrt{\frac{\log{T}}{\log{\log{T}}}}\right)$, with probability at least $1 - 1/T^{\Theta(1)}$. 
\end{theorem}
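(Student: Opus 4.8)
The plan is to adapt the recursive lower-bound strategy for stochastic online discrepancy (in the spirit of~\cite{bansal2020online}) to the coordinate-independent model and, crucially, to push it through with high probability rather than only constant probability. First, a routine reduction: the input distribution does not depend on $\mathcal{A}$, so we may condition on $\mathcal{A}$'s internal randomness and assume $\mathcal{A}$ is deterministic, since if every deterministic algorithm has discrepancy $\Omega(\sqrt{\log T/\log\log T})$ with probability $\geq 1-T^{-\Theta(1)}$ over the i.i.d.\ input, the same follows for randomized $\mathcal{A}$ by averaging over its tape. The only features of $\mathcal{U}([-1,1])$ we use are that the coordinates are independent and symmetric (so conditioning on $\mathrm{sign}(v_{t,j})$ leaves $|v_{t,j}|\sim\mathcal{U}([0,1])$ and all other coordinates untouched) and that $\Pr[\,|v_{t,j}|\geq \tfrac12\,]=\tfrac12$. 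We build the hard instance on three fixed coordinates; the case of arbitrary $d>2$ follows by restriction, since an online algorithm on $\mathbb{R}^d$, viewed through its action on three coordinates, is a randomized online algorithm for the three-dimensional problem with the remaining $d-3$ coordinates playing the role of auxiliary randomness. Finally, note that discrepancy here is $\max_{t\in[T]}\norm{d_t}_\infty$, so it suffices to make $\norm{d_t}_\infty$ large at some single step.

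The core is a recursive ``boosting'' lemma. Call the configuration at time $t$ at \emph{level} $\ell$ if $\norm{d_t}_\infty\geq \ell$. The lemma asserts window lengths $w_1\leq w_2\leq\cdots$ and failure probabilities $\varepsilon_\ell$ such that, conditioned on being at level $\geq\ell$ at any time $t$, the process reaches level $\geq\ell+1$ within the next $w_\ell$ steps with probability at least $1-\varepsilon_\ell$. The one-step mechanism is the tension produced by a single sign: once a coordinate $j^\ast$ sits at level $\ell$, the algorithm can keep it there only by repeatedly choosing $\chi_t$ anti-correlated with $\mathrm{sign}(v_{t,j^\ast})$, and by coordinate independence this is a fresh fair coin as seen by every other coordinate. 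Hence either the algorithm lets $j^\ast$ escape (it climbs past $\ell+1$ on a step where its own component exceeds $\tfrac12$), or some other coordinate, fed near-uniform signs, performs a drift-free walk that crosses level $\ell+1$ once the window is long enough. Iterating this inside nested sub-windows upgrades ``level $\ell$ is reached once'' to ``level $\ell$ is reached in a configuration robust enough to be boosted'', which is what one feeds into the next step of the recursion.

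The final bound comes from balancing the recursion. Chaining the lemma uses it only $L$ times, once per level, at random start times $t_1\le\cdots\le t_L$ with $t_{\ell+1}\le t_\ell+w_\ell$, so the total time consumed is $\sum_{\ell<L}w_\ell$ and the total failure probability is at most $\sum_{\ell<L}\varepsilon_\ell$ by a union bound over levels (and, if one prefers, over the $\le T$ possible start times). To make this $T^{-\Theta(1)}$ we need each $\varepsilon_\ell$ to be $T^{-\Theta(1)}$, and achieving such small per-step failure forces the window to grow by roughly a $(\log T)^{\Theta(\ell)}$ factor from level $\ell-1$ to level $\ell$ — exactly the cost of boosting constant success probability to high probability at each stage. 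Thus $w_\ell\approx (\log T)^{\Theta(\ell^2)}$, so $\sum_{\ell<L}w_\ell\le T$ holds as long as $\ell^2\log\log T=O(\log T)$, i.e.\ for $L=\Theta(\sqrt{\log T/\log\log T})$. For that many levels the construction fits inside $T$ steps, and chaining shows $\norm{d_t}_\infty\ge L$ at some $t\le T$ except with probability $T^{-\Theta(1)}$, which is the claimed bound.

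The main obstacle is the boosting lemma, and in particular making it robust against an adaptive algorithm. Because $\mathcal{A}$ remembers the entire history, when we wait for the discrepancy to return to level $\ell$ we cannot assume independence from earlier sub-windows, and we must argue that $\mathcal{A}$ cannot simultaneously be ``on guard'' in enough coordinates. The right bookkeeping is to track how many coordinates sit near level $\ell$ and with which sign of discrepancy, and to show that this population cannot be driven down faster than a new coordinate escapes upward; the analysis has to account for the fact that a single sign helps a (random) half of these coordinates while hurting the other half. Carrying this out carefully is where the real work lies, and it is also where the $\log\log T$ loss enters relative to the $\sqrt{\log T}$ one might naively hope for, since insisting on high probability rather than constant probability per boosting step is precisely what inflates the windows by the extra $(\log T)^{\Theta(\ell)}$ factor.
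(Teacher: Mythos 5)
Your proposal takes a genuinely different route from the paper, and unfortunately it also contains a gap that you yourself flag without closing. The paper's actual argument is a direct ``lucky streak'' argument, with no recursion: it observes (their \Cref{claim:hypercube_prob}) that for any fixed vector $u$, a fresh draw $v\sim\mathcal{U}([-1,1])^d$ satisfies $|\langle v,u\rangle|\le\delta\norm{u}_2$ and $\norm{v}_2\in[1/2,1]$ with probability at least $c\delta$; whenever this happens and the running discrepancy has $\norm{d_{t-1}}_2\le\tfrac{1}{8\delta}$, one gets $\norm{d_t}_2^2\ge\norm{d_{t-1}}_2^2+\tfrac14$ \emph{regardless of what sign the algorithm chooses}, because $\norm{d_t}_2^2\ge\norm{d_{t-1}}_2^2+\norm{v_t}_2^2-2|\langle d_{t-1},v_t\rangle|$. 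Partitioning $[T]$ into $T/\tau$ blocks of length $\tau$, one block of consecutive lucky steps yields discrepancy $\sqrt{\tau/4}$, and such a block appears with probability $1-(1-(c\delta)^\tau)^{T/\tau}$. Setting $\delta=1/(c\log T)$ and $\tau=\log T/(2\log\log T)$ makes this $1-T^{-\Theta(1)}$ while $\sqrt{\tau}=\Omega(\sqrt{\log T/\log\log T})$. Crucially, this argument never needs to reason about the algorithm's adaptivity, because the squared-norm increase holds pointwise over both signs once the geometric event occurs.

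The gap in your proposal is the boosting lemma: you state its conclusion (``conditioned on being at level $\ell$ at time $t$, the process reaches level $\ell+1$ within $w_\ell$ steps with probability $1-\varepsilon_\ell$'') but do not prove it, and you explicitly acknowledge (``Carrying this out carefully is where the real work lies'') that the adaptivity issue is unresolved. The heuristic mechanism you describe --- once one coordinate is at level $\ell$, the sign sequence looks like a fair coin to the other coordinates --- does not survive an adaptive adversary without further argument, precisely for the reason you identify: the algorithm can trade off among coordinates, and it is not obvious that the population of near-level-$\ell$ coordinates cannot be controlled. Moreover, the quantitative claim that the window must grow by a $(\log T)^{\Theta(\ell)}$ factor between successive levels is asserted without derivation, and it is load-bearing: it is exactly what produces $w_\ell\approx(\log T)^{\Theta(\ell^2)}$ and hence the final $L=\Theta(\sqrt{\log T/\log\log T})$. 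If the correct growth were merely a constant or $(\log T)^{O(1)}$ factor per level, the recursion would yield a stronger bound and the $\log\log T$ loss would not appear where you say it does. In short, the core lemma is a placeholder and the key parameter choice is unjustified, so the proposal as written is not a proof. If you want to salvage this route, you would need to precisely state the boosting lemma (including how the ``robust enough to be boosted'' configuration is defined) and prove it against an adaptive algorithm; but you should also note that the paper's argument sidesteps all of this by working with $\norm{d_t}_2$ and an inequality that holds for both sign choices simultaneously, which is both simpler and fully rigorous.
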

\begin{proof}
    Let $\dist = \mathcal{U}([-1,1])$. We use the notation $v \sim \dist^d$ to denote a random vector $v \in \mathbb{R}^d$ each of whose coordinates are drawn independently from $\dist$. 
    The key observation is that, with sufficiently high probability, there is a long enough sequence of input vectors that are orthogonal to the current discrepancy vector; this leads to a large discrepancy at the end of this sequence. The following claim will be used to formalize this idea. 

    \begin{claim}\label{claim:hypercube_prob}
        There exists a constant $c>0$ such that, for all $\delta > 0$, and $u \in \mathbb{R}^d$, we have $\Pr_{v \sim \dist^d}[|\langle v, u\rangle| \leq \delta \norm{u}_2 \text{ and } \norm{v}_2 \in [1/2,1]] \geq c \delta$, where the constant $c$ depends on $d$. 
    \end{claim}
    \begin{proof} We can rewrite the probability of interest as
    \begin{align*}
        & \phantom{{}={}} \Pr_{v \sim \dist^d}[|\langle v, u\rangle| \leq \delta \norm{u}_2 \text{ and } \norm{v}_2 \in [1/2,1]]\\
        & = \Pr_v[\norm{v}_2 \in [1/2,1]] \cdot \Pr_v[|\langle v, u\rangle| \leq \delta \norm{u}_2 \mid \norm{v}_2 \in [1/2,1]] \numberthis \label{equation:obv-iid-lower-bound}
    \end{align*}

    We will show that $\Pr_v[\norm{v}_2 \in [1/2,1]] \geq c_1$ and $\Pr_v[|\langle v, u\rangle| \leq \delta \norm{u}_2 \mid \norm{v}_2 \in [1/2,1]] \geq c_2 \delta$ where $c_1$ and $c_2$ are constants that depends on $d$. These two inequalities, along with \Cref{equation:obv-iid-lower-bound}, imply that $\Pr_{v \sim \dist^d}[|\langle v, u\rangle| \leq \delta \norm{u}_2 \text{ and } \norm{v}_2 \in [1/2,1]] \geq c_1c_2 \delta = c \delta$, where $c = c_1 c_2$.

    To prove that $\Pr_v[\norm{v}_2 \in [1/2,1]] \geq c_1$ we use the fact that the volume of the unit Euclidean ball is given by $\textrm{vol}(\mathcal{B}^d_2) = \frac{\pi^{d/2}}{\Gamma(d/2+1)}$ where $\Gamma$ represents the gamma function~\cite{smith1989small}:
    $\Pr[\norm{v}_2 \in [1/2,1]] = \frac{\textrm{vol}(\mathcal{B}^d_2) - \textrm{vol}(\mathcal{B}^d_2)/2^d}{2^d} \geq \frac{\textrm{vol}(\mathcal{B}^d_2)}{2^{d+1}} = c_1,$
    where $c_1$ only depends on $d$. 
    
    It remains to prove that $\Pr_v[|\langle v, u\rangle| \leq \delta \norm{u}_2 \mid \norm{v}_2 \in [1/2,1]] \geq c_2 \delta$ for a constant $c_2$ that depends only on $d$. Conditioning on the event $\norm{u}_2 \in [1/2,1]$, the distribution of the random vector $v \sim \mathcal{D}^d$ is centrally symmetric, i.e., the probability density of $v$ only depends on $\norm{v}_2$ and not the direction of $v$. Define $\theta$ to be the random angle between $u$ and $v$. All possible angles $\theta \in [0, 2\pi]$ that $u$ can make with $v \sim \dist^d$ are equally likely. Using this fact, we get
        \begin{align*}
 &\phantom{{}={}}\Pr_{v \sim \dist^d}[|\langle v, u\rangle| \leq \delta \norm{u}_2 \mid \norm{v}_2 \in [1/2,1]] \\
 & =  \Pr_{v \sim \dist^d}[ | \sqrt{d} \cos{\theta} | \leq \delta  \mid \norm{v}_2 \in [1/2,1]]\\
  & =  \Pr_{v \sim \dist^d}[ | \cos{\theta} | \leq \frac{\delta}{\sqrt{d}}  \mid \norm{v}_2 \in [1/2,1]]\\
  & =  \frac{(\pi/2 - \arccos(\delta/\sqrt{d}))}{\pi/2} \tag{$\theta \in [0, 2\pi]$ is uniformly distributed} \\
  & \geq 1 - \frac{\arccos(\delta/\sqrt{d})}{\pi/2} = \frac{2}{\pi\sqrt{d}} \cdot \delta,
        \end{align*}
        the penultimate inequality here follows from the Taylor expansion of $\arccos$, which implies that $\arccos(x) \leq \pi/2 - x$ for $x \geq 0$. Setting $c_2 = \frac{2}{\pi\sqrt{d}}$ completes the proof of the claim.
    \end{proof}
    Denote by $d_t \coloneqq \sum_{i=1}^t \chi_i v_i$, where $\chi_i \in \{ -1, 1 \}$ is the sign the algorithm picks, the discrepancy at time $t$. We know that, $\norm{d_t}_2^2 \geq \norm{d_{t-1}}_2^2 + \norm{v_t}_2^2 - 2|\langle d_{t-1}, v_t \rangle|$. For the case when $\norm{d_{t-1}}_2 \leq \frac{1}{8 \delta}$, from \Cref{claim:hypercube_prob}, we have that, with probability at least $c \delta$, $|\langle d_{t-1}, v_t\rangle| \leq 1/8$ and $\norm{v_t}_2 \in [1/2,1]$. Both these events imply that $\norm{d_t}_2^2 \geq \norm{d_{t-1}}_2^2 + 1/2 - 2\cdot 1/8 = \norm{d_{t-1}}_2^2 + 1/4.$
    
    We now divide the time horizon from $1,\ldots, T$ into $T/\tau$ contiguous chunks having $\tau$ timesteps each. Consider a contiguous chunk spanning timesteps $t_s, \ldots, t_e$ where $t_e - t_s = \tau$. Note that with probability at least $(c\delta)^\tau$ all the incoming vectors in this chunk will satisfy the condition in \Cref{claim:hypercube_prob}, thereby implying that $\norm{d_{t_e}}_2^2 - \norm{d_{t_s}}_2^2 \geq \tau/4$, which in turn will imply that $\norm{d_{t_e}}_2 \geq \sqrt{\tau/4}$.
    
    We now set $\delta = 1/(c \log{T})$ and $\tau = \log{T}/(2\log{\log{T}})$. Either at some point we have $\norm{d_{t-1}}_2 > \frac{1}{8 \delta} = (c\log{T})/8$, in which case the lower bound holds. Otherwise $\norm{d_{t-1}}_2 < \frac{1}{8 \delta}$ for all the timesteps, and with probability at least $1- \left(1 - (c\delta)^\tau\right)^{T/\tau} = 1- \left(1 - (1/\log{T})^{\log{T}/(2\log{\log{T}})}\right)^{(2T\log{\log{T}})/\log{T}} = 1- 1/T^{\Theta(1)}$ at least one of the chunks will have all its vectors almost orthogonal to the current discrepancy vector (i.e., all vectors will satisfy the condition in \Cref{claim:hypercube_prob}), leading to a discrepancy of at least $\sqrt{\tau/4} = O\left(\sqrt{\frac{\log{T}}{\log{\log{T}}}}\right)$. This concludes the proof of~\Cref{thm: lower bound for iid vector balancing}.
\end{proof}

\subsection{Online envy minimization}\label{subsec: iid envy n agents}

In this section, we give an algorithm,    ~\Cref{algo:online envy for n agents}, for online envy minimization, against an i.i.d.\@ adversary. \Cref{algo:online envy for n agents} works in two phases. In phase 1, which lasts $T^{(1)}$ steps, it makes allocations using the welfare maximization algorithm (``item $j$ is allocated to the agent with the largest value''). In Phase 2, at every step $t$ the algorithm singles out the set of agents who have not received a large number of items (within phase 2, up until $t$); among this set, it allocates item $t$ to the agent who is envied the least by agents in this set.

\begin{theorem}\label{thm:n agent upper bound iid fair division}
For all positive integers $c$, \Cref{algo:online envy for n agents} has envy at most $c + 1$ with probability at least $1 - O(T^{-c/2})$.
\end{theorem}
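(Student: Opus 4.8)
The plan is to track the ``envy graph'' $H^t$ described in the technical overview, where nodes are agents and we draw an edge $(i,j)$ whenever $\envy^t_{i,j} \geq c$ at time $t$, and to show that $H^T$ is empty with the claimed probability. The argument has three ingredients, which I would establish in order: (1) a concentration statement controlling the envy generated during Phase 1 (welfare maximization), namely that with probability $1 - O(T^{-c/2})$ every pairwise envy $\envy^{T^{(1)}}_{i,j}$ is at most something like $O(c\sqrt{T^{(1)}\log T})$ — here one cannot invoke standard Chernoff/Bernstein bounds because the distribution $\dist$ is arbitrary (possibly tiny variance), so I would use the bespoke distribution-agnostic concentration inequality the authors advertise, applied to the martingale of envy increments; (2) the structural fact (\Cref{lem:no-cycle}, which I may assume is proved earlier) that conditioned on the Phase-1 good event, the graph $H^t$ stays acyclic for all $t \in [T^{(1)}, T]$, so that during Phase 2 there is always a source node to which we may safely allocate; and (3) the ``progress'' fact (\Cref{lem:high-value-n}) that giving agent $i$ roughly $\lceil \sqrt{T}\log T\rceil$ more items than agent $j$ during Phase 2 forces the edge $(i,j)$ out of the graph and keeps it out.

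Given these three ingredients, the proof assembles as follows. Set $T^{(1)} = T - \Theta(\sqrt{T}\log T)$ so that Phase 2 has length $T^{(2)} = \Theta(\sqrt{T}\log T)$; the constant is chosen so that $T^{(2)}$ exceeds $n \cdot \lceil \sqrt{T}\log T \rceil$ plus the number of items needed to absorb the Phase-1 envy. Condition on the good event from step (1). In each Phase-2 round the algorithm picks, among agents who have so far received fewer than the ``large'' threshold of items in Phase 2, the one who is least envied by that set; by acyclicity (step (2)) this set always contains a source of $H^t$, and allocating to it cannot create a new edge into any other node that would close a cycle, and it weakly shrinks the out-edges of the chosen node. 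A counting argument then shows that over the course of Phase 2 every ordered pair $(i,j)$ either already has no edge or accumulates enough of an item surplus for $i$ over $j$ — using that the algorithm prioritizes the least-envied agents and that no agent is allowed to run away with items — to trigger the removal guaranteed by step (3). Hence $H^T = \emptyset$, i.e.\ $\envy^T < c$ for every pair, giving $\envy^T \le c+1$ (the extra $+1$ from the single last item). Since the only randomness we conditioned away is the Phase-1 good event, which fails with probability $O(T^{-c/2})$, the theorem follows.

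I expect the main obstacle to be controlling the graph $H^t$ through Phase 2 despite the correlations the Phase-2 allocation rule induces between $H^t$ and $H^{t+1}$ — in particular making steps (2) and (3) hold for \emph{all} $t$ simultaneously rather than at a single time. The key device for this, as flagged in the overview, is to re-sample the item stream in quantile space: pre-draw a large pool of i.i.d.\ quantiles and, since Phase-2 decisions do not look at the realized values, decide which pooled sample ``arrives'' next based only on which agent is scheduled to receive it. This makes $H^t$ a function of the \emph{counts} of items each agent has received rather than of the exact arrival times, which decouples the temporal correlations enough to union-bound the failure events over $t$ and over agent pairs. The remaining delicate point is calibrating $T^{(1)}$: a longer Phase 1 makes the surviving envy cycles ``lighter'' (easier to kill in step (3)) but raises the worst-case Phase-1 envy (step (1)), so the two constants must be chosen together — I would pick $T^{(1)}$ as large as possible subject to the Phase-1 envy staying $O(\sqrt{T}\log T)$, then verify Phase 2 is long enough to drain it.
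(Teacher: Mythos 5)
Your proposal correctly names the paper's two key lemmas (\Cref{lem:no-cycle}, \Cref{lem:high-value-n}) and the pre-sampling/quantile device, but your step~(1) --- bounding the Phase-1 envy by something like $O\bigl(\sqrt{T^{(1)}\log T}\bigr)$ and then arguing Phase~2 ``drains'' it --- is not what the paper does, and it would not survive for an arbitrary $\dist$. If $\dist$ has tiny mean $\mu$, each Phase-2 item contributes only about $\mu$ value, so the $\Theta(\sqrt{T}\log T)$ extra items an agent can receive add only $\Theta(\mu\sqrt{T}\log T)$ value, with no distribution-free relation to whatever Phase-1 envy bound you produce; raw-value accounting cannot close the gap. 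The paper never bounds Phase-1 envy. Instead, \Cref{lem:high-value-n} compares bundles in \emph{quantile} space: if $i$ received $\lceil\log T\sqrt{T}\rceil$ more Phase-2 items than $j$, then for every threshold $q$ agent $i$'s bundle contains (up to an additive $c$) at least as many items with $i$-quantile $\geq q$ as $j$'s bundle does, and by the Hall-type \Cref{lem:halls} this stochastic-dominance condition gives $\envy^t_{i,j}\leq c$ for \emph{any} $\dist$. Welfare maximization in Phase~1 matters precisely because it skews the quantiles of $i$'s own Phase-1 items upward (max of $n$ uniforms $\Rightarrow$ $\text{Beta}[n,1]$), which is what makes the dominance hold --- it is not there to keep envy numerically small. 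Correspondingly, the bespoke concentration inequality (\Cref{lem:concentration}) is the engine behind the no-heavy-cycles \Cref{lem:no-cycle}, not a Phase-1 envy bound, and $T^{(2)}=\frac{n(n-1)}{2}\lceil\log T\sqrt T\rceil$ is chosen purely so that every agent can be out-allocated by $\lceil\log T\sqrt T\rceil$ items, not to ``absorb'' Phase-1 envy.

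Your assembly also asserts a local invariant the paper never proves: that giving an item to a source ``cannot create a new edge into any other node that would close a cycle.'' Acyclicity is a global probabilistic event, conditioned on once for all $t\geq T^{(1)}$, and established by the cyclic reindexing $\sum_{j}\envy_{i_j,i_{j+1}}=\sum_{j}\bigl(v_{i_j}(A_{i_{j+1}})-v_{i_{j+1}}(A_{i_{j+1}})\bigr)\leq kc$ plus \Cref{lem:concentration}; it is not a consequence of the allocation rule's choice of recipient. The paper's final assembly then proceeds differently from your counting sketch: for any agent $i$ who received a Phase-2 item, look at the \emph{last} time $t$ at which $i$ received one, argue $i$ was a source of $H^{t-1}$ (so $\envy^{t-1}_{j,i}\leq c$ for all $j$), observe this rises by at most $1$ at step $t$ and never again, yielding $\envy^T_{j,i}\leq c+1$; agents with zero Phase-2 items are handled via \Cref{lem:phase-2-items} together with \Cref{lem:high-value-n}. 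Note the conclusion is $\envy^T\leq c+1$ directly, not $H^T=\emptyset$.
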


Note that the $O(\cdot)$ hides constants that depend on the number of agents, but is independent of the value distribution.

\begin{proof}
    Fix a positive integer $c$, an arbitrary distribution $\calD$ supported on $[0, 1]$, and a time horizon $T$. Throughout, we assume that $T$ is sufficiently large, i.e., larger than some number $T_0$ that depends only on $n$ and $c$, not on the distribution $\calD$. Let $F$ denote the CDF of $\calD$.

    A key observation is that we can analyze the algorithm using an equivalent, but more structured method of sampling item values. Normally, at each time step $t$, the item values revealed to the algorithm are sampled i.i.d.\@ from $\mathcal{D}$, independent of every decision made so far. Instead, we define an equivalent experiment as follows. Let $G^{welf} = \{g^{welf}_1, \ldots, g^{welf}_{T^{(1)}}\}$ be a set of $T^{(1)}$ goods and, for each agent $i \in [n]$,  let $G^{i} = \{g^i_1, \ldots, g^i_{T^{(2)}}\}$ be a set of $T^{(2)}$ goods.
    \begin{enumerate}[leftmargin=*]
        \item Before the algorithm begins, nature samples values $(V^g_1, \ldots, V^g_n)$ for each $g \in G^{welf} \cup \bigcup_i G^i$, where each $V^g_i \stackrel{i.i.d.}{\sim} \calD$. 
        \item During \emph{Phase 1} of the algorithm (welfare maximization), when the $t^{th}$ item arrives, it is revealed to be  item $g^{welf}_t$, with pre-sampled values $(V^{g^{welf}_t}_1, \ldots, V^{g^{welf}_t}_n)$.
        \item During \emph{Phase 2} (lines 3-7 in~\Cref{algo:online envy for n agents}), suppose item $t$ will be assigned to agent $i$ who, at this point, has received $k$ items during phase $2$ ($|A^t_i \setminus G^{welf}| = k$). Then, item $t$ is revealed to be $g^i_{k + 1}$ with pre-sampled values $(V^{g^i_{k + 1}}_1, \ldots, V^{g^i_{k + 1}}_n)$.
    \end{enumerate}

    \begin{algorithm}[t]
\caption{Two-Phase Envy Minimization Algorithm}\label{algo:online envy for n agents}
\SetAlgoLined
\DontPrintSemicolon

Set $T^{(1)} \gets T - \frac{n(n - 1)}{2}\lceil \log T \sqrt{T} \rceil$, and $T^{(2)} \gets \frac{n(n - 1)}{2} \lceil \log T \sqrt{T} \rceil$\;
Run welfare maximization (i.e., allocate item $t$ to $\argmax_{i \in [n]} v_{i,t}$ breaking ties randomly)  for $T^{(1)}$ steps \;
\For{$t \gets T^{(1)} + 1$ \KwTo $T$}{
    Let $w^t_i$ be the number of items agent $i$ has received in steps $t' >T^{(1)}$.\;
    Let $S$ be the smallest (in terms of cardinality) subset of agents, such that $\forall i \in S, j \notin S$ $w^t_i \leq w^t_j - \lceil \log T \sqrt{T} \rceil$. \;
    Allocate item $t$ to an agent $i \in S$ who is envied the least, i.e., $\argmin_{i \in S} \max_{j \in S} \envy^t_{j,i}$. \;
}
\end{algorithm}

    Importantly, the allocation decision for item $t$ does not depend on agents' values for this item. This ensures that the value vector $(V^{g^i_{k + 1}}_1, \ldots, V^{g^i_{k + 1}}_n)$ is independent of all decisions made by the algorithm. Consequently, this modified experiment is statistically identical to the original setup in terms of the envy of the final allocation.

    A second useful modification is to work with item \emph{quantiles} instead of item values. More formally, instead of directly sampling $V^g_i$, we will first sample a quantile $Q^g_i \sim \mathcal{U}[0, 1]$ and then set $V^g_i = F^{-1}(Q^g_i)$ where $F^{-1}$ is the generalized inverse of $F$. Throughout the remainder of this proof, we condition on the probability $1$ event that all $Q^g_i$s are distinct.
    Note that for $g \in G^{welf}$, allocating item $g$ to an agent with the highest quantile, $i \in \argmax_{j} Q^g_j$, is equivalent to welfare maximization with random tie-breaking.\footnote{We make this point, since unequal quantiles does not imply unequal values.} Thus, we will assume these are coupled. Since all quantiles are distinct by assumption, ties never occur, and this allocation is always well-defined. 

    \textbf{No heavy envy-cycles.} Our first high-level step will be to show that, with high probability, no envy cycles with large weight exist during the execution of the algorithm.

    \begin{lemma}\label{lem:no-cycle}
        With probability $1 - O ( T^{-c/2}  )$, at every time $t \ge T^{(1)}$, there does not exist a cycle of agents $i_1, \ldots, i_k, i_{k+1}=i_1$ such that $\envy_{i_j, i_{j + 1}} > c$ for all $j = 1, \ldots, k$. 
    \end{lemma}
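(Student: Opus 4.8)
The plan is to fix a directed cycle $C=(i_1,\dots,i_k,i_{k+1}=i_1)$ of distinct agents, work in the decoupled (pre-sampled, quantile) model set up above, and track the quantity $\Phi_C^t:=\sum_{j=1}^k\envy^t_{i_j,i_{j+1}}$. A heavy cycle on $C$ at time $t$ forces $\Phi_C^t>kc$, so it suffices to show that, with probability $1-O_n(T^{-c/2})$, $\Phi_C^t\le kc$ for all $t\ge T^{(1)}$; a union bound over the $O_n(1)$ directed cycles then gives the lemma. I would decompose $\Phi_C^t=P_C+R_C^t$, where $P_C$ collects the contributions of Phase-1 items and $R_C^t$ those of Phase-2 items allocated by time $t$. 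A Phase-1 item $g$ (allocated by welfare maximization to $\ell^*(g)=\argmax_\ell V^g_\ell$) contributes $V^g_{i_{m-1}}-V^g_{i_m}$ to $\Phi_C^t$ if $\ell^*(g)=i_m$ for some $m$, and $0$ otherwise; since $V^g_{i_m}=\max_\ell V^g_\ell$ on that event, \emph{every} Phase-1 item contributes a nonpositive amount, so $P_C\le 0$ — this is the deterministic ``$H^{T^{(1)}}$ is acyclic'' fact. A Phase-2 item allocated to agent $i_m\in C$ contributes $V^g_{i_{m-1}}-V^g_{i_m}$, a mean-zero increment bounded in $[-1,1]$; crucially, in the decoupled model a Phase-2 allocation ignores the arriving item's values, so $R_C^{\cdot}$ is a martingale whose increments have conditional variance at most $2\sigma^2$, where $\sigma^2:=\mathrm{Var}(\calD)$.

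Thus a heavy cycle on $C$ at time $t$ requires $R_C^t>kc-P_C\ge kc$, and I would control this with two complementary bounds. First (Phase 1): the per-item contributions $p_C(g)$ are i.i.d.\ in $[-1,0]$ with mean $-\mu_C$ and variance at most $\mu_C$, and a short order-statistics computation — conditioning on the uniformly random assignment of the sorted tuple $(V^g_{(1)},\dots,V^g_{(n)})$ to the $n$ agents — gives $\mu_C\ge\frac{1}{n(n-1)}\,\E[V_{(n)}-V_{(1)}]\ge\frac{1}{n(n-1)}\,\E|V_1-V_2|\ge\frac{2}{n(n-1)}\sigma^2$. A one-sided Bernstein inequality then yields $\Pr[P_C\ge-\tfrac12 T^{(1)}\mu_C]\le\exp(-\Omega(T^{(1)}\mu_C))$. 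Second (Phase 2): a variance-sensitive Chernoff bound for $R_C^{\cdot}$ — using $\E[e^{s\xi}\mid\mathcal F]\le\exp((e^s-1-s)\,2\sigma^2)$ for each martingale increment $\xi$ together with a maximal inequality for the associated supermartingale — gives, for every $\lambda>0$, $\Pr[\max_{t}R_C^t\ge\lambda]\le\inf_{s>0}\exp\!\big(-s\lambda+(e^s-1-s)\,2\sigma^2 T^{(2)}\big)$, a bound over \emph{all} time steps at once, so no union bound over $T$ is needed.

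These combine through the following dichotomy, which is where the quadratic gap between the phase lengths ($T^{(2)}=\Theta_n(\sqrt T\log T)$ versus $T^{(1)}=\Theta(T)$) does the real work. On the event $P_C\le-\tfrac12 T^{(1)}\mu_C$ (complement probability $\exp(-\Omega(T^{(1)}\mu_C))$), a heavy cycle needs $R_C^t>kc+\tfrac12 T^{(1)}\mu_C$. If $\sigma^2 T^{(2)}$ is below a threshold of order $T^{-1/4}$, taking $\lambda=kc$ and optimizing $s$ in the Phase-2 bound gives roughly $(e\,\sigma^2 T^{(2)}/c)^{\Omega(kc)}\le T^{-c/2}$. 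Otherwise $\sigma^2\gtrsim_n T^{-3/4}/\log T$, so $\mu_C\gtrsim_n\sigma^2$ forces $T^{(1)}\mu_C\gtrsim_n T^{1/4}/\log T$; then the Phase-1 failure probability is $T^{-\omega(1)}$, and since now $2\sigma^2 T^{(2)}\le n\mu_C T^{(2)}\ll T^{(1)}\mu_C\le\lambda$, the Phase-2 bound with $s=1$ gives $\exp(-\Omega(\lambda))=\exp(-\Omega_n(T^{1/4}/\log T))\ll T^{-c/2}$. Either way the per-cycle failure probability is $O_n(T^{-c/2})$, and summing over cycles finishes.

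The \textbf{main obstacle} is the Phase-2 tail bound. Because $\calD$ may be arbitrarily close to a point mass, any bound that treats the increments as generic $[-1,1]$ variables (Hoeffding/Azuma, or Freedman with its range-$1$ jump term) is far too weak — a handful of rare, high-spread items could on their own push $\Phi_C$ past $kc$, an event whose probability those bounds estimate only as a non-polynomially-small constant. The resolution is to route everything through $\sigma^2=\mathrm{Var}(\calD)$ (equivalently, the second moment of the increments) via the variance-sensitive moment generating function estimate above, and to exploit $T^{(2)}\ll (T^{(1)})^2$ so that the Phase-1 drift $T^{(1)}\mu_C$ always dominates the Phase-2 fluctuation scale $\sqrt{\sigma^2 T^{(2)}}$, uniformly in $\calD$. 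I expect the bulk of the effort to lie in making this regime split and the supermartingale maximal inequality fully airtight for \emph{every} distribution on $[0,1]$ — precisely the ``distribution-agnostic concentration inequality'' advertised in the introduction — along with the routine but necessary bookkeeping that the decoupled sampling genuinely renders $R_C^{\cdot}$ a martingale and that only $O_n(1)$ cycles must be union-bounded.
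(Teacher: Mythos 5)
Your proof is correct, and it takes a genuinely different route from the paper's. Both proofs start from the same deterministic identity (the cycle telescoping, which shows that $\sum_j \envy^t_{i_j,i_{j+1}} = \sum_j \bigl[v_{i_j}(A^t_{i_{j+1}}) - v_{i_{j+1}}(A^t_{i_{j+1}})\bigr]$), and both exploit the pre-sampled quantile model so that Phase-2 decisions do not peek at the arriving item's values. From there the two arguments diverge. The paper reduces to the \emph{pairwise} condition $v_j(A^t_i) \le v_i(A^t_i) + c$ for every ordered pair $(i,j)$; it then replaces the Phase-2 contribution by an adversarial lower bound $-\sum (V_2 - V_1)_+$, groups the Phase-1 increments into blocks of $2n$, establishes first-order stochastic dominance of the blocked Phase-1 variables over the Phase-2 variables through a chain of explicit order-statistic couplings, and finally invokes Lemma~\ref{lem:concentration} --- a combinatorial, range-based concentration inequality proved via Hall's theorem on a random subset of indices. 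Crucially, that route never mentions $\mathrm{Var}(\calD)$: all the ``variance sensitivity'' is baked into the stochastic-dominance step and the $(L/K)^{c+1}$ scaling. Your route works with the cycle sum $\Phi_C^t$ directly, decomposes it into the deterministic-sign Phase-1 part $P_C \le 0$ (with a quantified drift $-T^{(1)}\mu_C$ that you lower-bound by $\Omega_n(\sigma^2 T)$ via order statistics) plus a mean-zero Phase-2 martingale $R_C^t$, and then closes with a Bennett-type exponential supermartingale and Doob's maximal inequality, dispatching small and large $\sigma^2 := \mathrm{Var}(\calD)$ via the Poisson-regime threshold $\sigma^2 T^{(2)} \lessgtr T^{-1/4}$. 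The comparison: your argument is arguably more transparent about \emph{why} the quadratic gap $T^{(2)} \asymp (T^{(1)})^{1/2}$ is the right scale (the Phase-1 drift $T^{(1)}\mu_C \gtrsim \sigma^2 T$ always dominates the Phase-2 fluctuation scale $\sqrt{\sigma^2 T^{(2)}}$, uniformly in $\calD$), and it gives the max-over-$t$ bound for free from the maximal inequality rather than from the monotonicity built into the paper's $X^k_q, Y^\ell_q$ formulation; the paper's argument avoids any case analysis on $\sigma^2$ and yields a purely combinatorial lemma (Lemma~\ref{lem:concentration}) that is reusable independent of moment assumptions. Two small things worth making fully precise if you carry this out: (a) condition on the measure-one event that all pre-sampled quantiles are distinct so that the welfare-maximization tie-breaking matches the paper's and your ``max agent'' is unambiguous; and (b) note explicitly that the number of directed cycles you union-bound over is $O_n(1)$ (at most $\sum_{k=2}^n n!/((n-k)!\,k)$), so the final probability is $O_n(T^{-c/2})$ as required.
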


    The proof of~\Cref{lem:no-cycle} crucially relies on the following concentration inequality (which, to the best of our knowledge, is not known), that might be of independent interest.

    \begin{lemma}\label{lem:concentration}
    Fix positive integers $L, K$, and $c$, with $L < \frac{K}{4e}$. Let $Y_1, \ldots, Y_K$ be i.i.d.\@ draws from a distribution supported on $[0, 1]$. Then, $\Pr\left[\sum_{i \le K - L} Y_i - \sum_{i > K - L} Y_i < -c \right] \leq 4 \cdot \left(\frac{2 e L}{K} \right)^{c + 1}$.
    \end{lemma}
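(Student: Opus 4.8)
The plan is to bound the probability by conditioning on the multiset of values, decomposing the signed sum into level sets, and applying a union bound to a hypergeometric tail. Write $B=\{K-L+1,\dots,K\}$ and $A=[K-L]$, so that the event of interest is $\{D>c\}$ with $D=\sum_{i\in B}Y_i-\sum_{i\in A}Y_i$. I would first condition on the multiset $\mathcal M=\{\!\{Y_1,\dots,Y_K\}\!\}$: given $\mathcal M$, the vector $(Y_1,\dots,Y_K)$ is a uniformly random arrangement of $\mathcal M$, so for every threshold $t$ the set $U(t)=\{i:Y_i\ge t\}$ has a \emph{deterministic} size $M(t)$ and is, conditionally, a uniformly random $M(t)$-subset of $[K]$; it therefore suffices to bound $\Pr[D>c\mid\mathcal M]$ for each fixed $\mathcal M$.

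Since each $Y_i\in[0,1]$, the layer-cake identity gives $D=\int_0^1 R(t)\,dt$, where $R(t)=2X(t)-M(t)$ and $X(t)=|B\cap U(t)|$. Because $R$ is an integer-valued step function, $D>c$ forces $R(t^\star)\ge c+1$ for some $t^\star$, i.e.\ $X(t^\star)\ge \lceil (M(t^\star)+c+1)/2\rceil$. As $t$ decreases, $U(t)$ runs through a nested chain $U_1\subseteq\cdots\subseteq U_r$ of at most $K$ distinct nonempty sets, with deterministic distinct sizes $m_1<\cdots<m_r$ in $\{1,\dots,K\}$, and each $|B\cap U_j|$ is hypergeometric (population $K$, $m_j$ marked elements, $L$ samples). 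A union bound then yields, conditionally on $\mathcal M$,
\[
\Pr[D>c\mid\mathcal M]\ \le\ \sum_{j=1}^{r}\Pr\!\big[\mathrm{Hyp}(K,m_j,L)\ge \lceil\tfrac{m_j+c+1}{2}\rceil\big]\ \le\ \sum_{m=1}^{K}\Pr\!\big[\mathrm{Hyp}(K,m,L)\ge j_m\big],
\]
where $j_m=\lceil(m+c+1)/2\rceil$ and the last inequality uses that the $m_j$'s are distinct elements of $\{1,\dots,K\}$; the right-hand side no longer depends on $\mathcal M$.

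It remains to estimate this deterministic sum. Using the crude union bound $\Pr[\mathrm{Hyp}(K,m,L)\ge j]\le\binom mj\prod_{i=0}^{j-1}\frac{L-i}{K-i}\le\binom mj (L/K)^j$, the $m$-th term vanishes unless $m\ge c+1$, in which case $j_m\ge c+1$ and $j_m\ge (m+c+1)/2$, so (bounding $\binom m{j_m}\le 2^m$ and using $0<L/K<1$) $\Pr[\mathrm{Hyp}(K,m,L)\ge j_m]\le 2^m(L/K)^{(m+c+1)/2}=(L/K)^{(c+1)/2}\rho^{m}$ with $\rho:=2\sqrt{L/K}$. The hypothesis $L<K/(4e)$ gives $\rho<e^{-1/2}<1$, so summing the geometric tail from $m=c+1$,
\[
\sum_{m=1}^{K}\Pr\!\big[\mathrm{Hyp}(K,m,L)\ge j_m\big]\ \le\ (L/K)^{(c+1)/2}\,\frac{\rho^{c+1}}{1-\rho}\ =\ \frac{(2L/K)^{c+1}}{1-\rho}\ \le\ 4\Big(\tfrac{2L}{K}\Big)^{c+1}\ \le\ 4\Big(\tfrac{2eL}{K}\Big)^{c+1},
\]
which is the claimed bound (in fact slightly stronger).

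I expect the only real friction to be the bookkeeping in the reduction step: verifying that conditioning on $\mathcal M$ makes the level-set sizes deterministic while keeping each $|B\cap U_j|$ hypergeometric, and handling ties among the values cleanly. The analytic heart — the hypergeometric union bound and the geometric summation — is then entirely routine, and the role of the hypothesis $L<K/(4e)$ is simply to make $\rho=2\sqrt{L/K}$ bounded away from $1$ so that the geometric series contributes only a constant factor.
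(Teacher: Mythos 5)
Your proof is correct and in fact establishes the slightly stronger bound $4(2L/K)^{c+1}$, whereas the lemma only claims $4(2eL/K)^{c+1}$. The route is genuinely different from the paper's. Both proofs begin with the same exchangeability observation (yours via conditioning on the multiset of values, the paper's via replacing the fixed index split by a uniformly random $L$-subset $S$) and reduce the signed-sum event to a union of level-set events of the form ``$|[j]\cap S|>(j+c)/2$ for some $j$.'' The paper obtains this reduction by appealing to its \Cref{lem:halls}, a matching lemma proved via Hall's theorem, and then bounds each level-set event with a Chernoff bound for negatively associated indicators, invoking the negative-association literature. You instead obtain the reduction directly and transparently from the layer-cake identity $D=\int_0^1 R(t)\,dt$ with $R$ integer-valued, and bound each level-set event by the elementary hypergeometric union bound $\binom{m}{j}(L/K)^j$ combined with the crude estimate $\binom{m}{j}\le 2^m$; the crudeness is absorbed by the geometric factor $\rho^m$ with $\rho=2\sqrt{L/K}<e^{-1/2}$, and this is precisely where the hypothesis $L<K/(4e)$ enters. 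Your version is thus more self-contained, avoiding both Hall's theorem and the negative-association Chernoff machinery, at no loss in the final constant. One small gloss worth tightening: the intermediate chain $\Pr[\cdot]\le\binom{m}{j}\prod_{i=0}^{j-1}\frac{L-i}{K-i}\le\binom{m}{j}(L/K)^j$ has a nonpositive middle term when $j>L$; since the probability is exactly $0$ in that regime the final inequality is still trivially valid, but you may wish to treat $j>L$ separately or simply note that the bound is only needed for $j\le L$.
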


    The proof of~\Cref{lem:no-cycle}  also relies on two (relatively more straightforward) facts,~\Cref{lem:phase-2-items,lem:halls}. 
    The first lemma shows that the items allocated in phase 2 are relatively balanced among the agents, up to additive $\ceil{\log T \sqrt{T}}$ factors.

    \begin{lemma}\label{lem:phase-2-items}
        Fix a time $t$. Let $w^t_i|$ be the number of items agent $i$ has received in phase 2, i.e., $w^t_i = |A^T_i \setminus G^{welf}|$. Let $(w^t_{i_1}, \ldots, w^t_{i_n})$ be these numbers sorted from smallest to largest; so, $w^t_{i_j} \le w^t_{i_{j + 1}}$. Then, for all $j \le n - 1$, $w^t_{i_{j + 1}} \le w^t_{i_j} + \ceil{\log T \sqrt{T}}$. Furthermore, $w^t_{i_n} \le (n - 1) \ceil{\log T \sqrt{T}}$, i.e., no agent ever receives more than $(n - 1) \ceil{\log T \sqrt{T}}$ phase 2 items. 
    \end{lemma}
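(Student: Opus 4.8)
The plan is to argue directly from the allocation rule in line 5--6 of the algorithm. Observe that in Phase~2, at each step $t$ the algorithm allocates the arriving item to some agent in the set $S$, where $S$ is the \emph{smallest} subset of agents satisfying $w^t_i \le w^t_j - \lceil \log T\sqrt T\rceil$ for all $i\in S$, $j\notin S$. The key structural fact is that $S$ consists precisely of the agents whose Phase~2 count $w^t_i$ lies strictly more than $\lceil \log T\sqrt T\rceil$ below the counts of \emph{every} agent outside $S$; in particular, if we sort the counts $w^t_{i_1}\le\cdots\le w^t_{i_n}$, then $S$ is a prefix $\{i_1,\dots,i_m\}$ of this sorted order, and the ``gap'' between $w^t_{i_m}$ and $w^t_{i_{m+1}}$ is at least $\lceil \log T\sqrt T\rceil$. (If no such nonempty prefix exists, then all pairwise gaps in the sorted order are at most $\lceil\log T\sqrt T\rceil - 1$, so $S=[n]$.)

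First I would prove the prefix claim and then prove, by induction on $t$ (over the Phase~2 steps $t = T^{(1)}+1,\dots,T$), the invariant that for every consecutive pair in the sorted order, $w^t_{i_{j+1}} \le w^t_{i_j} + \lceil \log T\sqrt T\rceil$. The base case is $t = T^{(1)}$, where every $w^t_i = 0$ and the invariant holds trivially. For the inductive step, suppose the invariant holds at time $t$ and item $t+1$ is allocated; it goes to some agent $i^\star \in S$. The only count that changes is $w_{i^\star}$, which increases by $1$. The only consecutive gaps in the sorted order that can be affected are those incident to $i^\star$'s position. Incrementing a count can only \emph{decrease} the gap between $i^\star$ and the agent above it in the order (harmless for the invariant, which upper-bounds gaps), and can \emph{increase} by one the gap between $i^\star$ and the agent just below it. So I need: before the increment, the gap between $i^\star$ and its predecessor in the sorted order was at most $\lceil\log T\sqrt T\rceil - 1$. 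This is where $i^\star\in S$ is used: since $i^\star\in S$, and $S$ is a prefix of the sorted order, either $i^\star = i_1$ (no predecessor, nothing to check) or $i^\star$'s predecessor $i_{j-1}$ is also in $S$; but by definition of $S$ every agent in $S$ has count at least $\lceil\log T\sqrt T\rceil$ below every agent \emph{outside} $S$, and agents inside $S$ have counts that (by the inductive invariant) differ consecutively by at most $\lceil\log T\sqrt T\rceil$. Actually the cleaner way: the allocation rule's choice of $S$ as the \emph{smallest} such set guarantees that within $S$ the sorted counts have no internal gap of size $\ge \lceil\log T\sqrt T\rceil$ (otherwise a smaller prefix would already be a valid $S$), so the gap between $i^\star$ and its predecessor is $\le \lceil\log T\sqrt T\rceil - 1$, and after incrementing it is $\le \lceil\log T\sqrt T\rceil$. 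This closes the induction.

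Finally, the ``furthermore'' bound follows by summing: from the consecutive-gap invariant at time $t$, $w^t_{i_n} \le w^t_{i_1} + (n-1)\lceil\log T\sqrt T\rceil$, and since $w^t_{i_1}$ is the minimum Phase~2 count, I claim $w^t_{i_1}=0$ throughout. Indeed, whenever $S=[n]$ the item still goes to a single agent, so the minimum count never needs to rise—more carefully, one shows the agent receiving an item is never the unique minimizer in a way that forces the minimum up past $0$; but in fact the simplest argument is that the \emph{total} number of Phase~2 items is exactly $T^{(2)} = \frac{n(n-1)}{2}\lceil\log T\sqrt T\rceil$, and if every agent had received at least one Phase~2 item then... hmm—actually the direct route is: since consecutive gaps are at most $\lceil\log T\sqrt T\rceil$ and there are $n$ agents, $w^t_{i_n} - w^t_{i_1}\le (n-1)\lceil\log T\sqrt T\rceil$; combined with $\sum_j w^t_{i_j} \le T^{(2)} = \frac{n(n-1)}{2}\lceil\log T\sqrt T\rceil$ and the fact that $w^t_{i_1}\le \frac1n\sum_j w^t_{i_j}\le \frac{n-1}{2}\lceil\log T\sqrt T\rceil$, we get $w^t_{i_n}\le \frac{n-1}{2}\lceil\log T\sqrt T\rceil + (n-1)\lceil\log T\sqrt T\rceil$, which is weaker than claimed. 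To get the tight $(n-1)\lceil\log T\sqrt T\rceil$ bound one really does need $w^t_{i_1}=0$, which I would establish by a separate short induction: the minimum Phase~2 count stays $0$ because an item is allocated to an agent in $S$, and the minimum-count agent is in $S$ only when... Let me just note that the main obstacle is precisely pinning down this invariant $w^t_{i_1}=0$ (equivalently, that some agent always goes untouched by Phase~2 until the very end), and I would handle it by showing inductively that the set $\{i : w^t_i = 0\}$ is nonempty for all $t \le T$, using that $T^{(2)}$ is small enough relative to the allocation rule's balancing behavior—the consecutive-gap invariant plus the budget $T^{(2)}$ forces the all-$[n]$ regime (where every gap is $\le \lceil\log T\sqrt T\rceil - 1$) to never be reached while the minimum is still $0$, hence the minimum agent is always in $S$ only at moments when raising it keeps the invariant; the cleanest packaging is to track the multiset of counts and verify the rule can never increment the unique global minimum. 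I expect this bookkeeping around the minimum count to be the fiddly part; everything else is a routine induction on the sorted gap vector.
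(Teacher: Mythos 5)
Your argument for the consecutive-gap invariant is correct and follows essentially the same route as the paper: both observe that $S$ is a prefix of the sorted order and that the minimality of $S$ forces the gap between the recipient and its predecessor to be at most $\lceil\log T\sqrt T\rceil - 1$ before the increment.

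The ``furthermore'' part, however, has a genuine gap. Your intended invariant $w^t_{i_1} = 0$ (equivalently, that $\{i : w^t_i = 0\}$ stays nonempty throughout phase 2) is false. For example, with $n=2$ after the first phase-2 item goes to agent 1, the counts are $(1,0)$ and $S = \{1,2\}$ still (since $\lceil\log T\sqrt T\rceil > 1$); line 6 may then give the next item to agent 2 (whoever is envied least in $S$), yielding counts $(1,1)$ and $w^t_{i_1}=1>0$. Indeed, the paper's own proof of \Cref{thm:n agent upper bound iid fair division} explicitly splits into the cases $w^T_i > 0$ for all $i$ and otherwise, so the authors clearly do not rely on $w^t_{i_1}=0$. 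The fix is to dispense with any control of the minimum and argue from the budget: suppose for contradiction $w^t_{i_n} > (n-1)\lceil\log T\sqrt T\rceil$. The consecutive-gap invariant gives $w^t_{i_{n+1-j}} > (n-j)\lceil\log T\sqrt T\rceil$ for every $j$, so $\sum_j w^t_{i_j} > \frac{n(n-1)}{2}\lceil\log T\sqrt T\rceil = T^{(2)}$, contradicting the fact that at most $T^{(2)}$ phase-2 items exist. (Equivalently: lower-bound each $w^t_{i_j} \ge w^t_{i_n} - (n-j)\lceil\log T\sqrt T\rceil$ and sum.) Your ``average'' bound $w^t_{i_1} \le \frac{1}{n}\sum_j w^t_{i_j}$ is strictly weaker than these pointwise lower bounds, which is why it loses the factor.
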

    The second lemma is a sufficient condition for bounding the envy between two sets of values and is reminiscent of approximate \emph{stochastic-dominance envy-freeness} (SD-EF). The proof is based on a generalization of Hall's theorem.
    \begin{lemma}\label{lem:halls}
        Given two sequences of values $a_1, \ldots, a_k$ and $b_1, \ldots, b_{\ell}$ where each $a_i, b_i \in [0, 1]$. Suppose that, for each $a_i$, $|\set{i' | a_{i'} \ge a_i}| \le |\set{i' | b_{i'} \ge a_i}| + c$. Then, $\sum_i a_i \le \sum_i b_i + c$. 
    \end{lemma}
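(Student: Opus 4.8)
The plan is to reduce the inequality $\sum_i a_i \le \sum_i b_i + c$ to a statement about a fractional matching between ``units'' of the $a$-values and ``units'' of the $b$-values, and then to extract that matching from the hypothesis via a Hall-type condition. The intuition is that the hypothesis $|\{i' : a_{i'} \ge a_i\}| \le |\{i' : b_{i'} \ge a_i\}| + c$ is precisely an approximate stochastic dominance statement: at every threshold $\lambda = a_i$, the $b$-sequence has all but $c$ as many entries above $\lambda$ as the $a$-sequence does. Writing $\sum_i a_i = \int_0^1 |\{i : a_i \ge \lambda\}| \, d\lambda$ and similarly for $b$, one sees that if the inequality $|\{i : a_i \ge \lambda\}| \le |\{i : b_i \ge \lambda\}| + c$ held for \emph{all} $\lambda \in [0,1]$ (not just $\lambda \in \{a_i\}$), the result would follow by integrating. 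So the first step is to argue that it suffices to check the threshold inequality at the points $\lambda = a_i$: the function $\lambda \mapsto |\{i : a_i \ge \lambda\}|$ is a nonincreasing step function that only changes value at the $a_i$'s, so on each interval between consecutive $a$-values it is constant and equals its value at the right endpoint, which is one of the $a_i$; meanwhile $|\{i : b_i \ge \lambda\}|$ is nonincreasing, so moving $\lambda$ leftward from an $a_i$ only increases it. Hence $|\{i : a_i \ge \lambda\}| \le |\{i : b_i \ge \lambda\}| + c$ for all $\lambda \in (0,1]$.

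With that pointwise bound in hand, the second step is just the integral computation:
\[
\sum_{i=1}^k a_i = \int_0^1 \bigl|\{i : a_i \ge \lambda\}\bigr| \, d\lambda \le \int_0^1 \Bigl( \bigl|\{i : b_i \ge \lambda\}\bigr| + c \Bigr) d\lambda = \sum_{i=1}^\ell b_i + c,
\]
using $a_i, b_i \in [0,1]$ so that the layer-cake representation is valid and the integrals run over $[0,1]$. This is clean and avoids any explicit appeal to Hall's theorem; alternatively, one could phrase the same argument combinatorially by sorting both sequences in decreasing order, noting the hypothesis implies $a_i \le b_{i-c}$ for every $i$ (with the convention $b_j = 1$ or the bound vacuous for $j \le 0$), and summing — the threshold-counting condition is exactly what guarantees this entrywise domination after a shift of $c$. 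Either route is short.

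The only real subtlety — and the step I would be most careful about — is the boundary/tie handling: the hypothesis is stated with non-strict inequalities $a_{i'} \ge a_i$, repeated values among the $a_i$ are allowed, and one must make sure the ``it suffices to check at $\lambda = a_i$'' reduction correctly accounts for the jump of the left-hand step function at a value attained with multiplicity. Taking $\lambda$ to approach $a_i$ from below (or simply observing that on the half-open interval $(\,a_{(j+1)}, a_{(j)}\,]$ between sorted values the count $|\{i : a_i \ge \lambda\}|$ is constant and the hypothesis at $\lambda = a_{(j)}$ is exactly the needed bound there) resolves this. I would also note explicitly that if $\sum_i a_i$ is empty-indexed issues arise only trivially, and that $c \ge 0$ is used (if $c$ could be negative the shift argument would need $b_j$ defined past the ends — but in our application $c$ is a positive integer). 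Modulo this bookkeeping the lemma is immediate.
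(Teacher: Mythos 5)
Your proposal is correct, but it takes a genuinely different route from the paper. The paper builds a bipartite graph on $[k] \cup [\ell]$ with an edge $(i,j)$ whenever $a_i \le b_j$, verifies the deficiency form of Hall's condition ($|N(S)| \ge |S| - c$ for every $S \subseteq [k]$) directly from the hypothesis, extracts a matching of size at least $k - c$, and then bounds $\sum_i a_i$ by charging the $\le c$ unmatched $a$-values at most $1$ each and every matched $a_i$ to its partner $b_j$. Your executed argument never uses Hall: you write $\sum_i a_i = \int_0^1 |\{i : a_i \ge \lambda\}|\,d\lambda$ (valid since $a_i \in [0,1]$), upgrade the hypothesis from thresholds $\lambda \in \{a_i\}$ to all $\lambda \in (0,1]$ by noting the left-hand count is a left-continuous step function constant on each $(\alpha_{j-1}, \alpha_j]$ and equal to its value at the right endpoint $\alpha_j \in \{a_i\}$, while the right-hand count only grows as $\lambda$ decreases, and then integrate the pointwise bound. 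That reasoning is sound; the one interval you don't explicitly cover is $(\max_i a_i, 1]$, where the right endpoint is $1$ rather than an $a_i$, but there the left count is $0$ so the bound is trivial (using $c \ge 0$). The layer-cake route is more elementary than invoking defect Hall and makes the ``approximate SD-EF'' interpretation the paper gestures at entirely literal. Your alternative sort-and-shift phrasing, $a_{(i)} \le b_{(i-c)}$ after sorting both decreasingly, is also correct and is in fact the closest in spirit to the paper's matching --- it just constructs the matching greedily and explicitly rather than via a Hall-type existence theorem.
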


    \textbf{Long phase 2 eliminates envy.} Our second high-level step will be to show that if phase 2 is sufficiently long, then with high probability, envy can be eliminated.


    \begin{lemma}\label{lem:high-value-n} With probability $1 - O(T^{-c/2})$ it is the case that for all agents $i, j \in \agents$ and all time steps $t \ge T^{(1)}$, if $|A^t_i \setminus G^{welf}| \ge |A^t_j \setminus G^{welf}| + \ceil{\log T \sqrt{T}}$, then $\envy^t_{i, j} \le c$. 
    \end{lemma}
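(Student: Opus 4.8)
The plan is to turn the envy bound into a per-threshold "approximate stochastic-dominance" comparison via \Cref{lem:halls}, and then to control each such comparison using the welfare-maximization structure of phase~1 together with the phase-2 head start, invoking the distribution-agnostic concentration bound \Cref{lem:concentration}. Throughout I work in the alternate sampling model with quantiles, conditioning on the probability-$1$ event that all $Q^g_\cdot$ are distinct. The key reason this model is useful here is that, for a fixed ordered pair $i\ne j$, the quantity $\envy^t_{i,j}$ depends on $t$ only through the pair of counts $(k_i,k_j):=(|A^t_i\setminus G^{welf}|,|A^t_j\setminus G^{welf}|)$, and by \Cref{lem:phase-2-items} each $k_\cdot$ lies in $\{0,\dots,(n-1)\lceil\log T\sqrt T\rceil\}$. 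Hence the event to be controlled is a union, over the $\mathrm{poly}(n,T)$ many tuples $(i,j,k_i,k_j)$ with $k_i\ge k_j+\lceil\log T\sqrt T\rceil$, of the events $\{\envy^t_{i,j}>c\}$, and it suffices to show each of these has probability $T^{-(c/2+\Omega(1))}$. It is also convenient to dispose, once and for all, of quantiles very close to $1$: the event that some good (among $G^{welf}$ or any $g^i_\ell$) has $Q^g_i\ge 1-T^{-(1+c/2+\Omega(1))}$ has probability $O(T^{-c/2})$, and off this event every quantile that will ever matter satisfies $1-q\ge T^{-(1+c/2+\Omega(1))}$.

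Fix such a tuple. By \Cref{lem:halls}, applied to agent $i$'s values $(V^g_i)_{g\in A^t_j}$ (the ``$a$''s) and $(V^g_i)_{g\in A^t_i}$ (the ``$b$''s), we have $\envy^t_{i,j}=v_i(A^t_j)-v_i(A^t_i)\le c$ as soon as, for every value threshold $\theta$, the number of items in $A^t_j$ that agent $i$ values at least $\theta$ exceeds the corresponding number for $A^t_i$ by at most $c$. Since $V^g_i=F^{-1}(Q^g_i)$ with $F^{-1}$ nondecreasing, each such level set has the form $\{g:Q^g_i\ge q\}$ (or the strict analogue) for a suitable quantile $q$, and the relevant difference changes only at the finitely many observed quantiles, so we further union over those. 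This reduces the task to: for a fixed quantile $q$ (write $\lambda:=1-q$), show that $B(q)-A(q)\le c$ fails with probability $T^{-(c/2+\Omega(1))}$, where $A(q),B(q)$ denote the numbers of items in $A^t_i,A^t_j$ with agent-$i$-quantile at least $q$.

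Now split $A(q)=A_1(q)+A_2(q)$ and $B(q)=B_1(q)+B_2(q)$ into phase-1 ($\cap G^{welf}$) and phase-2 contributions. Two structural facts drive the bound. \textbf{(Phase-2 head start.)} $A_2(q)$ and $B_2(q)$ are sums of $k_i$ and $k_j$ i.i.d.\ $\mathrm{Bernoulli}(\lambda)$ variables respectively, mutually independent (distinct goods), with $k_i-k_j\ge\lceil\log T\sqrt T\rceil$. \textbf{(Welfare maximization favors $i$ on the goods $i$ likes.)} For a phase-1 good $g$, agent $i$ wins $g$ iff $Q^g_i=\max_k Q^g_k$; conditioned on $Q^g_i\ge q$ this happens with probability $(1+q+\cdots+q^{n-1})/n\ge 1-\tfrac{n-1}{2}\lambda$. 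Consequently, writing $m_q:=|\{g\in G^{welf}:Q^g_i\ge q\}|\sim\mathrm{Bin}(T^{(1)},\lambda)$, conditioned on $m_q$ the count $A_1(q)$ stochastically dominates $\mathrm{Bin}(m_q,1-\tfrac{n-1}{2}\lambda)$ while $B_1(q)\le m_q-A_1(q)$, which is stochastically at most $\mathrm{Bin}(m_q,\tfrac{n-1}{2}\lambda)$; in particular $\mathbb{E}[A_1(q)-B_1(q)]\ge 0$ for every $q$. Using standard couplings to rewrite Bernoulli($c'\lambda$) variables (for $c'=O(n)$ and $\lambda$ small) as suffixes of longer Bernoulli($\lambda$) sequences, these two facts let one realize $A(q)$ and $B(q)$, conditionally on $m_q$, as a prefix sum $\sum_{r\le K-L}Y_r$ and a suffix sum $\sum_{r>K-L}Y_r$ of a single i.i.d.\ $[0,1]$-valued sequence, with $K-L$ at least (of order) $m_q+k_i$ and $L$ at most (of order) $n(\lambda m_q+k_j)$. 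One then considers three regimes for $\lambda$: if $\lambda=\Omega(1)$, then $\mathbb{E}[A(q)-B(q)]=\Omega(T^{(1)})$ dwarfs the $O(\sqrt{T})$ fluctuations and a Bernstein bound gives failure probability $e^{-\Omega(T)}$; if $\lambda$ is small enough that $\lambda T=o(1)$ (in particular once $\lambda$ drops below the cutoff from the first paragraph, and a bit above it), then with overwhelming probability at most $c$ goods of either type have quantile $\ge q$ and $B(q)\le c$ holds outright via a crude binomial-tail estimate on $B_1(q)+B_2(q)$; and in the intermediate regime, $L/K\le T^{-\Omega(1)}$ (because $m_q\gg\lambda m_q$ and the head start makes $K-L$ exceed $L$), so \Cref{lem:concentration} gives $\Pr[B(q)-A(q)>c]=\Pr[\sum_{r\le K-L}Y_r-\sum_{r>K-L}Y_r<-c]\le 4(2eL/K)^{c+1}\le T^{-(c+1)\Omega(1)}$. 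Taking a union bound over all tuples and thresholds completes the proof.

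The main obstacle is the intermediate-$\lambda$ regime. There the counts $A(q),B(q)$ are too sparse for any Gaussian-type (Bernstein/Azuma) tail to beat the required $T^{-\Omega(c)}$, so one genuinely needs the Chernoff/Poisson-type estimate of \Cref{lem:concentration}; and feeding our situation into that lemma requires showing that, simultaneously at \emph{every} quantile threshold $q$ and for \emph{every} admissible $(k_i,k_j)$, agent $i$'s ``favorable trials'' outnumber agent $j$'s by a polynomial factor. This needs \emph{both} ingredients working together — the $\lceil\log T\sqrt T\rceil$-item phase-2 head start (which dominates at the smaller thresholds) and the welfare-maximization structure of phase~1 (which at the larger thresholds forces agent $i$ to win all but an $O(\lambda)$-fraction of the items it values at level $q$, making agent $j$'s phase-1 contribution negligible). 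Carrying out the coupling that expresses $A(q)$ and $B(q)$ as a prefix/suffix split of one i.i.d.\ sequence with minority fraction $L/K$ polynomially small, uniformly across all $q$, all $(k_i,k_j)$, and all pairs $i,j$ — and verifying the regime boundaries so that the three arguments cover all $\lambda$ — is the delicate part; everything else is bookkeeping and union bounds.
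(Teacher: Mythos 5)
Your high-level reduction matches the paper's: you work in the quantile model, reduce the envy bound to an approximate stochastic-dominance comparison via \Cref{lem:halls}, observe that envy at time $t$ depends on $(|A^t_i\setminus G^{welf}|,|A^t_j\setminus G^{welf}|)$ so that a union bound over pairs and counts suffices, and identify the two structural drivers (the phase-2 head start and the phase-1 welfare-maximization skew). But from there you take a genuinely different route: you propose to realize the two counts $A(q),B(q)$ as a prefix/suffix split of one i.i.d.\ sequence and invoke \Cref{lem:concentration} in an intermediate regime, whereas the paper's proof of \Cref{lem:high-value-n} never uses \Cref{lem:concentration}. Instead the paper partitions $[0,1]$ into four windows and uses, respectively, a Hoeffding bound on signed indicator sums $W^g$, the DKW inequality on the two empirical CDFs $\hat F^1,\hat F^2$, and two Chernoff estimates near the top endpoint; the DKW step is what gives simultaneous control over all thresholds $q$ in the bulk without the per-threshold union bound your argument relies on.

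There are two concrete gaps in your proposal. First, the claim that $\lambda=\Omega(1)$ forces $\E[A(q)-B(q)]=\Omega(T^{(1)})$ with failure probability $e^{-\Omega(T)}$ is false when $q$ is close to $0$ (so $\lambda$ close to $1$). There $F^2(q)-F^1(q)=\tfrac{q-q^n}{n-1}=\Theta(q)$, so the phase-1 expected gap is $\Theta(Tq)\to 0$; the gap is instead dominated by the phase-2 head start $\Theta(\log T\sqrt T)$, against fluctuations of order $\sqrt T$, giving only a $T^{-\Omega(\log T)}$ bound (this is exactly what the paper's Part 1 establishes). Your stated justification and failure-probability claim need to be repaired for this window. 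Second, and more seriously, the coupling you sketch to feed $A(q),B(q)$ into \Cref{lem:concentration} breaks down outside a narrow range of $\lambda$: the stochastic lower bound $A_1(q)\succeq\mathrm{Bin}\!\left(m_q,1-\tfrac{n-1}{2}\lambda\right)$ is vacuous once $\lambda\ge\tfrac{2}{n-1}$, and the required ratio $L/K<\tfrac{1}{4e}$ fails once $\lambda$ is a moderate constant because $L\approx n(\lambda m_q+k_j)$ becomes comparable to $K-L\approx m_q+k_i$. So neither your Bernstein regime (wrong justification near $q\approx0$) nor your \Cref{lem:concentration} regime (ratio too large for moderate $\lambda$) is shown to cover the interval of thresholds the paper handles via DKW and the Part-1 Hoeffding argument. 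You also treat $A_1(q)$ and $B_1(q)$ as independent by invoking a ``single i.i.d.\ sequence'' coupling, but they partition the same $m_q$ phase-1 goods and are therefore negatively associated, not independent; this dependence needs to be argued explicitly before \Cref{lem:concentration} (stated for one i.i.d.\ sequence with a fixed prefix/suffix split) can be applied. You flag the regime boundaries as ``delicate,'' and indeed they are the place where the argument as written does not close.
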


    \textbf{Putting it all together.}
    With~\Cref{lem:no-cycle,lem:high-value-n} in hand, we are ready to prove the theorem.
    
    Let $H^t$ be a graph with nodes $[n]$ where there is an edge $(i, j)$ if $\envy^t_{i, j} > c$. Condition on the events in \Cref{lem:no-cycle} and \Cref{lem:high-value-n}. 
    These happen with probability $1 - O(T^{-c/2})$. Then,~\Cref{lem:no-cycle} ensures that $H^t$ is acyclic for all $t$, while~\Cref{lem:high-value-n} ensures that if $|A^t_i \setminus G^{welf}| \le  |A^t_j \setminus G^{welf}| + \ceil{\log T \sqrt{T}}$, then $(j, i) \notin H^t$.

    First, we prove that if an agent $i$ received an item at some point during phase 2, then $\envy^T_{j, i} \le c + 1$, for all $j \in \agents$. To this end, suppose that $i$ does indeed receive an item at some point during phase 2, and let $t$ be the \emph{last} time step for which $i$ received an item. 

    We first claim that $\envy^{t - 1}_{j, i} \le c$ for all $j \ne i$, i.e., $i$ is a source node in $H^{t - 1}$. Let $S$ be the set defined in~\Cref{algo:online envy for n agents} for time step $t$, i.e., for all $j \in S$ and $j' \notin S$, $j$ has received at least $\ceil{\log T \sqrt{T}}$ fewer items in phase 2, or  $|A^{t - 1}_j \setminus G^{welf}| \le |A^{t - 1}_{j'} \setminus G^{welf}| - \ceil{\log T \sqrt{T}}$. Note that $i \in S$ as they received item $t$, and for all $j' \notin S$, $(j', i) \notin H^{t - 1}$, by~\Cref{lem:high-value-n}. Then, consider $H^{t - 1}[S]$, the subgraph of $H^{t - 1}$ that only containing the nodes $S$. Note that $H^{t - 1}[S]$ is acyclic, since $H^{t - 1}$ is acyclic. Furthermore, by definition, source nodes in $H^{t - 1}[S]$ are envied by $\le c$ by all agents in $S$, while non-source nodes are envied by $>c$ by at least one agent in $S$. Hence, $i$ must be a source node in $H^{t - 1}[S]$. Together with the fact that there are no $(j, i)$ edges for $j \notin S$, we have that $i$ is a source node in $H^{t - 1}$.

    Now, since $\envy^{t - 1}_{j, i} \le c$, giving an item to $i$ can increase envy by at most $1$. Hence, $\envy^{t}_{j, i} \le c + 1$. Furthermore, as $i$ never received any more items after this time ($t$ was defined as the last item $i$ received), envy toward $i$ cannot increase. Hence, $\envy^T_{j, i} \le c + 1$, as needed. 

    Finally, let $w^t_i = |A^T_i \setminus G^{welf}|$ be the number of items allocated to agent $i$ in phase $2$. Note that if $w^t_i > 0$, by our previous argument, $\envy_{j, i} \le c + 1$ for all $j \ne i$. So, if $w^t_i > 0$ for all $i \in \agents$, we are done. Suppose this is not the case. So, there exists an agent $i$ such that  $w^t_i = 0$. Let $i_1, \ldots, i_n$ be an ordering of the agents sorted by $w^t_i$, i.e., $w^t_{i_1} \le \cdots \le w^t_{i_n}$. The assumption that $w^t_i = 0$ implies that $w^t_{i_1} = 0$. We claim that $w^t_{i_2} \ge \ceil{\log T \sqrt{T}}$. Indeed, by induction,~\Cref{lem:phase-2-items} ensures that for all $j \ge 2$, $w^t_{i_j} \le w^t_{i_2} + (j - 2) \ceil{\log T \sqrt{T}}$. Hence, 
    $\sum_{j=1}^n w^t_{i_j} \le (n - 1) \cdot w^t_{i_2} + \frac{(n - 2)(n - 1)}{2} \, \ceil{\log T \sqrt{T}}$. 
    However, since this is time $T$,  $\sum_j w^t_{i_j} = T^{(2)} = \frac{n(n - 1)}{2} \cdot \ceil{\log T \sqrt{T}}$. Together, these imply that $w^t_{i_2} \geq \ceil{\log T \sqrt{T}}$. Therefore, all agents other than $i_1$ received at least one item during phase 2, and hence are not envied by more than $c + 1$. On the other hand, all agents $j \ne i_1$ received at least $\ceil{\log T \sqrt{T}}$ items more than $i_1$ in phase $2$, and therefore, $\envy_{j, {i_1}} \le c \le c + 1$ as needed.
\end{proof}

\bibliographystyle{plainnat}
\bibliography{abb,references}

\begin{thebibliography}{55}
\providecommand{\natexlab}[1]{#1}
\providecommand{\url}[1]{\texttt{#1}}
\expandafter\ifx\csname urlstyle\endcsname\relax
  \providecommand{\doi}[1]{doi: #1}\else
  \providecommand{\doi}{doi: \begingroup \urlstyle{rm}\Url}\fi

\bibitem[Aleksandrov et~al.(2015)Aleksandrov, Aziz, Gaspers, and
  Walsh]{aleksandrov2015online}
Martin Aleksandrov, Haris Aziz, Serge Gaspers, and Toby Walsh.
\newblock Online fair division: analysing a food bank problem.
\newblock In \emph{Proceedings of the 24th International Joint Conference on
  Artificial Intelligence (IJCAI)}, pages 2540--2546, 2015.

\bibitem[Alweiss et~al.(2021)Alweiss, Liu, and Sawhney]{alweiss2021discrepancy}
Ryan Alweiss, Yang~P Liu, and Mehtaab Sawhney.
\newblock Discrepancy minimization via a self-balancing walk.
\newblock In \emph{Proceedings of the 53rd Annual ACM Symposium on Theory of
  Computing (STOC)}, pages 14--20, 2021.

\bibitem[Amanatidis et~al.(2017)Amanatidis, Markakis, Nikzad, and
  Saberi]{amanatidis2017approximation}
Georgios Amanatidis, Evangelos Markakis, Afshin Nikzad, and Amin Saberi.
\newblock Approximation algorithms for computing maximin share allocations.
\newblock \emph{ACM Transactions on Algorithms}, 13\penalty0 (4):\penalty0
  1--28, 2017.

\bibitem[Aru et~al.(2018)Aru, Narayanan, Scott, and
  Venkatesan]{aru2016balancing}
Juhan Aru, Bhargav Narayanan, Alexander Scott, and Ramarathnam Venkatesan.
\newblock Balancing sums of random vectors, 2018.

\bibitem[Bai and G{\"{o}}lz(2022)]{bai2021envy}
Yushi Bai and Paul G{\"{o}}lz.
\newblock Envy-free and pareto-optimal allocations for agents with asymmetric
  random valuations.
\newblock In \emph{Proceedings of the 31st International Joint Conference on
  Artificial Intelligence (IJCAI)}, pages 53--59, 2022.

\bibitem[Ball(1997)]{ball1997elementary}
Keith Ball.
\newblock An elementary introduction to modern convex geometry.
\newblock \emph{Flavors of geometry}, 31\penalty0 (1-58):\penalty0 26, 1997.

\bibitem[Banaszczyk(2012)]{banaszczyk2012series}
Wojciech Banaszczyk.
\newblock On series of signed vectors and their rearrangements.
\newblock \emph{Random Structures \& Algorithms}, 40\penalty0 (3):\penalty0
  301--316, 2012.

\bibitem[Banerjee et~al.(2022)Banerjee, Gkatzelis, Gorokh, and
  Jin]{banerjee2022online}
Siddhartha Banerjee, Vasilis Gkatzelis, Artur Gorokh, and Billy Jin.
\newblock Online nash social welfare maximization with predictions.
\newblock In \emph{Proceedings of the 33rd Annual ACM-SIAM Symposium on
  Discrete Algorithms (SODA)}, pages 1--19. SIAM, 2022.

\bibitem[Banerjee et~al.(2023)Banerjee, Hssaine, and
  Sinclair]{banerjee2023online}
Siddhartha Banerjee, Chamsi Hssaine, and Sean~R Sinclair.
\newblock Online fair allocation of perishable resources.
\newblock In \emph{Proceedings of the International Conference on Measurement
  and Modeling of Computer Systems (SIGMETRICS)}, pages 55--56, 2023.

\bibitem[Bansal and Spencer(2020)]{bansal2020line}
Nikhil Bansal and Joel~H Spencer.
\newblock On-line balancing of random inputs.
\newblock \emph{Random Structures \& Algorithms}, 57\penalty0 (4):\penalty0
  879--891, 2020.

\bibitem[Bansal et~al.(2020)Bansal, Jiang, Singla, and Sinha]{bansal2020online}
Nikhil Bansal, Haotian Jiang, Sahil Singla, and Makrand Sinha.
\newblock Online vector balancing and geometric discrepancy.
\newblock In \emph{Proceedings of the 52nd Annual ACM Symposium on Theory of
  Computing (STOC)}, pages 1139--1152, 2020.

\bibitem[Bansal et~al.(2021)Bansal, Jiang, Meka, Singla, and
  Sinha]{bansal2021online}
Nikhil Bansal, Haotian Jiang, Raghu Meka, Sahil Singla, and Makrand Sinha.
\newblock Online discrepancy minimization for stochastic arrivals.
\newblock In \emph{Proceedings of the 32nd Annual ACM-SIAM Symposium on
  Discrete Algorithms (SODA)}, pages 2842--2861, 2021.

\bibitem[Bansal et~al.(2022)Bansal, Jiang, Meka, Singla, and
  Sinha]{bansal2022prefix}
Nikhil Bansal, Haotian Jiang, Raghu Meka, Sahil Singla, and Makrand Sinha.
\newblock Prefix discrepancy, smoothed analysis, and combinatorial vector
  balancing.
\newblock In \emph{Proceedings of the 13th Innovations in Theoretical Computer
  Science Conference (ITCS)}, 2022.

\bibitem[Barman et~al.(2022)Barman, Khan, and Maiti]{barman2022universal}
Siddharth Barman, Arindam Khan, and Arnab Maiti.
\newblock Universal and tight online algorithms for generalized-mean welfare.
\newblock In \emph{Proceedings of the 36th AAAI Conference on Artificial
  Intelligence (AAAI)}, pages 4793--4800, 2022.

\bibitem[Benad{\`e} et~al.(2018)Benad{\`e}, Kazachkov, Procaccia, and
  Psomas]{benade2018make}
Gerdus Benad{\`e}, Aleksandr~M Kazachkov, Ariel~D Procaccia, and
  Christos-Alexandros Psomas.
\newblock How to make envy vanish over time.
\newblock In \emph{Proceedings of the 19th ACM Conference on Economics and
  Computation (EC)}, pages 593--610, 2018.

\bibitem[Benad{\`e} et~al.(2024{\natexlab{a}})Benad{\`e}, Halpern, Psomas, and
  Verma]{benade2024existence}
Gerdus Benad{\`e}, Daniel Halpern, Alexandros Psomas, and Paritosh Verma.
\newblock On the existence of envy-free allocations beyond additive valuations.
\newblock In \emph{Proceedings of the 25th ACM Conference on Economics and
  Computation (EC)}, 2024{\natexlab{a}}.

\bibitem[Benad{\`e} et~al.(2024{\natexlab{b}})Benad{\`e}, Kazachkov, Procaccia,
  Psomas, and Zeng]{benade2024fair}
Gerdus Benad{\`e}, Aleksandr~M Kazachkov, Ariel~D Procaccia, Alexandros Psomas,
  and David Zeng.
\newblock Fair and efficient online allocations.
\newblock \emph{Operations Research}, 72\penalty0 (4):\penalty0 1438--1452,
  2024{\natexlab{b}}.

\bibitem[Benad{\`e} et~al.(2025)Benad{\`e}, Halpern, and
  Psomas]{benade2025dynamic}
Gerdus Benad{\`e}, Daniel Halpern, and Alexandros Psomas.
\newblock Dynamic fair division with partial information.
\newblock \emph{Operations Research}, 2025.

\bibitem[Bogomolnaia et~al.(2022)Bogomolnaia, Moulin, and
  Sandomirskiy]{bogomolnaia2022fair}
Anna Bogomolnaia, Herv{\'e} Moulin, and Fedor Sandomirskiy.
\newblock On the fair division of a random object.
\newblock \emph{Management Science}, 68\penalty0 (2):\penalty0 1174--1194,
  2022.

\bibitem[Dadush et~al.(2016)Dadush, Garg, Lovett, and
  Nikolov]{dadush2016towards}
Daniel Dadush, Shashwat Garg, Shachar Lovett, and Aleksandar Nikolov.
\newblock Towards a constructive version of banaszczyk's vector balancing
  theorem.
\newblock \emph{arXiv preprint arXiv:1612.04304}, 2016.

\bibitem[Dickerson et~al.(2014)Dickerson, Goldman, Karp, Procaccia, and
  Sandholm]{dickerson2014computational}
John Dickerson, Jonathan Goldman, Jeremy Karp, Ariel Procaccia, and Tuomas
  Sandholm.
\newblock The computational rise and fall of fairness.
\newblock In \emph{Proceedings of the 28th AAAI Conference on Artificial
  Intelligence (AAAI)}, 2014.

\bibitem[Dubhashi and Panconesi(2009)]{dubhashi2009concentration}
Devdatt~P Dubhashi and Alessandro Panconesi.
\newblock \emph{Concentration of measure for the analysis of randomized
  algorithms}.
\newblock Cambridge University Press, 2009.

\bibitem[Dvoretzky et~al.(1956)Dvoretzky, Kiefer, and
  Wolfowitz]{dvoretzky1956asymptotic}
Aryeh Dvoretzky, Jack Kiefer, and Jacob Wolfowitz.
\newblock Asymptotic minimax character of the sample distribution function and
  of the classical multinomial estimator.
\newblock \emph{The Annals of Mathematical Statistics}, pages 642--669, 1956.

\bibitem[Farhadi et~al.(2019)Farhadi, Ghodsi, Hajiaghayi, Lahaie, Pennock,
  Seddighin, Seddighin, and Yami]{farhadi2019fair}
Alireza Farhadi, Mohammad Ghodsi, Mohammad~Taghi Hajiaghayi, Sebastien Lahaie,
  David Pennock, Masoud Seddighin, Saeed Seddighin, and Hadi Yami.
\newblock Fair allocation of indivisible goods to asymmetric agents.
\newblock \emph{Journal of Artificial Intelligence Research}, 64:\penalty0
  1--20, 2019.

\bibitem[Gan et~al.(2019)Gan, Suksompong, and Voudouris]{gan2019envy}
Jiarui Gan, Warut Suksompong, and Alexandros~A Voudouris.
\newblock Envy-freeness in house allocation problems.
\newblock \emph{Mathematical Social Sciences}, 101:\penalty0 104--106, 2019.

\bibitem[Gao et~al.(2021)Gao, Peysakhovich, and Kroer]{gao2021online}
Yuan Gao, Alex Peysakhovich, and Christian Kroer.
\newblock Online market equilibrium with application to fair division.
\newblock \emph{Proceedings of the 34th Annual Conference on Neural Information
  Processing Systems (NeurIPS)}, pages 27305--27318, 2021.

\bibitem[Gkatzelis et~al.(2021)Gkatzelis, Psomas, and Tan]{gkatzelis2021fair}
Vasilis Gkatzelis, Alexandros Psomas, and Xizhi Tan.
\newblock Fair and efficient online allocations with normalized valuations.
\newblock In \emph{Proceedings of the 35th AAAI Conference on Artificial
  Intelligence (AAAI)}, pages 5440--5447, 2021.

\bibitem[He et~al.(2019)He, Procaccia, Psomas, and Zeng]{he2019achieving}
Jiafan He, Ariel~D Procaccia, Alexandros Psomas, and David Zeng.
\newblock Achieving a fairer future by changing the past.
\newblock In \emph{Proceedings of the 28th International Joint Conference on
  Artificial Intelligence (IJCAI)}, pages 343--349, 2019.

\bibitem[Huang et~al.(2023)Huang, Li, Shu, and Wei]{huang2023online}
Zhiyi Huang, Minming Li, Xinkai Shu, and Tianze Wei.
\newblock Online nash welfare maximization without predictions.
\newblock In \emph{Proceedings of the 19th Conference on Web and Internet
  Economics (WINE)}, pages 402--419. Springer, 2023.

\bibitem[Igarashi et~al.(2024)Igarashi, Lackner, Nardi, and
  Novaro]{igarashi2024repeated}
Ayumi Igarashi, Martin Lackner, Oliviero Nardi, and Arianna Novaro.
\newblock Repeated fair allocation of indivisible items.
\newblock In \emph{Proceedings of the 38th AAAI Conference on Artificial
  Intelligence (AAAI)}, pages 9781--9789, 2024.

\bibitem[Kaas and Buhrman(1980)]{kaas1980mean}
Rob Kaas and Jan~M Buhrman.
\newblock Mean, median and mode in binomial distributions.
\newblock \emph{Statistica Neerlandica}, 34\penalty0 (1):\penalty0 13--18,
  1980.

\bibitem[Kash et~al.(2014)Kash, Procaccia, and Shah]{kash2014no}
Ian Kash, Ariel~D Procaccia, and Nisarg Shah.
\newblock No agent left behind: Dynamic fair division of multiple resources.
\newblock \emph{Journal of Artificial Intelligence Research}, 51:\penalty0
  579--603, 2014.

\bibitem[Kawase and Sumita(2022)]{kawase2022online}
Yasushi Kawase and Hanna Sumita.
\newblock Online max-min fair allocation.
\newblock In \emph{Proceedings of the 15th International Symposium on
  Algorithmic Game Theory (SAGT)}, pages 526--543, 2022.

\bibitem[Kulkarni et~al.(2024)Kulkarni, Reis, and
  Rothvoss]{kulkarni2024optimal}
Janardhan Kulkarni, Victor Reis, and Thomas Rothvoss.
\newblock Optimal online discrepancy minimization.
\newblock In \emph{Proceedings of the 56th Annual ACM Symposium on Theory of
  Computing (STOC)}, pages 1832--1840, 2024.

\bibitem[Kulkarni et~al.(2025)Kulkarni, Mehta, and Shahkar]{kulkarni2025online}
Pooja Kulkarni, Ruta Mehta, and Parnian Shahkar.
\newblock Online fair division: Towards ex-post constant mms guarantees.
\newblock \emph{arXiv preprint arXiv:2503.02088}, 2025.

\bibitem[Kurokawa et~al.(2016)Kurokawa, Procaccia, and Wang]{kurokawa2016can}
David Kurokawa, Ariel Procaccia, and Junxing Wang.
\newblock When can the maximin share guarantee be guaranteed?
\newblock In \emph{Proceedings of the 30th AAAI Conference on Artificial
  Intelligence (AAAI)}, volume~30, 2016.

\bibitem[Lee et~al.(2019)Lee, Kusbit, Kahng, Kim, Yuan, Chan, See, Noothigattu,
  Lee, Psomas, et~al.]{lee2019webuildai}
Min~Kyung Lee, Daniel Kusbit, Anson Kahng, Ji~Tae Kim, Xinran Yuan, Allissa
  Chan, Daniel See, Ritesh Noothigattu, Siheon Lee, Alexandros Psomas, et~al.
\newblock We{B}uild{AI}: Participatory framework for fair and efficient
  algorithmic governance.
\newblock \emph{Proceedings of the 3rd ACM Conference on Computer-Supported
  Cooperative Work and Social Computing (CSCW)}, pages 1--35, 2019.

\bibitem[Li et~al.(2018)Li, Li, and Li]{li2018dynamic}
Bo~Li, Wenyang Li, and Yingkai Li.
\newblock Dynamic fair division problem with general valuations.
\newblock In \emph{Proceedings of the 27th International Joint Conference on
  Artificial Intelligence (IJCAI)}, pages 375--381, 2018.

\bibitem[Liao et~al.(2022)Liao, Gao, and Kroer]{liao2022nonstationary}
Luofeng Liao, Yuan Gao, and Christian Kroer.
\newblock Nonstationary dual averaging and online fair allocation.
\newblock \emph{Proceedings of the 35th Annual Conference on Neural Information
  Processing Systems (NeurIPS)}, pages 37159--37172, 2022.

\bibitem[Lov{\'a}sz and Plummer(2009)]{lovasz2009matching}
L{\'a}szl{\'o} Lov{\'a}sz and Michael~D Plummer.
\newblock \emph{Matching theory}, volume 367.
\newblock American Mathematical Soc., 2009.

\bibitem[Lov{\'a}sz et~al.(1986)Lov{\'a}sz, Spencer, and
  Vesztergombi]{lovasz1986discrepancy}
L{\'a}szl{\'o} Lov{\'a}sz, Joel Spencer, and Katalin Vesztergombi.
\newblock Discrepancy of set-systems and matrices.
\newblock \emph{European Journal of Combinatorics}, 7\penalty0 (2):\penalty0
  151--160, 1986.

\bibitem[Manurangsi and Suksompong(2020)]{manurangsi2020envy}
Pasin Manurangsi and Warut Suksompong.
\newblock When do envy-free allocations exist?
\newblock \emph{SIAM Journal on Discrete Mathematics}, 34\penalty0
  (3):\penalty0 1505--1521, 2020.

\bibitem[Manurangsi and Suksompong(2021)]{manurangsi2021closing}
Pasin Manurangsi and Warut Suksompong.
\newblock Closing gaps in asymptotic fair division.
\newblock \emph{SIAM Journal on Discrete Mathematics}, 35\penalty0
  (2):\penalty0 668--706, 2021.

\bibitem[Mertzanidis et~al.(2024)Mertzanidis, Psomas, and
  Verma]{mertzanidis2024}
Marios Mertzanidis, Alexandros Psomas, and Paritosh Verma.
\newblock Automating food drop: The power of two choices for dynamic and fair
  food allocation.
\newblock In \emph{Proceedings of the 25th ACM Conference on Economics and
  Computation (EC)}, page 243, 2024.

\bibitem[Sinclair et~al.(2022)Sinclair, Jain, Banerjee, and
  Yu]{sinclair2022sequential}
Sean~R Sinclair, Gauri Jain, Siddhartha Banerjee, and Christina~Lee Yu.
\newblock Sequential fair allocation: Achieving the optimal envy-efficiency
  trade-off curve.
\newblock \emph{Operations Research}, 71\penalty0 (5):\penalty0 1689--1705,
  2022.

\bibitem[Smith and Vamanamurthy(1989)]{smith1989small}
David~J Smith and Mavina~K Vamanamurthy.
\newblock How small is a unit ball?
\newblock \emph{Mathematics Magazine}, 62\penalty0 (2):\penalty0 101--107,
  1989.

\bibitem[Spencer(1977)]{spencer1977balancing}
Joel Spencer.
\newblock Balancing games.
\newblock \emph{Journal of Combinatorial Theory, Series B}, 23\penalty0
  (1):\penalty0 68--74, 1977.

\bibitem[Spencer(1994)]{spencer1994ten}
Joel Spencer.
\newblock \emph{Ten lectures on the probabilistic method}.
\newblock SIAM, 1994.

\bibitem[Springer et~al.(2022)Springer, Hajiaghayi, Panigrahi, and
  Khani]{springer2022online}
Max Springer, MohammadTaghi Hajiaghayi, Debmalya Panigrahi, and Mohammad Khani.
\newblock Online algorithms for the santa claus problem.
\newblock In \emph{Proceedings of the 35th Annual Conference on Neural
  Information Processing Systems (NeurIPS)}, 2022.

\bibitem[Suksompong(2016)]{suksompong2016asymptotic}
Warut Suksompong.
\newblock Asymptotic existence of proportionally fair allocations.
\newblock \emph{Mathematical Social Sciences}, 81:\penalty0 62--65, 2016.

\bibitem[Vardi et~al.(2022)Vardi, Psomas, and Friedman]{vardi2022dynamic}
Shai Vardi, Alexandros Psomas, and Eric Friedman.
\newblock Dynamic fair resource division.
\newblock \emph{Mathematics of Operations Research}, 47\penalty0 (2):\penalty0
  945--968, 2022.

\bibitem[Vershynin(2018)]{vershynin2018high}
Roman Vershynin.
\newblock \emph{High-dimensional probability: An introduction with applications
  in data science}, volume~47.
\newblock Cambridge university press, 2018.

\bibitem[Yang(2023)]{yang2023fairly}
Mingwei Yang.
\newblock Fairly allocating (contiguous) dynamic indivisible items with few
  adjustments.
\newblock In \emph{Proceedings of the 22nd International Conference on
  Autonomous Agents and Multi-Agent Systems (AAMAS)}, pages 2655--2657, 2023.

\bibitem[Yang et~al.(2024)Yang, Liao, Gao, and Kroer]{yang2024online}
Zongjun Yang, Luofeng Liao, Yuan Gao, and Christian Kroer.
\newblock Online fair allocation with best-of-many-worlds guarantees.
\newblock arXiv:2408.02403, 2024.

\bibitem[Zhou et~al.(2023)Zhou, Bai, and Wu]{zhou2023multi}
Shengwei Zhou, Rufan Bai, and Xiaowei Wu.
\newblock Multi-agent online scheduling: Mms allocations for indivisible items.
\newblock In \emph{Proceedings of the 40th International Conference on Machine
  Learning (ICML)}, pages 42506--42516, 2023.

\end{thebibliography}

\newpage
\appendix


\section{Missing proofs from \Cref{sec: oblivious}}\label{app: missing from section 3}

\subsection{Proof of~\Cref{theorem:tree-subgaussianity}}

We use the following proposition from~\cite{kulkarni2024optimal}, which gives a specific symmetric convex body, whose Gaussian measure is close to $1$.

\begin{proposition}[\cite{kulkarni2024optimal}]\label{proposition:convex-body}
    For any $d, N \in \mathbb{N}$, and $\delta > 0$, let 
    $K_{\delta} \coloneqq \{ (y^{(1)}, \ldots, y^{(N)}) \in \mathbb{R}^{N d}: \norm{Y}_{\psi_2,\infty} \leq 2 + \delta$, where $Y$ picks a vector uniformly at random from the set $\{y^{(1)}, \ldots, y^{(N)}\} \}$ be a symmetric convex body. For any $\delta > 0$, there exists a constant $C_{\delta} > 0$ such that for all $d,N \in \mathbb{N}$, we have $\gamma_{Nd}(K_\delta) \geq 1 - \frac{C_{\delta}^d}{N^{1+\delta}}$.
\end{proposition}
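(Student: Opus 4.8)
The plan is to first rewrite membership in $K_\delta$ in a probabilistic form and then show a random Gaussian tuple lands in it. Since $t\mapsto\E[\exp(X^2/t^2)]$ is continuous and strictly decreasing, $(y^{(1)},\dots,y^{(N)})\in K_\delta$ if and only if $\frac1N\sum_{i=1}^N\exp(\langle y^{(i)},w\rangle^2/(2+\delta)^2)\le 2$ for every $w\in\Sph^{d-1}$. Thus, with $g_1,\dots,g_N$ i.i.d.\ $\mathcal N(\mathbf 0,\mathbf I^d)$, we must bound $\Pr\big[\exists\, w\in\Sph^{d-1}:\ \frac1N\sum_i\exp(\langle g_i,w\rangle^2/(2+\delta)^2)>2\big]$ by $C_\delta^d/N^{1+\delta}$. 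The first observation I would use is that the statement is vacuous unless $N^{1+\delta}>C_\delta^d$; hence we may assume $\ln N\ge A_\delta\cdot d$ for a large constant $A_\delta$ (to be fixed together with $C_\delta$). So it suffices to treat the regime where $N$ is exponentially large in $d$, and in particular $\ln N$ dominates $d$ — this is exactly the slack that will make the bookkeeping close.

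\textbf{Two sources of a ``bad'' direction, handled separately.} First I would rule out the event that some single vector is too long: on $\{\max_i\|g_i\|_2^2\le R^2\}$ with $R^2:=(2+\delta)^2\ln N$, every summand is at most $N$ and $\langle g_i,w\rangle^2\le R^2$ for all $i,w$. The complementary probability is at most $N\,\Pr[\chi^2_d>(2+\delta)^2\ln N]\le N(e(2+\delta)^2\ln N/d)^{d/2}e^{-(2+\delta)^2\ln N/2}$; here the exponent $(2+\delta)^2/2$ exceeds $1+\delta$ by a genuine positive amount (roughly $\delta+\delta^2/2$), and the resulting surplus power of $N$ beats the $(e(2+\delta)^2\ln N/d)^{d/2}$ correction because $\ln N\gtrsim_\delta d$, giving a bound of the form $C_\delta^d N^{-(1+\delta)}$ once $C_\delta$ is chosen large. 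On this good event, the quantity to bound becomes $\Pr[\exists\,w:\ \frac1N\sum_i T_i(w)>2]$ where $T_i(w):=\min\{\exp(\langle g_i,w\rangle^2/(2+\delta)^2),\,N\}$. I would then pass to an $\varepsilon$-net of $\Sph^{d-1}$ of size $\le(3/\varepsilon)^d$ (\Cref{proposition:size-of-net} style) with $\varepsilon$ only polynomially small in $N$: since each $T_i(\cdot)$ is $\mathrm{poly}(N)$-Lipschitz on the good event (all terms $\le N$, all $\|g_i\|\le R$), the supremum over the sphere is within $o(1)$ of the maximum over the net.

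\textbf{Per-direction concentration (the crux).} For a fixed $w$, the $\langle g_i,w\rangle$ are i.i.d.\ $\mathcal N(0,1)$, so the $T_i(w)$ are i.i.d.\ with mean at most $\mu:=(1-2/(2+\delta)^2)^{-1/2}$, which is a constant \emph{strictly below $2$} for every $\delta>0$, and — crucially, because $\delta>0$ — with finite $(1+\delta)$-th moment $\E[T_i(w)^{1+\delta}]\le(1-2(1+\delta)/(2+\delta)^2)^{-1/2}<\infty$. I would split each $T_i(w)$ at an intermediate level $N^{\alpha}$: the ``bulk'' (terms $\le N^\alpha$) is a bounded empirical average, which concentrates within $o(1)$ of $\mu$ by Bernstein's inequality at a stretched-exponential rate $e^{-c_\delta N^{1-\alpha}}$; the ``tail'' ($\langle g_i,w\rangle^2>(2+\delta)^2\alpha\ln N$) has mean $o(1)$ by a direct Gaussian computation, and deviations of the tail sum are controlled by a Fuk–Nagaev / von Bahr–Esseen estimate exploiting the $(1+\delta)$-th moment. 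Choosing $\alpha$ so that $\mu+o(1)<2$ and both error terms beat the net size, a union bound over the net plus the probability from the first step yields $\gamma_{Nd}(K_\delta^c)\le C_\delta^d/N^{1+\delta}$.

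\textbf{Main obstacle.} The essential difficulty is that $\exp(\langle g_i,w\rangle^2/(2+\delta)^2)$ is genuinely heavy-tailed — it has only about $(2+\delta)^2/2$ finite moments — so no sub-exponential concentration is available and one cannot simply union-bound over a net with a Chernoff estimate. The entire purpose of the slack between the target constant $2$ and the body's constant $2+\delta$ is to buy these extra moments; combining this with the truncation enabled by controlling $\max_i\|g_i\|$ (which works only because $N$ is exponential in $d$, keeping the truncation scale $(2+\delta)^2\ln N$ below the level at which one term would already exhaust the budget $2$), and balancing the $\chi^2$- and Fuk–Nagaev-tail exponents against the $(3/\varepsilon)^d$ net size, is the delicate part; the freedom to enlarge $C_\delta$ at will is what absorbs the polylogarithmic-in-$N$ and dimension-dependent slack.
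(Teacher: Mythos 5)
The paper does not actually prove this proposition—it is imported verbatim from \citet{kulkarni2024optimal}—so I can only assess your argument on its own terms. Your reformulation of membership in $K_\delta$, the reduction to the regime $\ln N \gtrsim_\delta d$ by enlarging $C_\delta$, the identification of $\mu=(1-2/(2+\delta)^2)^{-1/2}<2$ and of the finite moments of $\exp(\langle g_i,w\rangle^2/(2+\delta)^2)$, and the Bernstein treatment of the bounded part (whose stretched-exponential rate $e^{-c_\delta N^{1-\alpha}}$ really does dominate any $N^{O(d)}$ net in the regime $d\lesssim_\delta \ln N$) are all sound. The fatal problem is the tail term combined with the union bound over the net. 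For a fixed direction $w$, the probability that $\frac1N\sum_i T_i(w)\,\mathbb{I}[T_i(w)>N^\alpha]$ exceeds a constant is, at best, governed by the event that a single summand has size $\Theta(N)$, which has probability about $N\cdot N^{-(2+\delta)^2/2}=N^{-(1+2\delta+\delta^2/2)}$; the von Bahr--Esseen/Fuk--Nagaev route through the $(1+\delta)$-th moment that you invoke is even weaker, giving only $N\,\E[T^{1+\delta}]/(cN)^{1+\delta}=O(N^{-\delta})$. Either way, multiplying by the size $(3/\varepsilon)^{d}=N^{\Theta(d)}$ of a $\mathrm{poly}(1/N)$-net (already $\Theta(N)$ points for $d=2$, $\varepsilon=1/N$) yields something far larger than $C_\delta^d N^{-(1+\delta)}$: for instance $N^{d}\cdot N^{-1-2\delta}\ge N^{1-2\delta}$ for $d\ge2$. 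The heavy-tailed part simply does not concentrate well enough per direction to survive a union bound over any net fine enough for your Lipschitz argument, and no choice of $C_\delta$ absorbs a quantity that grows with $N$.

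The repair is to handle the tail \emph{without any net}, exploiting that $\langle g_i,w\rangle^2\le\norm{g_i}_2^2$ uniformly in $w\in\Sph^{d-1}$. Split the indices by whether $\norm{g_i}_2^2$ exceeds $\tau:=\alpha(2+\delta)^2\ln N$ rather than by the value of $T_i(w)$. For the large indices, $\sup_{w}\frac1N\sum_{i:\norm{g_i}_2^2>\tau}\exp\bigl(\langle g_i,w\rangle^2/(2+\delta)^2\bigr)\le\frac1N\sum_{i:\norm{g_i}_2^2>\tau}\exp\bigl(\norm{g_i}_2^2/(2+\delta)^2\bigr)$, and a tilted chi-squared computation gives $\E\bigl[\exp(\chi^2_d/(2+\delta)^2)\,\mathbb{I}[\chi^2_d>\tau]\bigr]\le C^d\,\mathrm{poly}_d(\ln N)\cdot N^{-\alpha((2+\delta)^2/2-1)}=C^d\,\mathrm{poly}_d(\ln N)\cdot N^{-\alpha(1+2\delta+\delta^2/2)}$. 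Choosing $\alpha\in\bigl(\tfrac{1+\delta}{1+2\delta+\delta^2/2},\,1\bigr)$ and applying Markov's inequality to this single scalar sum bounds the entire tail contribution, uniformly over the sphere, with failure probability $C_\delta^dN^{-(1+\delta)}$ (the $\mathrm{poly}_d(\ln N)$ factor is absorbed exactly as in your $\chi^2$ step). The small indices contribute terms bounded by $N^\alpha$ for every $w$ and are covered by your Bernstein-plus-net argument. In short: your first step, which caps $\max_i\norm{g_i}_2^2$ at $(2+\delta)^2\ln N$, should be strengthened to a Markov bound on the sum $\frac1N\sum_{i}\exp(\norm{g_i}_2^2/(2+\delta)^2)\,\mathbb{I}[\norm{g_i}_2^2>\tau]$ at the lower threshold $\tau=\alpha(2+\delta)^2\ln N$; once that is done, the Fuk--Nagaev step and the per-direction treatment of the tail can be deleted entirely.
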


    We will construct a tree $\calT' = (V',E')$ and then invoke~\Cref{theorem:tree-reduction} using $\calT'$ and the convex body defined in~\Cref{proposition:convex-body}. To construct $\calT'$ we start with the tree $\calT = (V,E)$ and replace every $e \in E$ by a path of $N$ edges $e^{(1)}, e^{(2)}, \ldots, e^{(N)}$, for a suitable $N \in \mathbb{N}$ to be determined later in this proof; hence, we have $|E'| = N|E|$. To every edge $e^{(i)}$ we associate a set $S_e^{(i)} = \{(\mathbf{0}, \mathbf{0}, \ldots, v, \ldots, \mathbf{0}) \in \mathbb{R}^{Nd}: v \in S_e\}$ where an element/vector in $S_e^{(i)}$ can be thought of as an $N$ by $d$ matrix, where the $i^{th}$ row is a vector from $S_e$, and all other rows are $\mathbf{0}$ vectors. This completes our construction of $\calT'$. Note that, $\mathbf{0} \in \conv(S_e)$ implies that $\mathbf{0} \in \conv(S_e^{(i)})$, for all $i \in [N]$. Additionally, by Caratheodory's theorem $\mathbf{0} \in \Ball^d_2$ can be represented as a convex combination of at most $d+1$ vectors in $S_e$, which further implies that $\mathbf{0} \in \Ball^{Nd}_2$ can also be represented as a convex combination of at most $\ell = d+1$ vectors in $S^{(i)}_e$. We also consider the symmetric convex body $K_\delta$ (as defined in~\Cref{proposition:convex-body}) with $N = \left\lceil \left((d+1)C_\delta^d|E|\right)^{1/\delta}\right\rceil$ for $\delta = 0.01$. This particular choice of $N$ gives us $\gamma_{Nd}(K_\delta) \geq 1 - \frac{C_\delta^d}{N^{1+\delta}} \geq 1 - \frac{1}{(d+1)N|E|} = 1 - \frac{1}{\ell \cdot |E'|}$. 
    
    The tree $\calT'$, along with the symmetric convex body $K_\delta$, satisfy the conditions of~\Cref{theorem:tree-reduction}, and therefore for all $e \in E$ and $i \in [N]$ there exists $s^{(i)}_e \in S^{(i)}_e$ such that for all $u \in V$, $\sum_{e \in P_u} \sum_{i=1}^{N} s^{(i)}_e = \sum_{i=1}^{N} \left( \sum_{e \in P_u} s^{(i)}_e \right) \in 11 K_\delta$. From the definition of $K_\delta$ (for $\delta = 0.01$), for $v \in K$, if we view $v$ as an $N$ by $d$ matrix, the random variable that picks a row of $v$ uniformly at random is $2.01$ subgaussian. 
    For a fixed $i$, $\sum_{e \in P_u} s^{(i)}_e$ is an element of $S_e^{(i)} \subseteq \mathbb{R}^{N d}$, i.e., an $N$ by $d$ matrix with all rows equal to the $\mathbf{0}$ vector, except row $i$. And therefore, $\sum_{i=1}^{N} \left( \sum_{e \in P_u} s^{(i)}_e \right)$ can be thought of as an $N$ by $d$ matrix whose $i^{th}$ row is exactly the $i^{th}$ row of $\sum_{e \in P_u} s^{(i)}_e$. Therefore, since $\sum_{i=1}^{N} \left( \sum_{e \in P_u} s^{(i)}_e \right) \in 11 K$ for all $u \in V$, the random variable $\sum_{e \in P_u} s^{(j)}_e$ (supported on $\mathbb{R}^{Nd}$) where $j \sim \mathcal{U}([N])$ is $\left(11 \cdot 2.01 \right) = 22.11$-subgaussian. And, since all but the $i^{th}$ row of $s^{(i)}_e$ are equal to the $\mathbf{0}$ vector, we also have that the distribution $\calD$ that samples $j \sim \mathcal{U}([N])$ and then outputs the $j$-th row of $s^{(j)}_e$, for all $e \in E$, a distribution supported on $\bigtimes_{e \in E} S_e$, is also $22.11$-subgaussian. \qed

\subsection{Proof of~\Cref{lemma:weighted-vector-balancing}}
For any convex body $K \in \mathbb{R}^d$ and vector $v \in \mathbb{R}^d$, we define the \emph{gauge function} $\norm{v}_K \coloneqq \inf\{x \mid v \in x K\}$. To prove \Cref{lemma:weighted-vector-balancing}, we will use the following tail bound version of Talagrand's comparison inequality (Chapter 8.6 ~\cite{vershynin2018high}). See~\cite{kulkarni2024optimal} for a similar application of this inequality.
\begin{lemma}\label{lemma:talagrand_comparison_inequality}
    Let $K \in \mathbb{R}^n$ be a symmetric convex body and $X \in \mathbb{R}^n$ be a random vector that is $O(1)$-subgaussian. Then, with probability at least $1-\delta$
    $$\norm{X}_K \lesssim \E_{g \sim N(\mathbf{0}, I_n)}[\norm{g}_K] + \sqrt{\log(1/\delta)}\cdot \frac{1}{\mathrm{inradius}(K)},$$
    for any $\delta \in (0,1/2]$, where $\mathrm{inradius}(K)$ is the largest $r > 0$ such that $rB_2^n \subseteq K$.
\end{lemma}

We will now restate and prove~\Cref{lemma:weighted-vector-balancing}.

\InftyNormBound*
\begin{proof}
    Consider the algorithm of~\Cref{theorem:subgauss-algo} where the set of vectors at time $t$ is $S_t = \{(1-\alpha)v_t, -\alpha v_t\}$. Let $s_t \in S_t$ be the vector chosen by this algorithm. If $s_t = (1-\alpha) v_t$, set $w_t = 1-\alpha$; otherwise, set $w_t = -\alpha$. Therefore, we have $\sum_{i=1}^t w_iv_i = \sum_{i=1}^t s_i$, where from~\Cref{theorem:subgauss-algo} we know that $\sum_{i=1}^t s_i$ is $O(1)$-subgaussian. Using \Cref{lemma:talagrand_comparison_inequality} we get that with probability at least $1 - \delta$ for some $\delta \in (0,1/2]$ we have,

    \begin{align*}
        \norm{\sum_{i=1}^t w_iv_i}_\infty = \norm{\sum_{i=1}^t s_i}_\infty = \norm{\sum_{i=1}^t s_i}_{B^d_\infty} & \lesssim \E_{g \sim N(\mathbf{0}, I_n)}[\norm{g}_{B^d_\infty}] + \sqrt{\log(1/\delta)}\cdot \frac{1}{\mathrm{inradius}(B^d_\infty)}\\
        & = \sqrt{\log(4d)} \E_{g \sim N(\mathbf{0}, I_n)}[\norm{g}_{\sqrt{\log(4d)} \cdot B^d_\infty}] + \sqrt{\log(1/\delta)} \numberthis \label{equation:talagrand_lemma_eq_1},
    \end{align*}
    where the final equality uses the fact that $\mathrm{inradius}(B^d_\infty) = 1$ and for any $p > 0$ we have $\norm{v}_K = p \norm{v}_{pK}$. To further simplify this expression, we will prove the following claim.

    \begin{claim}\label{claim:infinity-norm-bound-lemma-claim}
        $\E_{g \sim N(\mathbf{0}, I_n)}[\norm{g}_{\sqrt{\log(4d)} \cdot B^d_\infty}] = O(1).$
    \end{claim}
    \begin{proof}
        To prove this, we will first show that $\gamma_d(\sqrt{\log(4d)} \cdot B^d_\infty) \geq 1/2$. Then applying \cite[Lemma 26]{dadush2016towards}, which states that $\E_{g \sim N(\mathbf{0}, I_d)}[\norm{g}_K] \leq 1.5$ if $\gamma_d(K) \geq 1/2$, will give us $\E_{g \sim N(\mathbf{0}, I_n)}[\norm{g}_{\sqrt{\log(4d)} \cdot B^d_\infty}] \leq 1.5 = O(1)$, establishing the claim. Hence, we consider the following,

        \begin{align*}
            \gamma_d(\sqrt{\log(4d)} \cdot B^d_\infty) & = \mathop{\mathbb{P}}\limits_{g \sim N(\mathbf{0},I_d)}[ g \in \sqrt{\log(4d)} \cdot B^d_\infty]\\
            & = \mathop{\mathbb{P}}\limits_{g_1, g_2,\ldots, g_d \sim N(0,1)}\left[\cap_{j=1}^d \left(g_j \leq \sqrt{\log(4d)}\right)\right]\\
            & = 1 - \mathop{\mathbb{P}}\limits_{g_1, g_2,\ldots, g_d \sim N(0,1)}\left[\cup_{j=1}^d \left(g_j > \sqrt{\log(4d)}\right)\right]\\
            & \geq 1 - d \cdot \mathop{\mathbb{P}}\limits_{g_1 \sim N(0,1)}\left[g_j > \sqrt{\log(4d)}\right]\\
            & \geq 1 - d\cdot 2 e^{-(\sqrt{\log(4d)})^2} = 1/2,
        \end{align*}
        where the final inequality follows from the fact that for a standard Gaussian with variance $1$, we have $\mathbb{P}_{g \sim N(0,1)}[g > k] \leq 2 e^{-k^2}$. The claim stands proved.
    \end{proof}
    Now combining \Cref{equation:talagrand_lemma_eq_1} with \Cref{claim:infinity-norm-bound-lemma-claim}, gives us that with probability at least $1 - \delta$, we have $\norm{\sum_{i=1}^t w_iv_i}_\infty \lesssim \sqrt{\log(4d)} + \sqrt{\log(1/\delta})$. This implies that, with probability at most $\delta / T$, we have $\norm{\sum_{i=1}^t w_iv_i}_\infty \gtrsim \sqrt{\log(4d)} + \sqrt{\log(T/\delta})$. Taking union bound over all time steps $t \in [T]$, we get that with probability at least $1 - \delta$, we have $\max_{t=1}^T \norm{\sum_{i=1}^t w_iv_i}_\infty \lesssim \sqrt{\log(4d)} + \sqrt{\log(T/\delta})$. Focusing on the dependency on $T$ and $\delta$, we get $\max_{t=1}^T \norm{\sum_{i=1}^t w_iv_i}_\infty \lesssim \sqrt{\log(T)} + \sqrt{\log(1/\delta)}$.
\end{proof}


\subsection{Proof of Theorem~\ref{theorem:adaptive-lb}}\label{app:proof from OR paper}

In this section, we show that, for any $n \geq 2$, $r < 1$ and $T \geq 1$, in the online envy minimization problem, there exists a set of instances $S_T$, with $|S_T|  \leq 2^T$, such that, for any deterministic online fair division algorithm $\mathcal{A}$, there exists an instance $I \in S_T$, such that running algorithm $\mathcal{A}$ on the sequence of items $1,2 \ldots, T$ described by $I$ results in $\env^T \in \Omega((T/n)^{r/2})$. We first prove the bound for $n=2$, followed by the case of an arbitrary number of agents.


\begin{lemma}\label{lem:LBn=2}
For $n = 2$ and any $r < 1$, there exists a set of instances $S_T$, with $|S_T|  \leq 2^T$, such that, for any online fair division algorithm $\mathcal{A}$, there exists an instance $I \in S_T$, such that running algorithm $\mathcal{A}$ on the sequence of items $1,2 \ldots, T$ described by $I$ results in $\env^T \in \Omega(T^{r/2})$.
\end{lemma}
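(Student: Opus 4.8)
The plan is to reduce to deterministic algorithms and an adaptive adversary, and to obtain the instance family $S_T$ by ``unrolling'' that adversary. Fix a deterministic online algorithm $\mathcal{A}$. Against $\mathcal{A}$, an adaptive adversary's move at step $t$ depends only on the history, which in turn is determined by the $\le 2^{t}$ binary sequences of allocation decisions made so far; hence an adaptive adversary strategy induces a depth-$T$ decision tree with branching factor $2$, whose $\le 2^{T}$ root-to-leaf paths define a set $S_T$ of item sequences with $|S_T|\le 2^{T}$, depending only on the strategy and not on $\mathcal{A}$. Running $\mathcal{A}$ against the adversary realizes one path and hence one instance $I\in S_T$; feeding $I$ back as a fixed (oblivious) sequence makes the deterministic $\mathcal{A}$ reproduce the same play, and hence the same final envy. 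So it suffices to design an adaptive adversary forcing $\env^{T}\in\Omega(T^{r/2})$ against every deterministic $\mathcal{A}$; the statement for randomized $\mathcal{A}$ then follows by a standard minimax argument over the finite set $S_T$ (and is in any case applied via Yao's principle wherever this lemma is invoked).

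For the adversary I would follow the construction of~\cite{benade2024fair}; the cleanest way to see why it works is via the fact that, for $n=2$, online envy minimization \emph{is} online vector balancing in $\mathbb{R}^{2}$. Writing $d^{t}=(\env^{t}_{1,2},\env^{t}_{2,1})$, allocating an arriving item with values $(a,b)$ to agent $1$ updates $d^{t}=d^{t-1}+(-a,b)$ and to agent $2$ updates $d^{t}=d^{t-1}-(-a,b)$, so $d^{t}=\sum_{i\le t}\chi_i(-a_i,b_i)$ with signs $\chi_i\in\{-1,1\}$ chosen by $\mathcal{A}$, the adversary's vectors restricted to $[-1,0]\times[0,1]$, and the adversary ``winning'' if some coordinate of $d^{T}$ is large and positive (not merely if $\|d^{T}\|_\infty$ is large). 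The engine is the orthogonal-vector trick underlying~\Cref{thm: lower bound for iid vector balancing} (cf.~\Cref{claim:hypercube_prob}): whenever $d^{t-1}$ lies in the open positive quadrant, the adversary can present an item whose vector $(-a,b)$ is nearly orthogonal to $d^{t-1}$ and has norm $\Theta(1)$, so that for \emph{either} allocation $\|d^{t}\|_2^{2}\ge\|d^{t-1}\|_2^{2}+\Omega(1)$ while $d^{t}$ stays in the positive quadrant. Iterating for $\Theta(T')$ such steps yields $\|d^{T}\|_2^{2}=\Omega(T')$, hence $\env^{T}=\max_i d^{T}_i\ge\|d^{T}\|_2/\sqrt2=\Omega(\sqrt{T'})$.

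The main obstacle is that $\mathcal{A}$ picks the signs and will try to steer $d$ out of the positive quadrant — onto the axes, or into the second/fourth quadrant, where no nonzero available vector is orthogonal to $d$, so the orthogonal move is unavailable (this is precisely the ``welfare-maximization dodges the adversary'' phenomenon). Handling it is where the adaptive, recursive part of the construction is needed: whenever $\mathcal{A}$ attempts to escape, the adversary runs a sub-construction that herds $d$ back into the positive quadrant with coordinates bounded away from $0$, at a cost of a $\mathrm{polylog}(T)$ factor in the number of items spent per unit of squared-norm progress; this loss is exactly why one gets $T^{r/2}$ for every $r<1$ (absorbing the polylog) rather than the optimal $\sqrt{T}$, and why the family $S_T$ has size only $\le 2^{T}$. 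Concretely I would maintain the potential $\Phi^{t}=\|d^{t}\|_2^{2}$ under the event ``$\env^{s}\le M$ for all $s\le t$'' with $M=\Theta(T^{r/2})$ (if $\env$ ever exceeds $M$ we are already done, padding with items of value $0$ for both agents to reach step $T$), argue that over each of $\Theta(T^{r})$ blocks $\Phi$ increases by $\Omega(1)$ in an amortized sense regardless of $\mathcal{A}$, and sum to get $\Phi^{T}=\Omega(T^{r})$ and thus $\env^{T}=\Omega(T^{r/2})$. Finally, the general $n\ge 2$ case reduces to this one: embed the hard two-agent instance on two designated agents and give every other agent value $0$ for every item, with a trivial rescaling accounting for the $\Omega_n(\cdot)$ dependence.
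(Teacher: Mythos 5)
The reduction at the start of your proposal (deterministic algorithms, unrolling the adaptive adversary into a depth-$T$ binary decision tree to get $|S_T|\le 2^T$, and feeding the realized path back obliviously) is correct and essentially matches the paper's framing. The problem lies entirely in the claimed construction of the adaptive adversary itself, which is the actual content of the lemma, and there you do not have a proof.

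Your plan is to view the problem as vector balancing in $\mathbb{R}^2$ with the discrepancy vector $d^t=(\env^t_{1,2},\env^t_{2,1})$ and to grow $\|d^t\|_2^2$ by presenting near-orthogonal vectors whenever $d^t$ is in the open positive quadrant. That step is sound as far as it goes, but you then hand-wave the crux: ``whenever $\mathcal{A}$ attempts to escape, the adversary runs a sub-construction that herds $d$ back into the positive quadrant\ldots at a cost of a $\mathrm{polylog}(T)$ factor.'' No such sub-construction is given, and it is not at all clear one exists. The available increments are $\pm(-a,b)$ with $a,b\in[0,1]$, so at every step the algorithm chooses between ``$d_1$ down, $d_2$ up'' and ``$d_1$ up, $d_2$ down.'' Once $d$ is near an axis or in the second/fourth quadrant, the algorithm can always decline the orthogonal move and push $d$ further out; there is no mechanism in your sketch that forces it back, and the algorithm's sign choice always controls the direction along the adversary's current line. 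This is exactly the ``welfare-maximization dodge'' you name, and it is the whole difficulty of the lemma. Your proposal acknowledges the difficulty and labels it, but does not resolve it. Also, your heuristic that the $T^{r/2}$ exponent comes from absorbing polylog overheads from the herding step does not match the actual mechanism that produces the bound.

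The paper's proof takes a genuinely different and concrete route. It collapses the two-agent problem to a one-dimensional state (position $L_d$ or $R_d$ on a line) and endows the adversary with a fixed, state-dependent valuation $(1,v_d)$ or $(v_d,1)$ using the carefully chosen decreasing sequence $v_d=(d+1)^r-d^r$. This choice has two crucial telescoping properties: drifting out to state $K$ costs the disfavored agent $\sum_{d<K} v_d = K^r$ in envy, while remaining near state $K$ costs $v_{K-1}-v_K\ge r(1-r)K^{r-2}$ per return (Lemma~\ref{lem:lb:bound-diff}). These two pressures are in tension; the paper identifies the optimal response $\mathcal{A}^*(K)$ (Lemma~\ref{lem:lb:A*opt}) and optimizes over $K$, obtaining $K=\Theta(\sqrt{T})$ and envy $\Theta(K^r)=\Theta(T^{r/2})$. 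The exponent $r/2$ thus arises from the interplay between the cumulative cost $K^r$ of drifting and the per-step cost $K^{r-2}$ of alternating, not from any recursive herding. To repair your argument you would essentially need to reinvent this state-dependent value schedule (or prove your herding claim), which is the substantive part of the lemma.
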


\begin{proof}
We will describe a strategy for the adaptive adversary. The adversary will generate an instance overtime, as the algorithm makes its choices. The set of instances $S_T$ is the set of all possible instances; $|S_T|  \leq 2^T$ since the algorithm makes a binary choice at each step.

Label the agents $L$ and $R$, and let $\{v_0 = 1,v_1,v_2,\ldots\}$ be a decreasing sequence of values (specified later) satisfying $v_{d} - v_{d+1} < v_{d'} - v_{d'+1}$ for all $d' < d$. The adversary keeps track of the \emph{state} of the game, and the current state defines its strategy for choosing the agents' valuations for the next item. The lower bound follows from the adversary strategy illustrated in Figure~\ref{fig:LB}. Start in state $0$, which we will also refer to as $L_0$ and $R_0$, where the adversary sets the value of the arriving item as $(1,1)$. To the left of state $0$ are states labeled $L_1,L_2,\ldots$; when in state $L_d$, the next item that arrives has value $(1,v_d)$. 
To the right of state $0$ are states labeled $R_1,R_2,\ldots$; when in state $R_d$ the next item  arrives with value $(v_d,1)$. Whenever the algorithm allocates an item to agent $L$ (resp.\ $R$), which we will refer to as making an $L$ (resp.\ $R$) step, the adversary moves one state to the left (resp.\ right).

\begin{figure}[b]
        \centering
	\begin{tikzpicture}[->,>=stealth,auto,node distance=2cm,
    thick,every node/.style={circle,draw,font=\sffamily,minimum size=3em,inner sep=1}]

		\node [label={[label distance=-0.75em]north:{\textbf{0}}}] (0) {$(1,1)$};
		\node [right of=0,label={[label distance=-0.75em]north:{$\mathbf{R_1}$}}] (1) {$(v_1,1)$};
		\node [right of=1,label={[label distance=-0.75em]north:{$\mathbf{R_2}$}}] (2) {$(v_2,1)$};
		\node [right of=2,draw=none] (3) {$\cdots$};  
		\node [left of=0,label={[label distance=-0.75em]north:{$\mathbf{L_1}$}}] (-1) {$(1,v_1)$};
		\node [left of=-1,label={[label distance=-0.75em]north:{$\mathbf{L_2}$}}] (-2) {$(1,v_2)$};
		\node [left of=-2,draw=none] (-3) {$\cdots$};  
		\path 
		(-3) edge [bend left] node [draw=none,label={[align=center,above=-4em]\small $(1,-v_3)$}] {} (-2)
		(-2) edge [bend left] node [draw=none,label={[align=center,above=-4em]\small $(1,-v_2)$}] {} (-1)
		(-1) edge [bend left] node [draw=none,label={[align=center,above=-4em]\small $(1,-v_1)$}] {} (0)
		(0) edge [bend left] node [draw=none,label={[align=center,above=-3.81em]\small $(1,-1)$}] {} (1)
		(1) edge [bend left] node [draw=none,label={[align=center,above=-4em]\small $(v_1,-1)$}] {}  (2)
		(2) edge [bend left] node [draw=none,label={[align=center,above=-4em]\small $(v_2,-1)$}] {}  (3);
		\path 
		(3) edge [bend left] node [draw=none,label={[align=center,below=-1em]\small $(-v_3,1)$}] {} (2)
		(2) edge [bend left] node [draw=none,label={[align=center,below=-1em]\small $(-v_2,1)$}] {} (1)
		(1) edge [bend left] node [draw=none,label={[align=center,below=-1em]\small $(-v_1,1)$}] {} (0)
		(0) edge [bend left] node [draw=none,label={[align=center,below=-0.81em]\small $(-1,1)$}] {} (-1)
		(-1) edge [bend left] node [draw=none,label={[align=center,below=-1em]\small $(-1,v_1)$}] {} (-2)
		(-2) edge [bend left] node [draw=none,label={[align=center,below=-1em]\small $(-1,v_2)$}] {} (-3);
		\end{tikzpicture}
	\caption{Adversary strategy for the two-agent lower bound. In state $L_d$, an item valued $(1,v_d)$ arrives, while in state $R_d$, an item valued $(v_d,1)$ arrives. The arrows indicate whether agent $L$ or agent $R$ is given the item in each state. The arrows are labeled by the amount envy changes after that item is allocated.}\label{fig:LB}
\end{figure}
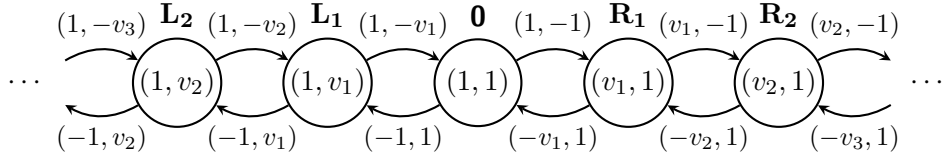

We construct the optimal allocation algorithm against this adversary, and show that for this algorithm the envy at some time step $t \in [T]$ will be at least $\Omega(T^{r/2})$ for the given $r<1$.
This immediately implies Lemma~\ref{lem:LBn=2}: if the envy is sufficiently large at any time step $t$ the adversary can guarantee the same envy at time $T$ by making all future items valued at zero by both agents.

The intuition for the adversary strategy we have defined is that it forces the algorithm to avoid entering state $L_d$ or $R_d$ for high $d$, as otherwise the envy of some agent will grow to $v_0 + v_1 + \cdots + v_d$, which will be large by our choice of $\{v_d\}$. At the same time, if an $L$ step is taken at state $L_d$, followed by a later return to state $L_d$, the envy of $R$ increases by at least $v_d-v_{d+1}$; we choose $\{v_d\}$ so that this increase in envy is large enough to ensure that any algorithm which spends too many time steps close to state $0$ incurs large envy.

By the pigeonhole principle, either the states to the left or to the right of state $0$ are visited for at least half the time.
Assume, without loss of generality, that our optimal algorithm spends time $T' = \left\lceil{T/2}\right\rceil$ in the ``left'' states ($L_0, L_1,\ldots$), and that $T'$ is even.
We prove that the envy of agent $R$ grows large at some time step $t$.
We ignore any time the algorithm spends in the states $R_d$, $d \ge 1$.
To see why this is without loss of generality, consider first a cycle spent in the right states that starts at $R_0$ with an item allocated to $R$ and eventually returns to $R_0$.
In such a cycle, an equal number of items are allocated to both agents.
All of these items have value $1$ to agent $R$, yielding a net effect of $0$ on agent $R$'s envy. ({We ignore agent $L$ completely, as our analysis is of the envy of agent $R$.}) The other case is when the algorithm starts at $R_0$ but does not return to $R_0$. This scenario can only occur once, which means that the algorithm has already taken $T'$ steps on the left side; the allocation of these items does not affect our proof.

Let  $0 \leq K \leq T'/2$ be an integer and denote by $\OPT(K)$ the set of envy-minimizing allocation algorithms that spend the $T'$ steps in states $L_0, \ldots, L_{K}$ (and reach $L_K$). Note that the algorithm aims to minimize the maximum envy at any point in its execution. 
Let $\mathcal{A}^*(K)$ be the following algorithm, starting at $L_0$:
Allocate the first $K$ items to agent $L$, thus arriving at state $L_{K}$.
Alternate between allocating to agents $R$ and $L$ for the next $T'-2K$ items, thereby alternating between states $L_{K-1}$ and $L_{K}$.
Allocate the remaining $K$ items to agent $R$. Our first result is that $\mathcal{A}^*(K)$ belongs to $\OPT(K)$.

\begin{lemma}\label{lem:lb:A*opt}
	$\mathcal{A}^*(K) \in \OPT(K)$.
\end{lemma}
 

We analyze the envy of $\mathcal{A}^*(K)$ as a function of $K$ before optimizing $K$.
Agent $R$'s maximum envy is realized at step $T' - K$, right before the sequence of $R$ moves. $\env^{T'-K}$ has two terms: the envy accumulated to reach state $L_{K}$, and the envy from alternating $R$ and $L$ moves between states $L_{K}$ and $L_{K-1}$, so 
%
\begin{align}
\env^{T'-K} 
= \sum_{d=0}^{K-1} v_d + \frac{T'-2K}{2} \cdot \left( v_{K-1}-v_{K} \right). \label{eq:adaptive1}
\end{align}
Given $r<1$, define $v_d \defeq (d+1)^r - d^r$.  Notice that $\sum_{d=0}^{K-1} v_d = K^r$. 
\edit{When  $K\geq \sqrt{T'/2}$ it follows that  $\sum_{d=0}^{K-1} v_d \geq (T'/2)^{r/2} \in \Omega(T^{r/2})$, which is what we set out to prove. We limit the rest of the analysis to the case where  $K\leq \sqrt{T'/2}$. }


\begin{lemma}
	\label{lem:lb:bound-diff}
	\edit{Let $K\leq \sqrt{T'/2}$ and define $v_d \defeq (d+1)^r - d^r$ for $r<1$.} Then $v_{K-1} - v_{K} \geq r (1-r) K^{r-2}$.
\end{lemma}

Applying Lemma~\ref{lem:lb:bound-diff}  to \eqref{eq:adaptive1} and distributing terms yields
\begin{align}
\env^{T'-K} \geq K^r - r (1-r) K^{r-1} + \frac{T'}{2} r (1-r) K^{r-2} \geq \frac{1}{2} \left( K^r + {T'} r (1-r) K^{r-2} \right), \label{eq:adaptive2}
\end{align}
where the second inequality uses the fact that $r(1-r) \le 1/4 < 1/2$ and assumes $K > 1$ (otherwise the envy would be linear in $T'$).
To optimize $K$, noting that the second derivative of the above bound is positive for $K\leq \sqrt{T'/2}$, we find the critical point:
\begin{align*}
\frac{\partial}{\partial K} \left(K^r + {T'} r (1-r) K^{r-2}\right) = r K^{r-1} - T'r(1-r)(2-r) K^{r-3} = 0  \implies  K = \sqrt{T'(1-r)(2-r)}.
\end{align*}
Defining $C_1 \defeq \sqrt{(1-r)(2-r)}$ and  substitute  into \eqref{eq:adaptive2} to obtain
\begin{align}
	\env^{T'-K} 
	\ge \frac{1}{2} \left(C_1^r (T')^{r/2} + T' r (1-r) C_1^{r-2} (T')^{r/2 -1} \right)
	\in \Omega(T^{r/2}),
\end{align}
completing the proof.
\end{proof}

The extension to $n$ agents follows from the same set of instances for agents $L$, $R$, letting all other agents value every item at zero. \qed

\section{Missing Proofs from Section~\ref{sec: iid}}\label{app:missing proofs from iid}

\subsection{Proof of \Cref{lem:no-cycle}}

    We first claim that a sufficient condition for avoiding cycles at time step $t$ is that $v_j(A^t_i)   \le v_i(A^t_i)  + c$ for all agents $i \ne j$. Indeed, consider a cycle of agents $i_1, \ldots, i_k, i_{k+1}=i_1$. Then, we have \begin{align*}
        \sum_{j = 1}^k \envy_{i_{j}, i_{j + 1}}  &= \sum_{j = 1}^k v_{i_j}(A_{i_{j + 1}}) - v_{i_j}(A_{i_j})\\
        &= \sum_{j = 1}^k v_{i_j}(A_{i_{j + 1}}) - \sum_{j = 1}^k v_{i_j}(A_{i_j})\\
        &= \sum_{j = 1}^k v_{i_j}(A_{i_{j + 1}}) - \sum_{j = 1}^k v_{i_{j+1}}(A_{i_{j+1}})\\
        &= \sum_{j = 1}^k v_{i_j}(A_{i_{j + 1}}) - v_{i_{j + 1}}(A_{i_{j + 1}})\\
        &\le k \cdot c.
    \end{align*}
    Thus, for at least one pair, $\envy^t_{i_j, i_{j + 1}} \le c$, preventing a cycle from forming.

    Now, fix two agents $i \ne j$. We aim to show that
    \[
        \Pr[\forall t \ge T^{(1)}, \, v_i(A^t_i) -  v_j(A^t_i) < - c] \le 4 \cdot \left(\frac{8 e n^2 \log T}{\sqrt{T}} \right)^{c + 1}. 
    \]
    Applying a union bound over the $n(n - 1)$ pairs of $i$ and $j$ yields the lemma statement.

    Without loss of generality, we relabel $i = 1$ and $j = 2$. Define $Z_t := v_1(A^t_1) - v_2(A^t_1)$ as the difference in values at time $t$. Our goal is to show that $Z_t \ge -c$ for all $t \ge T^{(1)}$ with high probability.

    We decompose $A^t_1$ into the portion received during each phase:
    \[
        v_1(A^t_1) - v_2(A^t_1) = v_1(A^t_1 \cap G^{welf} ) - v_2(A^t_1 \cap G^{welf} ) + v_1(A^t_1 \setminus G^{welf}) - v_2(A^t_1 \setminus G^{welf}). 
    \]
    Since $t \ge T^{(1)}$, we can express 
    \[
         v_1(A^t_1 \cap G^{welf} ) - v_2(A^t_1 \cap G^{welf} ) = \sum_{j = 1}^{T^{(1)}} \left(V^{g^{welf}_j}_1 - V^{g^{welf}_j}_2 \right)  \cdot \mathbb{I}[g^{welf}_j \in A^t_1].
    \]
    Define \[X_j = \left(V^{g^{welf}_j}_2 - V^{g^{welf}_j}_1 \right)  \cdot \mathbb{I}[g^{welf}_j \in A^t_1].\]
    Now, consider $A^t_1 \setminus G^{welf}$. By \Cref{lem:phase-2-items}, we have that \[|A^t_1 \setminus G^{welf}| \le (n - 1) \cdot \ceil{\log T \sqrt{T}}.\] 
    Thus, we have $A^t_1 \setminus G^{welf} \subseteq \{g^1_1, \ldots, g^1_{(n - 1) \cdot \ceil{\log T \sqrt{T}}} \}$ (recall our way of sampling an instance). Furthermore, we can lower-bound 
    \[v_1(A^t_1 \setminus G^{welf}) - v_2(A^t_1 \setminus G^{welf}) \ge -  \sum_{j = 1}^{(n - 1) \cdot \ceil{\log T \sqrt{T}}} (V^{g^i_j}_2 - V^{g^i_j}_1)_+\] where $(s)_+ = \max(s, 0)$. That is, in this lower bound we only count items where agent $2$ had a higher value than agent $1$. Let $Y_j = (V^{g^i_j}_1 - V^{g^i_j}_2)_+$.
    
    Since the above bounds are independent of $t$, it suffices to show that $$\sum_{j = 1}^{T^{(1)}} X_j - \sum_{j = 1}^{(n - 1) \ceil{\log T \sqrt{T}}} Y_j \ge -c$$  with high probability, ensuring the bound holds for all remaining $t > T^{(1)}$.

    If $X_j$ stochastically dominated $Y_j$, then a straightforward application of~\Cref{lem:concentration} would imply the statement. Unfortunately, this is not the case. So, instead, we define the following random variables, $X'_j$, that are large sums of $X_j$ random variables, and such that the stochastic dominance we want is true. For each $j \le \floor{T^{(1)}/(2n)}$, let $X'_j = \sum_{j' = (j - 1) \cdot 2n + 1}^{j \cdot 2n} X_j$, i.e., each $X'_j$ is the sum of a distinct set of $2n$ $X_j$s. We will show that  $$\sum_{j = 1}^{\floor{T^{(1)}/(2n)}} X'_j - \sum_{j = 1}^{(n - 1) \ceil{\log T \sqrt{T}}} Y_j \ge -c$$  with high probability.
    Specifically, we will show that each $X'_j$ first-order stochastically dominates each $Y_j$. Then, by applying~\Cref{lem:concentration} we have 
    \[
    \Pr \left[ \sum_{j = 1}^{\floor{T^{(1)}/(2n)}} X'_j - \sum_{j = 1}^{(n - 1) \ceil{\log T \sqrt{T}}} Y_j < -c \right] \leq 4 \cdot \left(\frac{2 e L}{K} \right)^{c + 1},
    \]
    where $K = \floor{\frac{T - \frac{n(n-1)}{2}\ceil{\log T \sqrt{T}}}{2n}} + (n - 1) \ceil{\log T \sqrt{T}} \geq \frac{T}{2n}$ (as long as $\ceil{\log T \sqrt{T}}(\frac{3n}{4} - \frac{3}{4}) \geq 1$, which holds for $T \geq 4$), and $L = (n - 1) \ceil{\log T \sqrt{T}} \leq 2n \log T \sqrt{T}$. Note that, indeed, $\frac{K}{L} \geq \frac{\sqrt{T}}{4 n^2 \log T } \geq 4e$, for $T \in \Omega(n^6)$.
    So, overall:
    \[
    \Pr \left[ \sum_{j = 1}^{\floor{T^{(1)}/(2n)}} X'_j - \sum_{j = 1}^{(n - 1) \ceil{\log T \sqrt{T}}} Y_j < -c \right] \leq 4 \cdot \left(\frac{8 e n^2 \log T}{\sqrt{T}} \right)^{c + 1}.
    \]

    To analyze this probability, we characterize the distributions of each $X_j$, $X'_j$, and $Y_j$. We will use $X$, $X'$, and $Y$ as random variables with the same distribution as each $X_j$, $X'_j$, and $Y_j$, respectively. Recall, a way to sample $X$ is to sample $n$ values $V_1, \ldots, V_n \stackrel{\text{i.i.d.}}{\sim} \mathcal{D}$, and if $V_1$ is the largest (breaking ties randomly), set $X = V_1 - V_2$, otherwise set $ X = 0$. To define $X'$ we sum $2n$ draws from $X$. Finally, a way to sample $Y$ is to sample $n$ values $V_1, \ldots, V_n \stackrel{\text{i.i.d.}}{\sim} \mathcal{D}$ and if $V_1 \ge V_2$ (breaking ties randomly), set $Y = V_1 - V_2$, otherwise set $Y = 0$.

    All of these can be understood using distributions induced by the difference of order statistics. More formally, let $\mathcal{D}^{(k) - (\ell)}$ be the distribution obtained by drawing $V_1, \ldots, V_n \stackrel{\text{i.i.d.}}{\sim} \mathcal{D}$, sorting them as $V^{(1)} \le \cdots \le V^{(n)}$, and returning $V^{(k)} - V^{(\ell)}$.
    \begin{enumerate}
        \item  The distribution of $X$ corresponds to selecting distinct indices $i_1, i_2$ uniformly from $[n]$, and if $i_1 = n$, sampling from $\mathcal{D}^{(n) - (i_2)}$; otherwise, outputting $0$. 
        \item The distribution of $Y$ corresponds to selecting distinct indices $i_1, i_2$ uniformly from $[n]$, and if $i_1 > i_2$, sample from $\mathcal{D}^{(i_1) - (i_2)}$; otherwise, outputting $0$. 
    \end{enumerate}

The distribution of $X'$ can be described as follows:
    \begin{enumerate}
        \item Sampling $2n$ pairs $(i^j_1, i^j_2)_{j = 1, \ldots, 2n}$, where $i^j_1 \neq i^j_2$ for all $j$, from $[n]$, uniformly at random.
        \item For each pair where $i^j_1 = n$, draw a value from $\calD^{(n) - (i^j_2)}$
        \item Output the sum of these values (or output $0$ if no $i^j_1 = n$).
    \end{enumerate} 

    It remains to show that $X'$ first-order stochastically dominates  $Y$. To this end, note that if $i'_1 \ge i_1$ and $i'_2 \le i_2$, then $\mathcal{D}^{(i'_1) - (i'_2)}$ first-order stochastically dominates $\mathcal{D}^{(i_1) - (i_2)}$. We now present a sequence of distributions, each distribution stochastically dominating the previous distribution in the sequence, beginning with $Y$ and ending with $X'$.

    \paragraph{Distribution 1.} We begin with the distribution of $Y$. Recall that this can be described as: draw a distinct pair $i_1, i_2 \in [n]$. If $i_1 > i_2$, sample from $\calD^{(i_1) - (i_2)}$; otherwise, output $0$. Since each ordering of $i_1$ and $i_2$ are equally likely, an equivalent description is draw a distinct pair $i_1, i_2 \in [n]$. With probability $1/2$, output $0$, and otherwise, sample from $\calD^{(\max(i_1, i_2)) - (\min(i_1, i_2))}$.

    \paragraph{Distribution 2.} With probability $1/2$, output $0$. Otherwise, sample a distinct pair $i_1, i_2$ from $[n]$ and output $\mathcal{D}^{(n) - (\min(i_2, i'_2))}$. This stochastically dominates distribution $1$ because $n \ge \max(i_1, i_2)$.

    \paragraph{Distribution 3.} With probability $1/2$, output $0$. Otherwise, sample two (not necessarily distinct) values $i_2, i_2'$ uniformly from $[n-1]$ and output $\mathcal{D}^{(n) - (\min(i_2, i'_2))}$. To show that this stochastically dominates Distribution 2,  observe that taking the minimum of a distinct pair from $[n]$ stochastically dominates taking the minimum of a (not necessarily distinct) pair from $[n - 1]$, so this only increases the likelihood of drawing from a ``better'' distribution. Specifically, the probability that $\min(i_1, i_2) \ge k$, where $i_1, i_2$ are a distinct pair from $[n]$,  is $\frac{n - k + 1}{n} \cdot \frac{n - k}{n - 1}$. The probability that $\min(i_1, i_2) \ge k$, where $i_1, i_2$ are a (possibly non-distinct) pair from $[n-1]$, is $\left(\frac{n - k}{n - 1}\right)^2$. Since $\frac{n - k + 1}{n} \, \frac{n - k}{n - 1} > \left( \frac{n - k}{n - 1} \right)^2$, the former probability is greater. As this holds for all $k$, stochastic dominance is implied.

    \paragraph{Distribution 4.} Draw $2n$ pairs $(i^j_1, i^j_2)_{j = 1, \ldots, 2n}$, where numbers in a pair are distinct, and $i^j_{\ell}$ is drawn from $[n]$. If exactly zero or exactly one pairs have $i^j_1 = n$, output $0$. Otherwise, let  $(i_1, i_2)$ and $(i'_1, i'_2)$ be the first two pairs with $i_1 = i'_1 = n$, and draw from $\mathcal{D}^{(n) - (\min(i_2, i'_2))}$. Note that conditioned on $i_1 = n$, $i_2$ is just a uniform draw from $[n - 1]$. So to establish stochastic dominance over Distribution 3, we simply need to show that the probability of at least two pairs with $i^j_1 = n$ is at least $1/2$. The number of such pairs follows a $\text{Bin}(2n, 1/n)$ distribution, which has mean $2n/n = 2$. Furthermore, it is known that the median is at least the floor of the mean~\cite{kaas1980mean}, thus, the probability of having at least two pairs is at least $1/2$, as needed.

     \paragraph{Distribution 5.} Next, we consider a distribution that only keeps the ``best'' pair.  That is, we draw $2n$ pairs  $(i^j_1, i^j_2)_{j = 1, \ldots, 2n}$ as in Distribution 4, and among those where $i^j_1 = n$, select the minimal $i^j_2$, and output $\mathcal{D}^{(n) - (i^j_2)}$ (or $0$ if no such pair exists). This stochastically dominates Distribution 4 because, in the cases when there are at least two pairs with $i^j_1 = n$, we are taking the minimum over even more values; and we are now potentially achieving a positive value even when there is only one pair with $i^j_1 = n$.

     \paragraph{Distribution 6.} Draw $2n$ pairs $(i^j_1, i^j_2)_{j = 1, \ldots, 2n}$ as in Distribution 5, and for each pair where $i^j_1 = n$, draw a value from $\calD^{(n) - (i^j_2)}$, and output the sum of these values (or output  $0$ if no $i^j_1 = n$). Note that we are only including more pairs than Distribution 5, so this stochastically dominates it. Furthermore, this is exactly the distribution of $X'$.  

     This completes the proof. \qed

\subsection{Proof of \Cref{lem:concentration}}

	A key observation is that, since the $Y_i$'s are i.i.d.\@ draws, by symmetry, the probability of the event we care about is equal to
	\[
		\Pr\left[\sum_{i \notin S} Y_{i} - \sum_{i  \in S} Y_{i} < -c\right],
	\]
	where $S \subseteq [K]$ is a randomly sampled set of indices of size $L$. Condition on the set of draws $Y_1, \ldots, Y_K$ having arbitrary values $y_1, \ldots, y_K$, and consider the randomness over the set $S$. By symmetry, it is without loss of generality that $y_1 \ge \cdots \ge y_K$.

    Now, by \Cref{lem:halls}, a sufficient condition for $\sum_{i \notin S} y_i + \sum_{i \in S} y_i \ge -c$ is that for all $j \in S$, $$|\{j' \in S \mid y_{j'} \ge y_j\}| \le |\{j' \notin S\mid y_{j'} \ge y_j\}| + c.$$
    A sufficient condition for this is that for all $j \le K$, $$|[j] \cap S| \le |[j] \setminus S| + c,$$ i.e., in any prefix there are the number of indices in $S$ never exceeds those outside of $S$ by more than $c$. Finally, observe that $|[j] \setminus S| = j - |[j] \cap S|$, we can again reformulate this as for all $j \le K$,
    \[
        |[j] \cap S| \le \frac{j + c}{2}.
    \]
	
	Let $\mathcal{E}^j$ be the event that $|[j] \cap S| > \frac{j + c}{2}.$ We will upper bound each $\Pr[\mathcal{E}^j]$ and then union bound  over all $j$. 
	
	Let $Z_i := \mathbb{I}[i \in S]$, so $|[j] \cap S| = \sum_{i = 1}^j Z_i$. 
    The $Z_i$s are not independent, but they \emph{are} negatively associated, and thus, traditional Chernoff bounds apply~\cite{dubhashi2009concentration}. We have  that $\mathbb{E}[Z_i] = \frac{L}{K}$ for each $i$. For $j \le c$, note that $\Pr[\mathcal{E}^j] = 0$ because $\sum_{i=1}^j Z_i \le j \le \frac{j + c}{2}$. For $j \ge c + 1$, note that $\mathbb{E}[\sum_{i = 1}^j Z_i]$ is precisely $\mu := \frac{j \cdot L}{K}$. We would like to upper bound the probability that $\sum_{i = 1}^j Z_i$ exceeds $\frac{j + c}{2}$. Since the $Z_i$s are integral, it suffices to bound the probability that  $\sum_{i = 1}^j Z_i$ exceeds $\ceil*{\frac{j + c}{2}}$. Let $\delta$ be such that  $\ceil*{\frac{j + c}{2}} = (1 + \delta) \mu$. Note that since $L/K \le 2$, $\mu \le j/2$, and therefore, $\delta > 0$. Furthermore, $$1 + \delta = \frac{j + c}{2 \mu} \ge \frac{j}{2\mu} =  \frac{K}{2 L}.$$ Using the Chernoff bound, we have:
	\[
		\Pr[X \ge (1 + \delta) \mu] \le \left(\frac{e^\delta}{(1 + \delta)^{1 + \delta}} \right)^\mu \leq  \left(\frac{e}{(1 + \delta)} \right)^{(1 + \delta)\mu} \le  \left(\frac{2 e L}{K} \right)^{\ceil*{\frac{j + c}{2}}}.
	\]
	
	Applying a union bound over all $j$, we have that
\begin{align*}
    \Pr\left[\bigcup_{j \le K} \mathcal{E}^j \right]
    &\le \sum_{j = 0}^K \Pr[\mathcal{E}^j]\\
    &= \sum_{j = c + 1}^K \Pr[\mathcal{E}^j]\\
    &\le \sum_{j = c + 1}^K \left(\frac{2 e L}{K} \right)^{\ceil*{\frac{j + c}{2}}}\\
    &\le \sum_{j = c + 1}^\infty \left(\frac{2 e L}{K} \right)^{\ceil*{\frac{j + c}{2}}}\\
    &\le \sum_{j = 0}^\infty \left(\frac{2 e L}{K} \right)^{\ceil*{\frac{j + 2c + 1}{2}}}\\
    &\le \sum_{j = 0}^\infty \left(\frac{2 e L}{K} \right)^{\ceil*{\frac{2j + 2c + 1}{2}}} + \left(\frac{2 e L}{K} \right)^{\ceil*{\frac{2j + 1 + 2c + 1}{2}}}\\
    &= \sum_{j = 0}^\infty 2 \left(\frac{2 e L}{K} \right)^{j + c + 1}\\
    &= 2 \left(\frac{2 e L}{K} \right)^{c + 1} \cdot \sum_{j = 0}^\infty \left(\frac{2 e L}{K} \right)^{j}\\
    &= 2 \left(\frac{2 e L}{K} \right)^{c + 1} \cdot \frac{1}{1 - \frac{2eL}{K}}\\
    &\le 4  \left(\frac{2 e L}{K} \right)^{c + 1},
\end{align*}
where the last two transitions use the fact that $\frac{2eL}{K} \le 1/2$. \qed

\subsection{Proof of \Cref{lem:phase-2-items}}
We will prove this by induction on the time steps. Note that at $T^{(1)}$, no phase 2 items have been given out, so $w^t_{i} = 0$ for all $i$, satisfying the lemma statement. Now suppose this is true at some fixed time $t$, and suppose the current sorted vector is $(w^t_{i_1}, \ldots, w^t_{i_n})$. Importantly, the sorted vector after the item has been given can be obtained by incrementing one entry. Specifically, the sorted vector will become $(w^t_{i_1}, \ldots, w^t_{{i_j}} + 1, \ldots  w^t_{i_n})$ where $i_j$ is the maximal $j$ such that the receiving agent had $w^t_{{i_j}}$ items. If $j = 1$ (the item was given to the agent with the fewest items), the inductive hypothesis clearly holds. We simply need to show that $w^t_{{i_{j - 1}}} \ge w^t_{{i_j}} - \ceil{ \log T \sqrt{T}}$. Importantly, the agent receiving the item $i$ must have $i \in S$, as defined on line 5. Thus $\{i_1, \ldots, i_j\} \subseteq S$ because each of these agents have currently received at most $i_j$ items. Furthermore, $w^t_{{i_{j - 1}}} \ge w^t_{{i_j}} - \ceil{ \log T \sqrt{T}}$, as otherwise $\{i_1, \ldots, i_{j - 1}\}$ satisfies the condition of line 5, and is strictly smaller in cardinality than the chosen $S$. Therefore, even after this addition $w^t_{{i_j}} + 1 - w^t_{{i_{j - 1}}} \le \ceil{ \log T \sqrt{T}}$.

To establish the general upper bound of $w^t_{i_n} \le (n - 1) \ceil{\log T \sqrt{T}}$, suppose for a contradiction there was a time step $t > T^{(1)}$ where $w_{i_n} > (n-1) \ceil{\log T \sqrt{T}}$. Then, by a straightforward induction over $j$, \[w_{i_{n + 1 - j}} > (n - j) \cdot \ceil{\log T \sqrt{T}},\] for all $1 \le j \le n$. Summing over all agents implies that phase 2 must last $> \frac{n(n - 1)}{2} \cdot \ceil{\log T \sqrt{T}}$. This is a contradiction.\qed

\subsection{Proof of \Cref{lem:halls}}
Fix values $a_1, \ldots, a_k$ and $b_1, \ldots, b_\ell$. Consider a bipartite graph with nodes $[k]$ on the left side and nodes $[\ell]$ on the right. We will have an edge $(i, j)$ precisely when $a_i \le b_j$. We would like to show that there is a matching in this graph of size at least $k - c$.

We first show that this is sufficient to imply the lemma. Let $M_L, M_R$ be the set of matched nodes and $U_L, U_R$ be the set of unmatched nodes on the left and right. We have that $\sum_{i \in M_L} a_i \le \sum_{i \in M_R} b_i$ by definition of the matching. Furthermore, $|U_L| \le c$, so $\sum_{i \in U_l} a_i \le c$. Putting this together, we have
\[
    \sum_i a_i = \sum_{i \in M_L} a_i + \sum_{i \in U_L} a_i \le \sum_{i \in M_R} b_i + c \le \sum_i b_i + c.
\]

To prove the existence of such a matching, it is sufficient so show that for all sets $S \subseteq [k]$, $|N(S)| \ge |S| - c$ where $N(S)$ is the neighborhood of $S$, i.e., all nodes in $[\ell]$ adjacent to at least one node in $S$~\cite{lovasz2009matching}. Fix such an $S$. Let $i \in \argmin_{i' \in S} a_{i'}$. Note that $S \subseteq \{i' \mid a_{i'} \ge a_i\}$ and $N(S) \supset \{i' \mid b_{i'} \ge a_i\}$, as all such nodes are adjacent to $a_i$. Thus, by the lemma condition $|S| \le |N(S)| + c$, as needed. \qed

\subsection{Proof of \Cref{lem:high-value-n}}

        Fix agents $i \ne j$. We will prove the statement is true for this pair of agents and then union bound over the at most $n(n - 1)$ pairs to yield the lemma statement. Without loss of generality, we will relabel $i$ as agent $1$ and $j$ as agent $2$. We will also assume $T$ is sufficiently large such that $T^{(1)} \ge T / 2$ and $\log T \sqrt{T} > 1$. The latter implies that $\ceil{\log T \sqrt{T}} \le 2 \log T \sqrt{T}$. 



        For each good $g$, let $I^g_i$ be the indicator variable denoting that agent $i$ has the highest quantile for item $g$. Given these random variables, agent $1$'s bundle at time $t \ge T^{(1)}$ takes on the form $$A^t_1 = \{g \in G^{welf} \mid I^g_1 = 1\} \cup \{g^1_1, \ldots g^1_k\}$$ for some value $k$ and agent $2$'s final bundle is $$A_2 = \{g \in G^{welf} \mid I^g_2 = 1\} \cup \{g^2_2, \ldots, g^2_\ell\}$$ for some value $\ell$. The conditions of the lemma statement hold precisely when $k \ge \ell + \ceil{\log T \sqrt{T}}$. Furthermore, note that by \Cref{lem:phase-2-items}, we only need to consider $k \le (n - 1) \cdot \ceil{\log T \sqrt{T}}$. 

        For each $q \in [0, 1]$ let \[X^0_q = \sum_{g \in G^{welf}} \mathbb{I}[Q^g_1 \ge q] \cdot I^g_1,\]
        i.e., the number of items agent 1 received during welfare maximization for which they had quantile at least $q$. Furthermore, for an integer $k$
        \[
            X^k_q = X^0_q + \sum_{j = 1}^k \mathbb{I}[Q^{g^1_j} \ge q],
        \]
        which counts the number of items $1$ has quantile $\ge q$ including the first $k$ items they receive in phase 2.

        Similarly, we will define $Y^k_q$ analogously for bundle 2. However, note that we still consider agent $1$'s bundle. More formally,

        \[
            Y^0_q = \sum_{g \in G^{welf}} \mathbb{I}[Q^g_1 \ge q] \cdot I^g_2 \text{ and } Y^k_q = Y^0_q + \sum_{j = 1}^k \mathbb{I}[Q^{g^2_j} \ge q],
        \]
Suppose $|A^t_1 \setminus G^{welf}| = k$ and $|A^t_2 \setminus G^{welf}| = \ell$. These random variables are useful because to show $\envy^t_{1, 2} \le c$, by \Cref{lem:halls}, it suffices to show that $\forall q \in [0, 1], X^k_q \ge Y^\ell_q + c$. Indeed, for any $g \in A^t_2$, let $q = \min_{g' \in A^t_1 \cup A^t_2: V^{g'}_1 \ge V^g_1} V^{g'}_1$. Then $X^k_q$ and $Y^\ell_q$ exactly count the number of items agent $1$ values at least as much as $g$ in $A^t_1$ and $A^t_2$, respectively.

To prove it is true for all time steps $t$ handled by the lemma condition, it suffices to show $\forall q \in [0, 1], X^k_q + c \ge Y^\ell_q$ for all $k \le (n - 1) \cdot \ceil{\log T \sqrt{T}}$ and $\ell \le k - \ceil{\log T \sqrt{T}}$. In fact, since both $X^k_q$ and $Y^\ell_q$ are nondecreasing in $\ell$ and $k$, it suffices to prove it for $k = \ell + \ceil{\log T \sqrt{T}}$. More concisely, our goal is to show
        \[
            \Pr[\forall q \in [0, 1], \forall \ell \in [(n - 2) \cdot \ceil{\log T \sqrt{T}}], \,  X_q^{\ell + \ceil{\log T \sqrt{T}}} + c \ge Y^\ell_q] \ge 1  - O(T^{-c/2}).
        \]
        Equivalently, we show
        \[
            \Pr[\exists q \in [0, 1], \exists \ell \in [(n - 2) \cdot \ceil{\log T \sqrt{T}}], \,  Y^\ell_q - X_q^{\ell + \ceil{\log T \sqrt{T}}}  > c ] \le  O(T^{-c/2}).
        \]
        To this end, we partition the unit interval into four subintervals. For each subinterval $[q_1, q_2]$, we show
        \begin{equation}\label{ineq:subinterval}
            \Pr[\exists q \in [q_1, q_2], \exists \ell \in [(n - 2) \cdot \ceil{\log T \sqrt{T}}], \,  Y^\ell_q - X_q^{\ell + \ceil{\log T \sqrt{T}}}  > c  ] \le  O(T^{-c/2}).
        \end{equation}
        and then apply a union bound over these four bounds to extend it to the entire interval. Each subinterval requires a different proof strategy. Note that by monotonicity of these variables showing $Y^\ell_{q_1} - X_{q_2}^k  > c $ implies $Y^{\ell'}_q - X_{q}^{k'}  > c $ for all $q \in [q_1, q_2]$, $\ell' \ge \ell$ and $k' \le k$.

        Before analyzing each subinterval, we introduce notation and derive bounds that will be useful throughout.


        For each item $g \in G^{welf}$, define
\[
    Z^g_i = Q^g_1 \cdot \mathbb{I}[I^g_i = 1] - \mathbb{I}[I^g_i = 0].
\]
That is, $Z^g_i$ equals $Q^g_1$ if $I^g_i = 1$ and $-1$ otherwise. These random variables give an alternate way to more directly compute $X^0_q$ and $Y^0_q$ as :
\[
    X^0_q = \sum_{g \in G^{welf}} \mathbb{I}[Z^g_1 \ge q] \text{ and } Y^0_q = \sum_{g \in G^{welf}} \mathbb{I}[Z^g_2 \ge q].
\]

Finally, it will be helpful to understand the distribution of each $Z^g_i$. Fix a good $g$. Define the CDFs $F^1$ and $F^2$ of $Z^g_1$ and $Z^g_2$, respectively.
For each $i$, by symmetry, $Z^g_i = -1$ with probability $1 - 1/n$, as each agent only receives an item during quantile maximization with probability $1/n$. With remaining probability, it matches the distribution of $Q^g_1 \mid I^g_i = 1$, the value of $Q^g_1$ conditioned on good $g$ going to agent $i$.
        
        For $i = 1$, the conditional $Q^g_1 \mid I^g_1 = 1$ follows a $\text{Beta}[n, 1]$ distribution, as it is the distribution of a uniform distribution conditional on it being the maximum of $n$ draws. In particular, the conditional CDF is $x^n$. Hence, for $x \in [0, 1]$,
\[
    F^1(x) = \frac{n- 1 + x^n}{n}.
\]
        
        For $i=2$, the distribution $Q^g_1 \mid I^g_2 = 1$ is equivalent a $\mathcal{U}[0, 1]$ conditioned on it  \emph{not} being the maximum of $n$ draws. Note that not being the maximum of $n$ draws occurs with with probability $(n - 1)/ n$. Thus, whatever this distribution $\mathcal{D}$ is, it must satisfy $$1/n \cdot \text{Beta}[n, 1] + (n - 1)/n \cdot \mathcal{D} = \mathcal{U}[0, 1]$$
        in the sense that an equivalent way of sampling from $\mathcal{U}[0, 1]$ is, with probability $1/n$ sample from $\text{Beta}[n, 1]$, and with remaining probability $(n-1)/n$ sample from $\mathcal{D}$. 
        Importantly, we can use this to solve for the CDF: $\frac{n}{n - 1} (x - \frac{x^n}{n})$. Hence, the unconditional CDF is \[F^2(x) = \frac{n - 1}{n} + \frac{1}{n - 1} \left( x - \frac{x^n}{n}\right).\]

        It will also be helpful to obtain more usable bounds on the probability $Z^g_i$ takes on values very close to $1$, i.e., bounds on $1 - F^i(1 - \varepsilon)$ for small values of $\varepsilon$. We have that,
\begin{align*}
    1 - F^1(1 - \varepsilon) &= \frac{1 - (1 - \varepsilon)^n}{n} \ge \frac{1 - \frac{1}{1 + \varepsilon \cdot n}}{n}\\
    &= \frac{\frac{\varepsilon \cdot n}{1 + \varepsilon \cdot n}}{n}\\
    &= \frac{\varepsilon}{1 + \varepsilon \cdot n}.
\end{align*}
Hence, for $\varepsilon \le 1/n$,
\[
    1 - F^1(1 - \varepsilon) \ge \frac{\varepsilon}{2}.
\]
For $F^2$, a necessary condition for $Z^g_2 \ge 1 - \varepsilon$ is $Q^g_1 \ge 1 - \varepsilon$ and $I^g_2 = 1$. The latter implies that $Q^g_2 \ge Q^g_1$, and thus, $Q^g_2 \ge 1 - \varepsilon$ as well. The probability that both $Q^g_1 \ge 1 - \varepsilon$ and $Q^g_2 \ge 1- \varepsilon$ is $\varepsilon^2$. Hence,
\[
    1 - F^2(1 - \varepsilon) \le \varepsilon^2.
\]

Finally, we will upper bound the probability that $Z^g_1$ is small (but not $-1$). More specifically, $\Pr[Z^g_1 \in [0, \varepsilon]] \le \varepsilon^n$. Indeed, a necessary condition for this to occur is that $Q^g_1 \le \varepsilon$ and $Q^g_i \le Q^g_1$ for all $i \ne 1$. This implies $Q^g_i \le \varepsilon$ must hold for all $i$. This only occurs with probability $\varepsilon^n$. For our purposes, it will be sufficient to use the weaker bound
\begin{equation}\label{ineq:small-eps}
    \Pr[Z^g_1 \in [0, \varepsilon]] \le \varepsilon^2.
\end{equation}

With these facts in hand, we now analyze each subinterval.

\paragraph{Part 1: $\left[0, \frac{8n^2\log T}{\sqrt{T}}\right]$.} Let $q = \frac{8n^2\log T}{\sqrt{T}}$, and fix an arbitrary $\ell \in [(n - 2) \cdot \ceil{\log T \sqrt{T}}]$. We will show that $\Pr[Y^\ell_q - X_q^{\ell + \ceil{\log T \sqrt{T}}}  > c] \le T^{\Omega(\log(T))}$. Union bounding over the $(n - 2) \cot \ceil{\log T \sqrt{T}} + 1$ choices of $\ell$ yields the desired result.

For each good $g \in G^{welf}$, let $W^g \in \{-1, 0, 1\}$ be a random variable such that $W^g = -1$ if $Z^g_1 \ge q$ ($g$ counts toward $X_q^{\ell + \ceil{\log T \sqrt{T}}}$), $1$ if $Z^g_i \ge 0$ ($g$ counts toward $Y^\ell_0$) and $0$ otherwise. Furthermore, for $g \in \{g^1_1, \ldots, g^1_{\ell + \ceil{\log T \sqrt{T}}}\}$, let $W^g \in \{-1, 0\}$ be such that $W^g = -1$ if $Q^g_1 \ge q$ ($g$ counts toward $X_q^{\ell + \ceil{\log T \sqrt{T}}}$). Finally, for $g \in \{g^2_1, \ldots, g^1_{\ell}\}$, let $W^g = 1$ denoting that $g$ counts toward $Y_q^{\ell}$. Importantly, we have $\sum_g W^g = X_q^{\ell + \ceil{\log T \sqrt{T}}} - Y^\ell_q$. We will prove that $\sum_g W^g \le 0$ with high probability using Hoeffding's inequality. 

Let us now consider $\E[\sum_g W^g]$. For $g \in G^1$, $\E[W^g] = - (1 - q)$, because it is $-1$ as long as $Q^g_1 \ge q$. For $g \in G^2$, $W^g = 1$ deterministically, so $\E[W^g] = 1$ as well. For the remaining $g \in G^{welf}$, note that $g$ is given to each of agents $1$ and $2$ with probability $1/n$ each. However, with probability at most $q^2$, $g$ is given to agent $1$ with $Q^g_1 \le q$ by \eqref{ineq:small-eps}. Hence, $\mathbb{E}[W^g] \le q^2$. Furthermore, there are at most $T^{(1)} \le T$ such goods $g$

Putting these together, we have that
\begin{align*}
    \E\left[\sum_g W^g\right]
    &\le -(\ell + \ceil{\log T \sqrt{T}}) \cdot (1 - q) + \ell + T \cdot q^2\\
    &= -\ceil{\log T \sqrt{T}}(1 - q) + \ell \cdot q + T q^2.\\
    &\le  -\ceil{\log T \sqrt{T}}(1 - q) + (n - 2) \cdot  \ceil{\log T \sqrt{T}} \cdot q + T q^2\\
    &\le -\ceil{\log T \sqrt{T}}(1 - (n - 2) \cdot q) + T q^2\\ 
    &= -\ceil{\log T \sqrt{T}}(1 - (n - 2) \cdot q) + 8n^2 \log T \sqrt{T} \cdot q\\
    &\le -\ceil{\log T \sqrt{T}}(1 - (n - 2) \cdot q) + 8n^2 \ceil{\log T \sqrt{T}} \cdot q\\
    &\le -\ceil{\log T \sqrt{T}}\left(1 - (8n^2 + n - 2) \cdot \frac{8n^2 \log T}{\sqrt{T}}\right)\\
    &\le -\log T \sqrt{T} / 2
\end{align*}
where the last transition follows under the assumption that $T$ is sufficiently large such that $\sqrt{T} / \log T \ge (8n^2 + n - 2) \cdot 8n^2  \cdot 2$. 

Now, $\sum_g W^g$ is the sum of $\le T$ independent random variables bounded by $[-1, 1]$. Thus, Hoeffding's inequality ensures that a deviation of $\log T \sqrt{T} / 2$ occurs with probability at most \[\exp\left(-\frac{2(\log T \sqrt{T} / 2)^2}{4 \cdot T}\right) = \exp(\log^2(T)/8) = T^{\log T / 8},\]
as needed.
        
\paragraph{Part 2: $\left[\frac{8n^2\log T}{\sqrt{T}}, 1 - \frac{8n^2\log T}{\sqrt{T}} \right]$.} For this interval, we will use the DKW inequality to show that, with high probability, the collection of $\{Z^g_1\}_{g \in G^{welf}}$ and $\{Z^g_2\}_{g \in G^{welf}}$ approximately match their true distributions. Let $\hat{F}^1$ and $\hat{F}^2$ be the empirical CDFs of $\{Z^g_1\}_{g \in G^{welf}}$ and $\{Z^g_2\}_{g \in G^{welf}}$. Let $\varepsilon = \frac{n\log T}{\sqrt{T}}$. The DKW inequality~\cite{dvoretzky1956asymptotic} states that
        \[
            \Pr[\sup_x|\hat{F}^i(x) - F^i(x)| > \varepsilon] \le 2\exp(-2T^{(1)} \varepsilon^2).
        \]
        Since $T^{(1)} \ge T/2$, this expands to
        \[
            2\exp(-2T^{(1)} \varepsilon^2) \le 2\exp(-T \cdot \log^2 T \cdot n^2 / T) = 2\exp(-\log^2 T n^2) = 2T^{-n^2 \log T} \in O(T^{-c/2}).
        \]
        Furthermore, we claim that conditioned on this event holding for both $i \in \{1, 2\}$, along with the assumption on the sampled quantiles that each $Q^g_i$ is distinct, \eqref{ineq:subinterval} holds. Indeed, fix a $q \in \left[\frac{8n^2\log T}{\sqrt{T}}, 1 - \frac{8n^2\log T}{\sqrt{T}} \right]$. We have that
        \begin{align*}
            X^0_q - Y^{(n - 2) \ceil{\log T \sqrt{T}}}_q
            &\ge X^0_q - Y^0_q - (n - 2) \ceil{\log T \sqrt{T}}\\
            &\ge  T^{(1)} \cdot (1 - \hat{F}^1(q)) - T^{(1)} \cdot ( 1 - \hat{F}^2(q)) -1  - (n - 2) \ceil{\log T \sqrt{T}}\\
            &= T^{(1)} \cdot (\hat{F}^2(q) - \hat{F}^1(q)) - 1 - (n - 2) \ceil{\log T \sqrt{T}}\\
            &> T^{(1)} \cdot \left(F^2(q) - F^1(q) - 2 \varepsilon\right) - 1 - (n - 2) \ceil{\log T \sqrt{T}}\\
            &= T^{(1)} \cdot \left(\frac{q}{n - 1} - \frac{q^n}{n(n - 1)} - \frac{q^n}{n} - 2 \varepsilon\right) - 1 - (n - 2) \ceil{\log T \sqrt{T}}\\
            &= T^{(1)} \cdot \left(\frac{q - q^n}{n - 1} - 2 \varepsilon\right) - 1 - (n - 2) \ceil{\log T \sqrt{T}}\\
            &=  T^{(1)} \cdot \left(\frac{q(1 - q^{n-1})}{n - 1} - 2 \varepsilon\right) - 1 - (n - 2) \ceil{\log T \sqrt{T}}\\
            &\ge  T^{(1)} \cdot \left(\frac{q(1 - q)}{n - 1} - 2 \varepsilon\right) - 1 - (n - 2) \ceil{\log T \sqrt{T}}\\
            &\ge  T^{(1)} \cdot \left(\frac{\min(q, 1 - q)}{2(n - 1)} - 2 \varepsilon\right) - 1 - (n - 2) \ceil{\log T \sqrt{T}}\\
            &\ge  T / 2 \cdot \left(\frac{8n^2 \log T}{2(n - 1) \cdot \sqrt{T}} - \frac{2n \log T}{\sqrt{T}} \right) - 1 - (n - 2)\ceil{\log T \sqrt{T}}\\
            &\ge  T / 2 \cdot \left(\frac{4n \log T}{\sqrt{T}} - \frac{2n \log T}{\sqrt{T}} \right) - 1 - (n - 2)\ceil{\log T \sqrt{T}}\\
            &\ge  2n \log T \sqrt{T} - 1 - (n - 2)\ceil{\log T \sqrt{T}}\\
            &\ge -1 \ge -c,
        \end{align*}
        as needed.

    \paragraph{Part 3: $\left[1 - \frac{ 8n^2 \log T}{\sqrt{T}}, 1 - \frac{400n^4 \log^2T}{T} \right]$} Let $q_1 = 1 - \frac{ 8n^2 \log T}{\sqrt{T}}$ and $q_2 =  1 - \frac{400n^2 \log^2T}{T}$. We will show that with high probability $Y^{(n - 2) \ceil{\log T \sqrt{T}}}_{q_1} \le 90n^2\log^2 T$, and, similarly, $X^0_{q_2} \ge 90n^2\log^2 T$ with high probability. Together, these imply that $X^0_{q_1} - Y^{(n - 2) \ceil{\log T \sqrt{T}}}_{q_2} < c$ occurs with probability $O(T^{-c/2})$.

    To that end, let us first consider $Y^{(n - 2) \ceil{\log T \sqrt{T}}}_{q_1}$. We have that
\begin{align*}
    \mathbb{E}[Y^{(n - 2) \ceil{\log T \sqrt{T}}}_{q_2}] 
    &\le T^{(1)} \cdot (1 - q_1)^2 + (n - 2) \ceil{\log T \sqrt{T}} \cdot (1 - q_1)\\
    &\le 64n^4 \log^2 T + (n - 2) \cdot 8 n^2 \log T \ceil{\log T \sqrt{T}}\\
    &\le 64n^2 \log^4 T + 16n^4 \log^2 T \sqrt{T} = 80n^4 \log^2 T.
\end{align*}
Thus, a standard Chernoff bound implies that \[\Pr[Y^{(n - 2) \ceil{\log T \sqrt{T}}}_{q_2} \ge 90n^4 \log^2 T] \le \exp(- 80n^4\log^2 T \cdot (1/8)^2 / (2 + 1/8)) \le T^{n^4 \log T / 2} \in O(T^{-c/2}).\]
Next, let us consider $X^0_{q_2}$. We have that
\begin{align*}
    \mathbb{E}[X^0_{q_2}]
    &\ge T^{(1)} \cdot  (1 -q_2)/ 2\\
    &\ge T \cdot (1 - q_2) / 4\\
    & \ge 100n^2 \log^2 T. 
\end{align*}
Hence, a standard Chernoff bound implies that
\[
    \Pr[X^0_{q_2} \le 90 n^4 \log^2T] \le \exp(- 100n^4\log^2 T(1/10^2) / 2) = T^{n^4 \log T / 2} \in (T^{-c/2}).
\]

\paragraph{Part 4: $\left[1 - \frac{400n^2 \log^2T}{T}, 1 \right]$}
Let $q = 1 - \frac{400n^2 \log^2T}{T}$. We will show that with high probability $Y^{(n - 2) \ceil{\log T \sqrt{T}}}_{q} \le c$. Note that $Y^{(n - 2) \ceil{\log T \sqrt{T}}}_{q}$ is integer valued, so it is sufficient to upper bound the probability it is above $c + 1$. To that end,
\begin{align*}
    \mathbb{E}[Y^{(n - 2) \ceil{\log T \sqrt{T}}}_{q}] 
    &\le T^{(1)} \cdot (1 - q_1)^2 + L \cdot (1 - q_1)\\
    &\le 160000n^8\log^4 T / T +   400 n^4 \log T \ceil{\log T \sqrt{T}} / T\\
    &\le 160000n^8\log^4 T / T +   800 n^4 \log^2 T / \sqrt{T}\\
\end{align*}
Note that this value is $O(\log^2 T / \sqrt{T})$. We will use the Chernoff bound which states that
\[
		\Pr[W \ge (1 + \delta) \mu] \le \left(\frac{e^\delta}{(1 + \delta)^{1 + \delta}} \right)^\mu \leq  \left(\frac{e}{(1 + \delta)} \right)^{(1 + \delta)\mu}
\]
for a random variable $W$ with mean $\mu$. In this case, if we set $\delta$ such that $1 + \delta = (c + 1)/\mathbb{E}[Y^{(n - 2) \ceil{\log T \sqrt{T}}}_{q}]$, note that $\delta \in \Omega(\sqrt{T} / \log^2 T)$. Thus this bound implies an overall probability bound of \[O\left(\left(\frac{\log^2 T}{\sqrt{T}}\right)^{c + 1}\right) \in O(T^{-c/2}),\] as needed. \qed

\end{document}